\def\@seccntformat#1{\@ifundefined{#1@cntformat}%
   {\csname the#1\endcsname\quad}  
   {\csname #1@cntformat\endcsname}
}
\let\oldappendix\appendix 
\renewcommand\appendix{%
    \oldappendix
    \newcommand{\section@cntformat}{\appendixname~\thesection:\,\,}
}
\newcommand{\Lyx}{L\kern-.1667em\lower.25em\hbox{y}\kern-.125emX\spacefactor1000}
\newtheorem{proposition}{Proposition}
\newtheorem{theorem}{Theorem}
\newtheorem{definition}{Definition}
\newcommand{\E}{\mathbb{E}}
\newcommand{\round}{\texttt{round}}
\DeclareMathOperator*{\argmin}{arg\,min}
\begin{document}

\title{
\textbf{A return-diversification approach to portfolio selection}}
\author{Francesco Cesarone$^{1*\dagger}$, Rosella Giacometti$^{2\dagger}$, Manuel L Martino$^{1\dagger}$, Fabio Tardella$^{3\dagger}$\\\\
$^1$ Department of Business Studies, Roma Tre University.\\
$^2$ Department of Management, Bergamo University.\\
$^3$ Department of Information Engineering, University of Florence.\\
$*$ Corresponding author(s). E-mail(s): francesco.cesarone@uniroma3.it;\\
Contributing authors: rosella.giacometti@unibg.it; manuelluis.martino@uniroma3.it;\\
fabio.tardella@unifi.it;\\
$\dagger$ These authors contributed equally to this work.}
\date{\today}
\maketitle

\begin{abstract}
In this paper, we propose a general bi-objective model for portfolio selection, aiming to maximize both a diversification measure and the portfolio expected return.
Within this general framework, we focus on maximizing a diversification measure recently proposed by Choueifaty and Coignard for the case of volatility as a risk measure.
We first show that the maximum diversification approach is actually equivalent to the Risk Parity approach using volatility under the assumption of equicorrelated assets.
Then, we extend the maximum diversification approach formulated for general risk measures.
Finally, we provide explicit formulations of our bi-objective model for different risk measures, such as volatility, Mean Absolute Deviation, Conditional Value-at-Risk, and Expectiles, and we present extensive out-of-sample performance results for the portfolios obtained with our model.
The empirical analysis, based on five real-world data sets, shows that the return-diversification approach provides portfolios that tend to outperform the strategies based only on a diversification method or on the classical risk-return approach.

\medskip
\noindent
\textbf{Keywords}: Risk Diversification, Most Diversified Portfolio, Bi-objective Optimization, Risk Parity, Portfolio Selection, Expectiles.

\end{abstract}

\section{Introduction}\label{sec:Intro}

The importance of diversification in portfolio selection has been recently pointed out by several authors \citep[see, e.g.,][]{Meucci2009b,Fabozzi2010,Roncalli2014,lhabitant2017portfolio,cesarone2020optimization}.
A qualitative definition of diversification is very clear to portfolio managers:	
a portfolio is well-diversified if it is not heavily exposed to individual shocks \citep{Meucci2009b}.
In order to formalize a quantitative approach, suitable portfolio diversification measures should be introduced.
One of the first attempts dates back to the Babilonian Talmud \citep[see, e.g.,][]{duchin2009markowitz}, where the uniform (or naive) capital diversification is recommended.
Some early studies on the use of diversification in finance can be found in the works of \cite{leavens1945diversification} and \cite{Bernoulli1954}.
However, only recently more refined notions of portfolio diversification have been proposed and analyzed.
Among them, the most popular approaches, both from a theoretical and from a practical viewpoint, seem to be the Risk Parity model, introduced by \cite{Maillard2010} \citep[see also][]{Roncalli2014},
and the Most Diversified Portfolio, proposed in \cite{Choueifaty2008} and \cite{Choueifaty2013} for the case of the volatility risk measure.
These two approaches provide different diversified portfolio in general.
However, we show here that, in the case of volatility and under the assumption of equicorrelated assets, the Most Diversified Portfolios and the Risk Parity portfolios actually coincide.
This result establishes a first bridge between these hitherto uncorrelated notions of portfolio diversification.
%
%

\noindent
We recall that other diversification approaches have been proposed.
Among them we find entropy \citep[see, e.g.,][]{usta2011mean,carmichael2018rao,pola2016entropy,bera2008optimal,deng2022improved}, extensions of Risk Parity
with general risk measures \citep[see, e.g.,][]{Ararat2021,Bai2015,cesarone2017equal,cesarone2017minimum,Chaves2012,Qian2005,Qian2011,roncalli2014introducing,bellini2021risk}, the Herfindahl index \citep[see, e.g.,][]{coqueret2015diversified,dhingra2023norm}, and the Minimum-Torsion Bets \citep[][]{meucci2015risk}.
For a review of the main contributions on diversification strategies and classes of diversification measures, we refer to \cite{koumou2020diversification}.

\noindent
Typically, in the diversification approach the classical risk-return optimization model is replaced by the quest for a portfolio that maximizes some measure of diversification.
However, the maximization of portfolio diversification could be included as an additional criterion in a Risk-Gain framework as already done by some authors in specific cases \citep[see, e.g.,][]{roncalli2014introducing,usta2011mean,bera2008optimal,deng2022improved}.

\noindent
Other studies have been devoted to bi-objective portfolio selection models where risk is minimized and a diversification measure is maximized \citep[see, e.g.,][]{bera2008optimal,coqueret2015diversified}.
However, we observe that the goals of risk and diversification are strictly related and hence diversification should be used as a substitute of risk rather than a complement.
Indeed, very few attempts have been made to include gain maximization into
the proposed diversification models.
Here we propose a general bi-objective model for portfolio selection where we aim at maximizing both a diversification measure and the expected return.
We first analyze some general properties of this bi-objective model and then we specialize the analysis to the case of a general diversification measure that extends the one proposed in \cite{Choueifaty2013} for the case of volatility.
We provide explicit formulations of our model for the cases of volatility, Mean Absolute Deviation (MAD), Conditional Value-at-Risk (CVaR), and Expectiles risk measures and we present extensive out-of-sample performance results for the portfolios obtained with our models.
The empirical analysis, based on five real-world data sets, shows that the return-diversification approach provides portfolios that tend to outperform the strategies based only on a diversification method or on the classical risk-return approach.

The paper is organized as follows. 
In Section \ref{sec:TheoFram}, we recall some definitions and properties of risk measures and we propose a general return-diversification model.
In Section \ref{sec:DivRatio}, we discuss and extend the maximum diversification ratio approach to the case of a general risk measure.
In Section \ref{sec:DivRatioVsRP}, we recall the Risk Parity approach and we prove its equivalence with the maximum diversification ratio approach in the case of volatility under the assumption of constant correlation.
Section \ref{sec:DR_RiskMeasures} provides explicit linear or quadratic formulations for the problem of maximizing both return and diversification in the case of some specific risk measures typically used for asset allocation.
In Section \ref{sec:EmpAnalysis}, we provide an extensive empirical analysis based on five real-world data sets, which highlights promising out-of-sample performance results of return-diversification portfolio strategies. 
Some concluding remarks are given in Section \ref{sec:Conclusions} together with possible ideas for future research.

\section{A Return-Diversification Model}\label{sec:TheoFram}

We consider a general probability space $\{ \Omega, \mathcal{F}, \mathbb{P} \}$ where all random variables are defined.
We denote by $L^{p}:=L^{p}(\Omega, \mathcal{F}, \mathbb{P})$ the set of all random variables $X$ with $\mathbb{E}[ |X|^{p} ] < +\infty$,
where $p \in [1, +\infty)$.
For our purposes, we are concerned with an investment universe with $n$ tradable assets, and we denote by $R_{i}$ the random variable representing the return of the asset $i$.
For a nonnegative vector $x$, the random portfolio return is defined as $\displaystyle R_P(x)=\sum\limits_{i=1}^n x_i R_{i}$ where $\displaystyle\sum\limits_{i=1}^{n} x_{i}=1$ and $x_i$ is the percentage of capital invested in the asset $i$.
Note that $R_P(x)$ belongs to the convex cone $\mathcal{P}=\{\sum_{i=1}^{n}x_{i}R_{i} : x \in \mathbb{R}_{+}^{n}\}$.
To each random variable $R$ in $\mathcal{P}$ we associate real values $\phi(R),\,\psi(R)$, and $g(R)$, which may be interpreted as diversification, risk, and gain measures respectively.
The diversification measure $\phi(R)$ is often expressed as a function of the risk measure $\psi(R)$.
We recall some general properties of risk measures in terms of the mapping $\rho : \mathbb{R}_{+}^{n} \rightarrow \mathbb{R}$ given by $\rho (x)=\psi(R_{P}(x))$, which can also be viewed as a risk measure for the portfolio represented by $x$. The risk measure $\rho$ is called:
\begin{itemize}
\item[(i)] positive iff $\qquad \rho(x)>0, \qquad \forall x \in \mathbb{R}_{+}^{n} \backslash \{0\}$;
\item[(ii)] convex iff $\qquad\rho(\lambda x+(1-\lambda)y)\le\lambda\rho(x)+(1-\lambda)\rho(y) \qquad\forall x,\,y\in\mathbb{R}_{+}^{n},\,\, \lambda\in[0,1]$;
\item[(iii)] positively homogeneous (of degree 1) iff $\qquad\rho(\lambda x)=\lambda\rho (x), \qquad\forall x\in\mathbb{R}_{+}^{n},\,\,\forall \lambda\in\mathbb{R}_{+}$.
\end{itemize}
We recall the following classical definitions.
\begin{definition}[\cite{ArtDelEbeHea1999}] \label{def:coh}
    A \emph{coherent risk measure} is a risk measure $\rho$ that satisfies Conditions (ii) and (iii), which imply subadditivity, namely $\forall X, Y \in L^{p}$ $\rho(X+Y) \leq \rho(X) + \rho(Y)$. Furthermore, it satisfies monotonicity, namely, $\forall X, Y \in L^{p}, X \leq Y$ implies $\rho(X) \geq \rho(Y)$, and translativity, namely $\forall X \in L^{p}$ and $\alpha \in \mathbb{R}$ $\rho(X+\alpha) = \rho(X) - \alpha$.
\end{definition}
\begin{definition}[\cite{Rockafellar2006}]\label{def:Vol}
    A \emph{deviation risk measure} is a risk measure $\rho$ that satisfies Conditions (ii) and (iii), which imply subadditivity, (i), translational invariance, i.e., $\forall X \in L^{p}$ and $\alpha \in \mathbb{R}$ $\rho(X+\alpha) = \rho(X)$, and normalization, namely $\rho(0)=0$.
\end{definition}
%
%
%
%
%
%
On this basis, we propose to address the problem of maximizing some diversification measure $\phi(R_P(x))$ and, simultaneously, maximizing the portfolio expected return $g(x)=\mu_{P}(x)=\sum\limits_{i=1}^{n}\mu_{i}x_{i}$, where $\mu_{i}$ is the expected return of the asset $i$:
\begin{equation}\label{eq:biobjectiveGeneral}
		\begin{array}{lll}
            \max\limits_{x} & \phi(R_P(x)) & \\[5pt]
            \max\limits_{x} & \mu^{T} x & \\
            \mbox{s.t.} & & \\[5pt]
            & \displaystyle \sum_{i=1}^{n}x_{i}=1 &\\[5pt]
            & x_{i}\ge 0 & i=1,\dots,n\\[5pt]
		\end{array}
\end{equation}
%

\subsection{Maximum Diversification for General Risk Measures}\label{sec:DivRatio}

We now recall, extend, and compare two popular diversification approaches (namely the Maximum Diversification ratio and Risk Parity) associated with a risk measure that satisfies Conditions (i), (ii), and (iii).
\begin{definition}[Diversification Ratio]\label{def:DR}
    For a risk measure $\rho$ that satisfies conditions (i), (ii), (iii), we define the \emph{Diversification Ratio (DR)} as follows:
    \begin{equation*}
        \displaystyle DR(x)=\frac{\sum\limits_{i=1}^{n} x_i \rho(R_i)}{\rho(\sum\limits_{i=1}^{n} x_i R_i)}\,.
    \end{equation*}
\end{definition}
\noindent
Note that, since $\rho$ is convex, we have
\begin{equation}\label{eq:reldiff}
\rho(R_P(x))=\rho(\sum\limits_{i=1}^n x_i R_{i}) \leq \sum\limits_{i=1}^{n}x_{i} \rho(R_{i})=
\sum\limits_{i=1}^{n}x_{i} \rho_{i} \,
\end{equation}
where $\rho_{i}=\rho(R_{i})$. Hence, by construction, $DR(x)\ge 1$. Maximizing DR amounts to maximizing the \emph{relative} distance between the portfolio risk in the worst-case scenario, i.e., when $\rho$ is additive (right-hand side in \eqref{eq:reldiff}), and the generic portfolio risk (left-hand side in \eqref{eq:reldiff}). 

\noindent
Furthermore, we observe that for a risk measure $\rho$ satisfying (i), (ii) and (iii), the greatest Diversification Ratio based on at most $m$ assets is given by:
\begin{equation}\label{eq:DivRatio}
  DR_{m}(\mathcal{P}) = \sup \left\{\frac{\sum_{i=1}^{m} \rho(\alpha_i X_i)}{\rho\left( \sum_{i=1}^{m} \alpha_i X_i \right)}: \,
  X_1, \cdots, X_m \in \mathcal{P}, \, \sum_{i=1}^{m} \alpha_i = 1, \, \alpha_i \geq 0  \right\}\,.
\end{equation}
Here, $DR_{m}(\mathcal{P})$ represents the maximum \emph{relative} diversification risk benefit we can obtain at a random variable $X \in \mathcal{P}$ with respect to a decomposition of $X$ as a convex combination of at most $m$ random variables $X_i \in \mathcal{P}$.
By construction, $DR_{m}(\mathcal{P}) \geq 1$ and $DR_{m}(\mathcal{P}) \le DR_{m+1}(\mathcal{P})$.
We now show that the maximum value of $DR_{m}(\mathcal{P})$ for all $m$ is actually attained when $m=n$ by using the extreme rays $R_{i}$ of the cone $\mathcal{P}$.
%
%
%
\begin{theorem}\label{theo:DR}
  If $\rho$ satisfies (i)+(ii)+(iii), then
\begin{equation}\label{th:DivRatio}
 \sup_{m} DR_{m}(\mathcal{P}) = DR_{n}(\mathcal{P}) = \max \left\{\frac{\sum\limits_{i=1}^{n} \lambda_i \rho(R_i)}{\rho\left( \sum\limits_{i=1}^{n} \lambda_i R_i \right)}: \, \sum_{i=1}^{n} \lambda_i = 1, \, \lambda_i \geq 0  \right\}
\end{equation}
\end{theorem}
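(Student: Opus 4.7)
The plan is to establish two chains of inequalities that force all three quantities in \eqref{th:DivRatio} to coincide: for every $m$,
$$DR_m(\mathcal{P}) \;\leq\; D^{*} \;\leq\; DR_n(\mathcal{P}) \;\leq\; \sup_{m} DR_m(\mathcal{P}),$$
where $D^{*}$ denotes the candidate right-hand side. Since $DR_m(\mathcal{P}) \leq DR_{m+1}(\mathcal{P})$ was already noted, pinning down $DR_n(\mathcal{P}) = D^{*}$ collapses everything.

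The core step is the upper bound $DR_m(\mathcal{P}) \leq D^{*}$. Fix a competitor $(X_1,\ldots,X_m;\alpha_1,\ldots,\alpha_m)$ in \eqref{eq:DivRatio}. Since each $X_j$ lies in the cone $\mathcal{P}$, write $X_j = \sum_{i=1}^{n} x_i^{(j)} R_i$ with $x_i^{(j)} \geq 0$. By positive homogeneity (iii) the numerator is $\sum_j \alpha_j \rho(X_j)$, and by subadditivity together with (iii) applied term by term,
$$\rho(\alpha_j X_j) \;=\; \alpha_j \rho\!\left(\textstyle\sum_i x_i^{(j)} R_i\right) \;\leq\; \sum_{i=1}^n \alpha_j\, x_i^{(j)}\, \rho(R_i).$$
Summing over $j$ and setting $\lambda_i := \sum_{j=1}^m \alpha_j x_i^{(j)} \geq 0$, the numerator is at most $\sum_i \lambda_i \rho(R_i)$ while the denominator equals $\rho(\sum_i \lambda_i R_i)$ by linearity of the representation. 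Using (iii) once more, the ratio $\sum_i \lambda_i \rho(R_i) / \rho(\sum_i \lambda_i R_i)$ is invariant under the rescaling $\lambda \mapsto \lambda/\!\sum_k \lambda_k$, so I may normalize to $\sum_i \lambda_i = 1$. The degenerate case $\lambda = 0$ forces $\alpha_j X_j = 0$ for all $j$, in which case the ratio is $0/0$ and can be discarded from the supremum. This proves $DR_m(\mathcal{P}) \leq D^{*}$ for every $m$.

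For the reverse direction $D^{*} \leq DR_n(\mathcal{P})$, I take $m = n$ in \eqref{eq:DivRatio} and choose $X_i = R_i$ together with $\alpha_i = \lambda_i^{*}$, where $\lambda^{*}$ is an optimizer of $D^{*}$; this is a feasible point in the definition of $DR_n(\mathcal{P})$ and realizes the value $D^{*}$. That the supremum defining $D^{*}$ is actually attained (so ``$\max$'' in \eqref{th:DivRatio} is justified) follows from compactness of the unit simplex, strict positivity of the denominator there by (i) applied to the nonzero element $\sum_i \lambda_i R_i$, and continuity of the convex function $\rho$ on $\mathbb{R}_{+}^{n}$, which holds because a finite convex function on a finite-dimensional convex set is continuous on the relative interior and, by polyhedral structure, extends continuously to the boundary rays $R_i$.

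The main obstacle I anticipate is purely combinatorial bookkeeping: one must carefully unwind the double-indexed representation $X_j = \sum_i x_i^{(j)} R_i$ into a single nonneg\-ative combination $\sum_i \lambda_i R_i$, applying subadditivity on each $\rho(\alpha_j X_j)$ \emph{before} summing, and then recognize that the resulting $\lambda$ provides precisely the correct competitor for $D^{*}$ up to a scalar. Once the convex-cone structure is written out, no further analytic difficulty arises, and the proof is a direct consequence of hypotheses (i)–(iii).
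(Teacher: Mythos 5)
Your proposal is correct and takes essentially the same route as the paper's proof: both decompose each $X_j$ over the generators $R_i$ of the cone, use subadditivity and positive homogeneity to bound the numerator by $\sum_i \lambda_i \rho(R_i)$ with $\lambda_i = \sum_j \alpha_j x_i^{(j)}$, and compare the resulting ratio with the maximum over the simplex. The only differences are cosmetic — you normalize $\lambda$ afterwards via scale-invariance of the ratio where the paper pre-normalizes the coefficients $\beta_{ij}$, and you additionally spell out the degenerate case, the reverse inequality, and the attainment of the maximum, all of which the paper leaves implicit.
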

\begin{proof}
Given $X_{1},\dots,X_{m}\in\mathcal{P}$, there exist $\beta_{ij}\in\mathbb{R}_{+}$, with $i=1,\dots,m$ and $j=1,\dots,n$, such that $\sum_{j=1}^{n}\beta_{ij}=1$ and $X_{i}=\sum_{j=1}^{n}\beta_{ij}R_{j}$ for $i=1,\dots,m$. 
Furthermore, given $\alpha_{i},\dots,\alpha_{m}$ with $\sum_{i=1}^{m}\alpha_{i}=1$ and setting $\bar\lambda_{j}=\sum\limits_{i=1}^{m}\alpha_{i}\beta_{ij}$, we have
    \begin{equation*}
        \frac{\sum\limits_{i=1}^{m}\rho(\alpha_{i}X_{i})}{\rho(\sum\limits_{i=1}^{m}\alpha_{i}X_{i})}\le\frac{\sum\limits_{j=1}^{n}\sum\limits_{i=1}^{m}\alpha_{i}\beta_{ij}\rho(R_{j})}{\rho(\sum\limits_{j=1}^{n}\sum\limits_{i=1}^{m}\alpha_{i}\beta_{ij}R_{j})}=\frac{\sum\limits_{j=1}^{n}\bar\lambda_{j}\rho(R_{j})}{\rho(\sum\limits_{j=1}^{n}\bar\lambda_{j}R_{j})}\le\max\left\{\frac{\sum\limits_{i=1}^{n}\lambda_{i}\rho(R_{i})}{\rho(\sum\limits_{i=1}^{n}\lambda_{i}R_{i})}:\sum\limits_{i=1}^{n}\lambda_{i}=1,\,\lambda_{i}\ge 0\right\}\,,
    \end{equation*}
    where the first inequality follows from assumptions (ii) and (iii) which imply $\rho(\alpha_{i}X_{i})\le\sum_{j=1}^{n}\alpha_{i}\beta_{ij}\rho(R_{j})$ for all $i$.
\end{proof}

\noindent
In financial terms, Theorem \eqref{theo:DR} implies that the greatest diversification benefit that can be obtained by decomposing a portfolio $X$ as a convex combination of any number of portfolios $X_{1},\dots,X_{m}$, is actually attained by decomposing $X$ as a convex combination of the $n$ assets $R_{1},\dots,R_{n}$, and hence by maximizing the diversification ratio $DR(x)$ for $x\in\Delta$, where $\Delta=\{x\in\mathbb{R}^{n}:\sum\limits_{i=1}^{n}x_{i}=1,\,x\ge 0\}$.
Thus, one can consider the maximum diversification portfolio $x^{MD}$ as the one maximizing $DR(x)$ over $\Delta$.


\subsection{Maximum Diversification and Risk Parity}\label{sec:DivRatioVsRP}

We now recall the diversification approach based on the general notion of equal risk contribution from each asset which has been originally proposed by \cite{Qian2005,Qian2011,Maillard2010} for the case of the volatility risk measure.

\noindent
For a differentiable risk measure $\rho(x)=\psi(R_{P}(x))$ satisfying Conditions (i), (ii), (iii), Euler's theorem guarantees that the total risk can be decomposed into the sum of the total risk contributions of each asset
\begin{equation*}
    \rho({x})=\sum_{i=1}^{n}TRC_{i}(x) \,
\end{equation*}
where the total risk contribution of asset $i$ is
\begin{equation*}
 TRC_{i}(x)= x_i\frac{\partial\rho({x})}{\partial x_i}   
\end{equation*}
The Risk Parity portfolio is the (unique) portfolio $x^{RP}\in\Delta$ that satisfies
\begin{equation*}
TRC_{i}(x) = TRC_{j}(x) \quad \forall i, j\,.
\end{equation*}
Equivalently, the Risk Parity portfolio can be obtained as a solution of the system
\begin{eqnarray*}
     TRC_{i}(x)&=&\lambda^{RP}\,, \qquad \mbox{with} \, \, \, i=1,\dots,n  \\
     x\in&\Delta,&\lambda^{RP}\ge 0
\end{eqnarray*}
By Euler's theorem, $\lambda^{RP}=\displaystyle\frac{\rho(x)}{n}$ is the equal contribution to the risk from each asset.

\noindent
In the general case, the Risk Parity portfolio and the maximum diversification portfolio are different. However, we will now show that they coincide for the case of volatility under the assumption of constant correlation.
This hypothesis is supported by some scholars \citep[see, e.g.,][]{elton1973estimating,elton1976simple,elton2006improved,engle2012dynamic} who suggest to estimate the covariance matrix with the constant correlation model, i.e., $\sigma_{ij}=c \sigma_{i}\sigma_{j}$ and $\sigma_{ii}= \sigma_{i}^2$. 
In the case of the variance risk measure with constant correlation, \cite{Maillard2010} show that the Risk Parity portfolio is given by $\displaystyle x_{i}^{RP}=\frac{\sigma_{i}^{-1}}{\sum\limits_{j=1}^{n} \sigma_{j}^{-1}}$. 
The following result shows that this coincides with the maximum diversification portfolio under the same assumptions.
\begin{theorem}[]
    If $\rho(x)=\sum\limits_{i=1}^{n}\sum\limits_{j=1}^{n}x_{i}x_{j} c_{ij}
  \sigma_i \sigma_j$, with $c_{ii}=1$ and $c_{ij}=k$ for all $i\neq j$, then a maximum diversification portfolio is given by $x_{i}^{MD}=\displaystyle\frac{\sigma_{i}^{-1}}{\sum\limits_{j=1}^{n} \sigma_{j}^{-1}}$. 
\end{theorem}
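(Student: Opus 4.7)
My plan is to reduce the maximization of $DR(x)$ to a classical Cauchy--Schwarz-type optimization. First, I interpret $\rho$ as the volatility (standard deviation), so that $\rho(R_i) = \sigma_i$ and $\rho(x)^2$ equals the quadratic form displayed in the statement; this is the only interpretation consistent with Definition~\ref{def:DR}, since the quadratic form itself is homogeneous of degree $2$ and thus fails condition~(iii). Under the constant-correlation hypothesis, I would substitute $c_{ii}=1$ and $c_{ij}=k$ ($i\neq j$) into the quadratic form and rewrite
\begin{equation*}
\rho(x)^{2}=\sum_{i}x_{i}^{2}\sigma_{i}^{2}+k\sum_{i\neq j}x_{i}x_{j}\sigma_{i}\sigma_{j}=(1-k)\sum_{i=1}^{n}x_{i}^{2}\sigma_{i}^{2}+k\left(\sum_{i=1}^{n}x_{i}\sigma_{i}\right)^{2}.
\end{equation*}

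Next, I would exploit the fact that both the numerator $\sum_{i}x_{i}\sigma_{i}$ and the denominator $\rho(x)$ of $DR(x)$ are positively homogeneous of degree~$1$, so $DR$ is scale-invariant on $\mathbb{R}^{n}_{+}\setminus\{0\}$. Hence the simplex constraint $\sum x_{i}=1$ can be dropped during the optimization and reimposed by normalization at the end. Introducing the change of variables $y_{i}=x_{i}\sigma_{i}\ge 0$, the squared diversification ratio becomes
\begin{equation*}
DR(x)^{2}=\frac{\bigl(\sum_{i}y_{i}\bigr)^{2}}{(1-k)\sum_{i}y_{i}^{2}+k\bigl(\sum_{i}y_{i}\bigr)^{2}}.
\end{equation*}
For a valid (PSD) constant-correlation matrix with $k\in[-1/(n-1),1)$, the coefficient $1-k$ is strictly positive, so by homogeneity I can fix $\sum_{i}y_{i}=1$ and the problem reduces to minimizing $\sum_{i}y_{i}^{2}$ over the standard simplex in $y$.

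The classical Cauchy--Schwarz (or power-mean) inequality then gives $\sum_{i}y_{i}^{2}\ge \tfrac{1}{n}(\sum_{i}y_{i})^{2}$ with equality iff all $y_{i}$ are equal. Translating back, this says $x_{i}\sigma_{i}$ is the same constant for every $i$, i.e.\ $x_{i}\propto \sigma_{i}^{-1}$; imposing $\sum_{i}x_{i}=1$ yields the claimed closed form $x_{i}^{MD}=\sigma_{i}^{-1}/\sum_{j}\sigma_{j}^{-1}$, which by the preceding display coincides with $x^{RP}$. I do not anticipate a serious obstacle here: the main subtlety to flag is justifying the strict positivity of $1-k$ (needed so that the quadratic form is genuinely minimized, not maximized, at the equal-$y_{i}$ point) and handling the scale invariance cleanly so that the simplex constraint does not complicate the Cauchy--Schwarz step.
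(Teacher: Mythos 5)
Your proof is correct, and it takes a genuinely different route from the paper's. The paper follows the Choueifaty--Coignard/Schaible reformulation: it reduces the maximization of $DR$ to the quadratic program $\min_y y'\Sigma y$ subject to $\sigma'y=1$, $y\ge 0$, guesses the candidate $y^*=\frac{1}{n}(\sigma_1^{-1},\dots,\sigma_n^{-1})$, verifies the KKT stationarity condition $(\Sigma y)_i+\lambda\sigma_i=0$ using the constant-correlation structure, and invokes strict convexity for uniqueness before normalizing. You instead work directly with $DR(x)^2$, use the identity $\rho(x)^2=(1-k)\sum_i x_i^2\sigma_i^2+k\bigl(\sum_i x_i\sigma_i\bigr)^2$ and the substitution $y_i=x_i\sigma_i$ to strip out the covariance structure entirely, and finish with Cauchy--Schwarz. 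Both arguments are sound; your algebraic identity is correct, the scale-invariance argument for dropping the simplex constraint is legitimate, and your reading of $\rho$ as the square root of the displayed quadratic form matches the paper's intent (the statement as written is a typo for the variance rather than the volatility, as the paper's own proof confirms by minimizing $y'\Sigma y$). What your approach buys is self-containedness and a cleaner uniqueness statement: the equality case of Cauchy--Schwarz immediately shows the maximizer is unique on the simplex whenever $k<1$, with no appeal to KKT theory or to the fractional-programming transform. What the paper's approach buys is consistency with its general machinery: the QP in the proof is exactly the specialization of Problem \eqref{eq:MDR1} used throughout Section \ref{sec:DR_RiskMeasures}, so the proof doubles as an illustration of the method that extends to other risk measures. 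You are also right to flag $1-k>0$ as the needed nondegeneracy condition; the paper's proof makes the analogous (implicit) assumption when it asserts that $\Sigma$ is positive definite under the constant-correlation model, which likewise requires $k\in(-1/(n-1),1)$.
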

\begin{proof}
Assume that $\sigma_{ij}=k \sigma_{i}\sigma_{j}$ and $\sigma_{ii}= \sigma_{i}^2$. Following \cite{Choueifaty2013} (see p. 18 - Appendix B, and also Section \ref{sec:DR_RiskMeasures} below for the general case), the maximum diversification portfolio in the case of volatility can be obtained by finding a solution $y^{*}$ of the following problem
\begin{equation}
\begin{array}{lll}
\displaystyle\min_{y} & y'\Sigma y &
\\
s.t. &  &  \\
& \sigma'y=1 &  \\
& y\geq 0 %
\end{array}
\label{mark0}
\end{equation}
and taking $x_{i}^{MD}=\displaystyle\frac{y_{i}^{*}}{\sum\limits_{j=1}^{n} y_{j}}$ with $i=1,\dots,n$.
Note that, under the assumption of the constant correlation model, the covariance matrix $\Sigma$ is positive definite.
Hence, Problem \eqref{mark0} has a unique solution $y^{*}$.
If $ y^{*}>0$, then it must satisfy the following KKT conditions:
\begin{equation}\label{mark}
\centering
\left\{
	\begin{array}{lll}
(\Sigma y)_i+\lambda \sigma_i&=0 \quad \forall i \\
\sigma'y=1
	\end{array}
\right.
\end{equation}
for some $\lambda \in \mathbb{R}$.

\noindent
Consider the points $\displaystyle y^k=k\left(\frac{1}{\sigma_1},\dots,\frac{1}{\sigma_n}\right)$ with $k \in \mathbb{R}$ and substitute them into the first equation of \eqref{mark}. 
We obtain
\begin{equation*}
 (\Sigma y^k)_i+\lambda \sigma_i= (\sigma_i(k(1+(n-1)c)))+\lambda \sigma_i=0 \quad \forall i\,.
\end{equation*}
Then, solving for $\lambda$ we have
\begin{equation*}
\lambda=-(k(1+(n-1)c))\,.  
\end{equation*}
Hence $y^k$ satisfies the second constraint of Problem \eqref{mark} for $\displaystyle k=\frac{1}{n}$.
Thus, $\displaystyle y^*=\frac{1}{n}\left(\frac{1}{\sigma_1},\dots,\frac{1}{\sigma_n}\right)$ is a solution of Problem \eqref{mark}, with $\displaystyle\lambda=-\frac{1+(n-1)c}{n}$. 
Thus, by strict convexity, $y^{*}$ is also the unique solution of Problem \eqref{mark0}.
The maximum diversification portfolio is then obtained by normalizing $y^{*}$, namely $\displaystyle x_{i}^{MD}=\frac{\sigma_{i}^{-1}}{\sum\limits_{j=1}^{n} \sigma_{j}^{-1}}$.
\end{proof}

\section{Linear and Quadratic Formulations for the Return-Diversification Model with Specific Risk Measures}\label{sec:DR_RiskMeasures}

In the case where diversification is measured with the diversification ratio \eqref{def:DR}, Problem \eqref{eq:biobjectiveGeneral} becomes
\begin{equation}\label{eq:biobjective}
		\begin{array}{lll}
            \max\limits_{x} & DR(x) & \\[5pt]
            \max\limits_{x} & \mu^{T} x & \\
            \mbox{s.t.} & & \\[5pt]
            & \displaystyle x\in\Delta
		\end{array}
\end{equation}
The Pareto-optimal solutions of Problem \eqref{eq:biobjective} can be obtained with the $\varepsilon-$constraint method 
as follows
\begin{equation}\label{eq:biobjective_b}
		\begin{array}{lll}
            \max\limits_{x} & \displaystyle DR(x)=\frac{\sum\limits_{i=1}^{n} x_i \rho(R_i)}{\rho(\sum\limits_{i=1}^{n} x_i R_i)} & \\[5pt]
            \mbox{s.t.} & & \\[5pt]
            & \displaystyle\sum_{i=1}^{n}\mu_{i}x_{i}\ge\eta &\\[5pt]
            & x\in\Delta \displaystyle 
		\end{array}
\end{equation}
where $\eta$ is the required level of the portfolio expected return.
Problem \eqref{eq:biobjective_b} is a Fractional Programming (FP) problem, and is
a nonconvex problem.
However, by condition (ii), Problem \eqref{eq:biobjective_b} belongs to the class of concave-convex FP problems, that can be tackled by means of the Schaible transform \citep{schaible1974} described below.

\noindent
For a long-only portfolio $x \geq 0$, we set  $y=tx$, where  $\displaystyle t=\frac{1}{\sum\limits_{i=1}^{n} x_{i}\rho_{i}}$.
Clearly, if $\boldsymbol{\rho}(\cdot)$ is a positive function, then $t>0$.
Thus, Problem \eqref{eq:biobjective_b} can be reformulated as follows
\begin{equation}\label{eq:MDP2}
\begin{array}{lll}
\min\limits_{x} & \displaystyle\frac{1}{DR(x)}= \frac{\boldsymbol{\rho}(x)}{\sum\limits_{i=1}^{n} x_{i}\rho_{i}} = t \boldsymbol{\rho}(x) = \boldsymbol{\rho}(tx) & \\[5pt]
s.t. &  & \\[5pt]
& \displaystyle\sum\limits_{i=1}^{n}y_{i} = t &  \\[5pt]
& \displaystyle\sum_{i=1}^{n}\mu_{i}\frac{y_{i}}{t}\ge\eta &\\[5pt]
& \displaystyle\sum\limits_{i=1}^{n}y_{i}\rho_{i}=1 &  \\[5pt]
& x_{i} =\displaystyle\frac{1}{t} y_{i} \geq 0 & i=1,\ldots ,n 
%
\end{array}
\end{equation}
%
By eliminating $x$ and $t$ from Problem \eqref{eq:MDP2}, we obtain
\begin{equation}\label{eq:MDR1}
\begin{array}{lll}
\displaystyle\min_{y} & \rho(y) &
\\[5pt]
s.t. &  &  \\[5pt]
& \displaystyle\sum\limits_{i=1}^{n}y_{i}\rho_{i}=1 &  \\[5pt]
& \displaystyle\sum\limits_{i=1}^{n}\mu_{i}y_i\ge (\displaystyle\sum_{i=1}^{n} y_i)\eta &  \\[5pt]
& y_{i}\geq 0 & i=1,\ldots ,n%
\end{array}
\end{equation}

\noindent
%
%
\noindent
Note that Problem \eqref{eq:MDR1} is a simple convex programming problem with linear constraints which can be efficiently solved even for large instances.
Since $\rho$ is a homogenous function of degree 1,
we find the portfolio weights that maximize the diversification ratio in Problem \eqref{eq:biobjective_b} by simply normalizing the solution $y^{*}$ of Problem \eqref{eq:MDR1}.
Thus, $x_{i}^{MD}=\displaystyle\frac{y_{i}^{*}}{\sum\limits_{k=1}^{n}y_{k}^{*}}$, for $i=1,\ldots ,n$.

\noindent
We are interested in computing the optimal values $h(\eta)$
of Problem \eqref{eq:biobjective_b} as a function of $\eta$ in the interval $[\eta_{min},\eta_{max}]$, where $\eta_{min}$ denotes the value of $\sum\limits_{i=1}^{n}\mu_{i}x_{i}$ at an optimal solution of the problem obtained by deleting the return constraint in \eqref{eq:biobjective_b}, and $\eta_{max}=\max\{\mu_{1},\dots,\mu_{n}\}$.
Indeed, the graph of $h(\eta)$ on the interval $[\eta_{min},\eta_{max}]$ contains the set of all non-dominated portfolios (efficient frontier), which is usually approximated by solving \eqref{eq:biobjective_b} for several (equally spaced) values of $\eta$ in $[\eta_{min},\eta_{max}]$.

We now describe the explicit formulation of Problem \eqref{eq:MDR1} in the case of some specific risk measures typically used for asset allocation.
%

\subsection{Volatility}

Let $\sigma_i$ be the volatility of asset $i$ and $c_{ij}$ be the (linear) correlation
between assets $i$ and $j$.
The portfolio volatility is defined as
\begin{equation}\label{eq:Voldef}
\sigma(R_P(x))=\sqrt{\sum\limits_{i=1}^{n}\sum\limits_{j=1}^{n}x_{i}x_{j} c_{ij}
  \sigma_i \sigma_j} 
\end{equation}
Note that volatility is a deviation risk measure satisfying Definition \ref{def:Vol}.
Hence, by condition (ii)
\begin{equation}\label{eq:VolJensen2}
\sigma(x) \leq \sum\limits_{i=1}^{n}x_{i} \sigma_i \, .
\end{equation}
When the asset returns are perfectly positively correlated, i.e., $c_{ij}=1$,
$\sigma(x)$ is additive.

\noindent
In the case of volatility, Problem \eqref{eq:MDR1} becomes

%
	%
	\begin{equation}
		\begin{array}{lll}
		\displaystyle\min_{y} & \sum\limits_{i=1}^{n}\sum\limits_{j=1}^{n}y_{i}y_{j} c_{ij}
  \sigma_i \sigma_j &
		\\
		s.t. &  &  \\
		& \sum\limits_{i=1}^{n}y_{i} \sigma_i =1 &  \\
        & \sum\limits_{i=1}^{n} \mu_i y_{i} \ge \eta \left( \sum\limits_{i=1}^{n}y_{i} \right) &  \\
		& y_{i}\geq 0 & i=1,\ldots ,n%
		\end{array}
		\label{eq:VolMDR}
	\end{equation}
	%

\subsection{MAD}

The Mean Absolute Deviation (MAD) is defined as the expected value of the absolute deviation of the portfolio return $R_{P}(x)$ from its mean $\mu_{P}(x)$, namely
\begin{equation}  \label{equ0}
MAD(x) = E[\,\lvert R_{P}(x) - \mu_{P}(x)\rvert \,] = E\left[\,\left\lvert\sum_{i=1}^{n} R_{i} x_{i} -\sum_{i=1}^{n} \mu_{i} x_{i}\right\rvert \,\right] \, ,
\end{equation}
where $R_{i}$ is the random return of asset $i$ and $\mu_{i}$ is its expected value.
Note that MAD is a deviation risk measure satisfying Definition \ref{def:Vol}.
Hence, by condition (ii)
\begin{equation}\label{eq:MADconexity}
  MAD(x) \leq \sum\limits_{i=1}^{n}x_{i} MAD(R_{i}) \, .
\end{equation}
Expression \eqref{eq:MADconexity} holds with equality, i.e., MAD is additive, under the conditions specified in the following proposition.
%
%
\begin{proposition}[\cite{Ararat2021}]
Let $R_{1}, R_{2}, \cdots, R_{n} \in L^{1}(\Omega, \mathcal{F}, \mathbb{P})$.
Then, MAD is additive if and only if for each $i \neq j$ the asset returns satisfy the following
conditions:
\begin{equation}\label{eq:MADadd}
 \left( R_{i} - E[R_{i}] \right) \left( R_{j} - E[R_{j}] \right) \geq 0 \qquad \mathbb{P} - a.s.
\end{equation}
\end{proposition}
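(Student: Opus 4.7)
The plan is to exploit the pointwise triangle inequality and reduce the question of MAD-additivity to a sign-coherence condition on the centered returns. Set $\tilde R_i := R_i - E[R_i]$. By linearity of expectation,
\begin{equation*}
MAD(x) = E\!\left[\,\Bigl|\sum_{i=1}^{n} x_i \tilde R_i \Bigr|\,\right], \qquad \sum_{i=1}^{n} x_i \, MAD(R_i) = \sum_{i=1}^{n} x_i E[|\tilde R_i|].
\end{equation*}
Since $x_i \geq 0$, the triangle inequality yields $\bigl|\sum_i x_i \tilde R_i\bigr| \leq \sum_i x_i |\tilde R_i|$ pointwise, with the right-hand side integrable. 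Hence MAD-additivity at $x$ (equality of expectations) is equivalent, by the usual monotonicity argument for integrals of a.s.\ dominated functions, to the pointwise (a.s.) equality $\bigl|\sum_i x_i \tilde R_i\bigr| = \sum_i x_i |\tilde R_i|$.

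For the forward direction, I would fix a pair $i\neq j$ and plug in $x$ with $x_i=x_j=\tfrac12$ and all other components zero. The identity above reduces to $|\tilde R_i + \tilde R_j| = |\tilde R_i| + |\tilde R_j|$ a.s., which forces $\tilde R_i$ and $\tilde R_j$ to have the same sign a.s.\ (or one to vanish), i.e.\ $\tilde R_i \tilde R_j \geq 0$ a.s., which is exactly \eqref{eq:MADadd}.

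For the reverse direction, I would argue that the pairwise conditions propagate into a global sign-coherence statement. Letting $A_i^+ = \{\tilde R_i > 0\}$ and $A_i^- = \{\tilde R_i < 0\}$, the hypothesis gives $\mathbb{P}(A_i^+\cap A_j^-)=0$ for all $i\neq j$. Taking the union over the (finitely many) pairs, there is a full-measure event on which, for every pair $(i,j)$, the signs of $\tilde R_i$ and $\tilde R_j$ do not disagree. On this event, all nonzero $\tilde R_i$ share a common sign, so for any $x \geq 0$ the sum $\sum_i x_i \tilde R_i$ satisfies $|\sum_i x_i \tilde R_i| = \sum_i x_i |\tilde R_i|$; taking expectations gives $MAD(x) = \sum_i x_i MAD(R_i)$.

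The only mildly delicate step is the propagation from pairwise to global sign-coherence in the reverse direction, but once one phrases it via the events $A_i^\pm$ and uses finiteness of the number of pairs, it is immediate. The forward direction is a one-line reduction via a two-asset test portfolio, and the initial reduction from integral equality to a.s.\ equality uses only that $\sum_i x_i|\tilde R_i| - |\sum_i x_i \tilde R_i|$ is a nonnegative integrable function with zero expectation.
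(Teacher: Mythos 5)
Your proof is correct and complete. Note that the paper itself does not prove this proposition at all --- it is stated as a quoted result with a citation to Ararat et al., so there is no in-paper argument to compare against. Your route is the natural one: reduce the integral equality $MAD(x)=\sum_i x_i\,MAD(R_i)$ to the a.s.\ pointwise identity $\bigl|\sum_i x_i \tilde R_i\bigr|=\sum_i x_i|\tilde R_i|$ via the fact that a nonnegative integrable function with zero expectation vanishes a.s.; extract the pairwise condition with the test portfolio $x_i=x_j=\tfrac12$ and the equality case of the triangle inequality; and propagate pairwise sign-coherence to global sign-coherence by discarding the finitely many null sets $A_i^{+}\cap A_j^{-}$. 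Each step is sound, and the integrability hypotheses ($R_i\in L^1$) are used exactly where needed.
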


\noindent
We assume that the asset returns are discrete random variables defined on a discrete state space with $T$ states of nature having probability $\pi_t$, with $t=1,...,T$.
The outcomes of the discrete random returns correspond to the historical scenarios.
More precisely, the portfolio choice is performed using an in-sample window of $T$ historical realizations having equal probability $\pi_t=1/T$, as is typically done in portfolio optimization \citep[see, e.g.,][and references therein]{carleo2017approximating,cesarone2020computational,bellini2021risk}.

\noindent
In the case of MAD, Problem \eqref{eq:MDR1} becomes
\begin{equation}
		\begin{array}{lll}
		\displaystyle\min_{y} & \displaystyle \frac{1}{T} \sum_{t=1}^{T} \left\lvert\sum_{i=1}^{n} r_{i,t} y_{i} -\sum_{i=1}^{n} \mu_{i} y_{i}\right\rvert & \\
		s.t. &  &  \\
		& \sum\limits_{i=1}^{n}y_{i} MAD_{i}=1 &  \\
        & \sum\limits_{i=1}^{n} \mu_i y_{i} \ge \eta \left( \sum\limits_{i=1}^{n}y_{i} \right) &  \\
		& y_{i}\geq 0 & i=1,\ldots ,n%
		\end{array}
		\label{eq:MADMDR}
\end{equation}
where  $r_{i,t}$ is the historical realization of the return of asset $i$ at time $t$, and $MAD_{i}=MAD(R_{i})= \frac{1}{T} \sum_{t=1}^{T} \left\lvert  r_{i,t} - \mu_{i}  \right\rvert$.

\noindent
As in \cite{Konno91},
we can linearize Problem \eqref{eq:MADMDR} by introducing $T$ auxiliary variables
$d_{t}$ defined as the absolute deviation of the portfolio return from its mean for each $t=1, \ldots, T$, i.e., $ d_{t} = \left\lvert\sum_{i=1}^{n} r_{i,t} y_{i} -\sum_{i=1}^{n} \mu_{i} y_{i}\right\rvert$,
and by replacing the absolute value with the following constraints: 
$d_{t} \geq \sum_{i=1}^{n} (r_{i,t} - \mu_i) y_{i}$ and
$d_{t} \geq -\sum_{i=1}^{n} (r_{i,t} - \mu_i) y_{i}$.
Hence, Problem \eqref{eq:MADMDR} can be rewritten as the following LP problem
%
\begin{equation}  \label{eq:MADMDR2}
\begin{array}{lll}
\displaystyle \min_{y,d} & \quad \displaystyle\frac{1}{T} \sum\limits_{t=1}^{T} d_{t} &\\
\mbox{s.t.} &  &  \\
& \displaystyle d_{t} \geq \sum_{i=1}^{n} (r_{i,t} - \mu_i) y_{i} &t=1,\ldots,T \\
& \displaystyle d_{t} \geq -\sum_{i=1}^{n} (r_{i,t} - \mu_i) y_{i} &t=1,\ldots,T \\
& \sum\limits_{i=1}^n y_{i}MAD_{i} = 1  &  \\
& \sum\limits_{i=1}^{n} \mu_i y_{i} \ge \eta \left( \sum\limits_{i=1}^{n}y_{i} \right) &  \\
& y_i\geq 0 & i=1,\ldots,n
\end{array}%
\end{equation}
%


\subsection{CVaR}

The formal definition of the Conditional Value-at-Risk (CVaR), at a specified confidence level $\varepsilon$, is
$$
CVaR_{\varepsilon}(x) = - \frac{1}{\varepsilon} \int \limits_{0}^{\varepsilon} Q_{R_{P}(x)}(\alpha) d \alpha \, ,
$$
where $Q_{R_{P}(x)}(\alpha)$ is the quantile function of the portfolio return $R_{P}(x)$.
Note that CVaR is a coherent risk measure satisfying Definition \ref{def:coh} and, in particular, the convexity condition (ii)
\begin{equation}\label{eq:CVaRconvexity}
  CVaR_{\varepsilon}(x) \leq \sum\limits_{i=1}^{n}x_{i} CVaR_{\varepsilon}(R_{i}) \, ,
\end{equation}
where  $CVaR_{\varepsilon}(R_{i})$ represents CVaR of asset $i$.
Expression \eqref{eq:CVaRconvexity} holds with equality, i.e., CVaR is additive in the case where the random variables $R_i$ are comonotonic  \citep{tasche2002expected}.
\noindent
Similar to the case of MAD, in a discrete world, Problem \eqref{eq:MDR1} with CVaR as a risk measure becomes
	\begin{equation}
		\begin{array}{lll}
		\displaystyle\min_{y} & \displaystyle CVaR_{\varepsilon}(y) & \\
		s.t. &  &  \\
		& \sum\limits_{i=1}^{n}y_{i} CVaR_{\varepsilon}(R_{i})=1 &  \\
        & \sum\limits_{i=1}^{n} \mu_i y_{i} \ge \eta \left( \sum\limits_{i=1}^{n}y_{i} \right) &  \\
		& y_{i}\geq 0 & i=1,\ldots ,n%
		\end{array}
		\label{eq:CVaRMDR}
	\end{equation}
where the Conditional Value-at-Risk of asset $i$ is given by $CVaR_{\varepsilon}(R_{i}) = - \frac{1}{j} \sum_{t=1}^{j}  r_{i,(t)}$, with $j=\round(\varepsilon T)$, while $r_{i,(1)} \leq \ldots \leq r_{i,(T)}$ represent the ordered realizations of the returns of asset $i$.
%
%
%
%
We recall the LP reformulation of CVaR by \cite{RockUrya00} for a given portfolio $y$
\begin{align}
CVaR_{\varepsilon}(y):= &   \min_{\zeta \in \mathbb{R}, d} \;  \zeta + \frac{1}{\varepsilon T} \sum_{t=1}^T d_t \nonumber \\
& \mbox{s.t.}  \hspace*{0.1cm} d_t \ge - \sum_{i=1}^{n} r_{i, t} y_{i} -\zeta  \quad t=1,\ldots,T, \nonumber \\
& \hspace*{.6cm} d_t\ge 0,\; \forall t \nonumber
\end{align}
Hence, Problem \eqref{eq:CVaRMDR} can be rewritten as the following LP problem
	\begin{equation}
		\begin{array}{lll}
		\displaystyle \min_{y,\zeta,d} & \displaystyle \zeta + \frac{1}{\varepsilon T} \sum_{t=1}^{T} d_{t} & \\
		s.t. &  &  \\
& d_{t} \geq -\sum\limits_{i=1}^{n} r_{i,t} y_{i} -\zeta \qquad & t=1,\ldots,T \\
& d_{t} \geq 0 & t=1,\ldots,T \\
		& \sum\limits_{i=1}^{n}y_{i} CVaR_{\varepsilon}(R_{i}) =1 &  \\
        & \sum\limits_{i=1}^{n} \mu_i y_{i} \ge \eta \left( \sum\limits_{i=1}^{n}y_{i} \right) &  \\
		& y_{i}\geq 0 & i=1,\ldots ,n  \\
        & \zeta \in \mathbb{R} &
		\end{array}
		\label{eq:CVaRMDR2}
	\end{equation}
	%
 

\subsection{Expectiles}
Expectiles have been defined by \cite{newey1987asymmetric} as the minimizers of the expected value of an asymmetric piecewise quadratic loss function
\begin{equation}  \label{Expectiles0}
e_{\alpha}(X)=  \argmin_{\kappa \in \mathbb{R}} \E [ \alpha (X - \kappa)_{+}^{2}
+ (1- \alpha)  (X - \kappa)_{-}^{2} ] \, ,
\end{equation}
where $X \in L^2$, $\alpha \in (0,1)$, and $(b)_{+}=\max(b,0)$, $(b)_{-}=-\min(b,0)$. 
The minimizer in \eqref{Expectiles0} is unique and determined by the first-order condition
\begin{equation} \label{Expectiles1}
\alpha \E[(X-e_\alpha(X))_+] = (1-\alpha) \E[(X-e_\alpha(X))_-],
\end{equation}
which is indeed an equivalent definition of $e_\alpha(X)$ valid for each $X \in L^1$ \citep[see, e.g.,][]{bellini2014generalized}.
We denote by $\zeta_{\alpha}(x)= \zeta_{\alpha}(R_{p}(x)) = e_\alpha(-R_p(x))$ the $\alpha$-expectile of the portfolio loss which, as usual, is defined as the opposite of the portfolio return.
For $\alpha \in [1/2,1)$, Expectiles satisfy Definition \ref{def:coh}, and hence condition (ii)
\begin{equation}\label{eq:Econvexity}
\zeta_{\alpha}(R_P(x)) \leq
 \sum\limits_{i=1}^{n} x_{i}\zeta_{\alpha}(R_{i})\,.
\end{equation}
%
%
Expression \eqref{eq:Econvexity} holds with equality, i.e., Expectile is additive, under the conditions specified in the following theorem.
\begin{theorem}[\cite{bellini2021risk}]
Let $R_{1}, R_{2}, \cdots, R_{n} \in L^{1}(\Omega, \mathcal{F}, \mathbb{P})$, and $\alpha\in(1/2,1)$.
Then, Expectile is additive if and only if for each $i \neq j$ the asset returns satisfy the following
conditions:
\begin{equation}\label{eq:}
 \mathbb{P} \left( (R_{i} - e_{\alpha}(R_{i})) (R_{j} - e_{\alpha}(R_{j})) \ge 0\right) = 1 \qquad \mathbb{P} - a.s.
\end{equation}
\end{theorem}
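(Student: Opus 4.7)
My plan is to establish the equivalence by reducing to the case $n=2$ and then exploiting the first-order characterization \eqref{Expectiles1} of the expectile together with a simple expectation identity for positive parts. Throughout, set $U_i := R_i - e_\alpha(R_i)$, so that each $U_i$ satisfies the individual first-order condition $\alpha \E[(U_i)_+] = (1-\alpha)\E[(U_i)_-]$ and additivity of Expectile for the weight vector $x$ is equivalent, via uniqueness of the expectile, to $\alpha \E[(\sum_i x_i U_i)_+] = (1-\alpha)\E[(\sum_i x_i U_i)_-]$.

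\textbf{Reduction to $n=2$.} Necessity of the pairwise condition is immediate: plugging into the additivity hypothesis the weight vector with $x_k = x_l = 1/2$ and all other entries zero yields additivity on the pair $(R_k, R_l)$, which (as shown below) forces $U_k U_l \geq 0$ a.s. For sufficiency, observe that pairwise $U_i U_j \geq 0$ a.s. prevents, for almost every $\omega$, the coexistence of a strictly positive and a strictly negative entry in $(U_1(\omega),\dots,U_n(\omega))$: otherwise some product would be strictly negative on a set of positive probability. Consequently, for every $x \in \Delta$, $(\sum_i x_i U_i)_{\pm} = \sum_i x_i (U_i)_{\pm}$ a.s., so the candidate first-order condition for $\sum_i x_i R_i$ at the point $\sum_i x_i e_\alpha(R_i)$ collapses to a nonnegative combination of the individual first-order conditions. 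Uniqueness of the expectile then yields $e_\alpha(\sum_i x_i R_i) = \sum_i x_i e_\alpha(R_i)$.

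\textbf{Pairwise case.} Define the nonnegative defect $A := \E[(U_1)_+ + (U_2)_+ - (U_1+U_2)_+]$. From the pointwise identity $(U_1+U_2)_+ - (U_1+U_2)_- = U_1 + U_2 = \sum_{j=1,2}[(U_j)_+ - (U_j)_-]$, I take expectations to obtain $\E[(U_1)_- + (U_2)_- - (U_1+U_2)_-] = A$ as well. Subtracting the sum of the two individual first-order conditions from the putative first-order condition for $U_1+U_2$ gives $\alpha A = (1-\alpha) A$, i.e., $(2\alpha - 1)A = 0$. Since $\alpha > 1/2$, I conclude $A = 0$, and then the elementary identity $a_+ + b_+ - (a+b)_+ = 0 \iff ab \geq 0$ yields $U_1 U_2 \geq 0$ a.s., i.e., the asserted co-movement condition. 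The converse within the pairwise case is immediate: if $U_1 U_2 \geq 0$ a.s., then $(U_1+U_2)_{\pm} = (U_1)_{\pm} + (U_2)_{\pm}$ a.s., so adding the two individual first-order conditions produces the one needed for $U_1+U_2$.

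\textbf{Main obstacle.} The decisive step is the implication $(2\alpha - 1)A = 0 \Rightarrow A = 0$, and this is exactly where the hypothesis $\alpha \in (1/2,1)$ is indispensable; at $\alpha = 1/2$ the expectile reduces to the mean and the claim becomes vacuous. A secondary care point is extending pairwise co-movement to the full $n$-variable decomposition of positive and negative parts used in the sufficiency direction, but this follows cleanly from the elementary trichotomy observation above.
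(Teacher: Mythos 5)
Your argument is correct. Note first that the paper itself gives no proof of this statement: it is quoted verbatim as a cited result from Bellini et al.\ (2021), so there is no in-paper argument to compare yours against. Taken on its own, your proof is a valid, self-contained derivation and follows the natural route through the first-order characterization \eqref{Expectiles1}: the reduction of additivity at weight $x$ to the condition $\alpha\,\E[(\sum_i x_i U_i)_+]=(1-\alpha)\,\E[(\sum_i x_i U_i)_-]$ via uniqueness of the expectile is sound; the defect identity $\E[(U_1)_++(U_2)_+-(U_1+U_2)_+]=\E[(U_1)_-+(U_2)_--(U_1+U_2)_-]$ follows correctly from $v=v_+-v_-$; the step $(2\alpha-1)A=0\Rightarrow A=0$ and the pointwise equivalence $a_++b_+=(a+b)_+\iff ab\ge 0$ are both right; and the passage from pairwise sign agreement to the full decomposition $(\sum_i x_i U_i)_\pm=\sum_i x_i (U_i)_\pm$ via the trichotomy argument is clean. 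Two small remarks. First, your proof in fact only uses $\alpha\neq 1/2$, so it is marginally more general than the stated hypothesis $\alpha\in(1/2,1)$ (the latter restriction matters elsewhere, for coherence of the expectile, not for this equivalence). Second, the surrounding text defines the portfolio risk measure as $\zeta_\alpha(R_P(x))=e_\alpha(-R_P(x))$, whereas the displayed condition is written at the level $e_\alpha(R_i)$ of the returns; since $-R_i-e_\alpha(-R_i)=-(R_i-e_{1-\alpha}(R_i))$, additivity of the loss-based measure would naturally produce the condition at level $1-\alpha$. You prove the statement exactly as written (additivity of $e_\alpha$ on the cone of returns), which is the right reading of the displayed theorem, but it is worth being aware of this sign/level convention if you try to match it to inequality \eqref{eq:Econvexity}.
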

%

%

\noindent
Again, in a discrete world and in the case of Expectile, Problem \eqref{eq:MDR1} becomes
	\begin{equation}
		\begin{array}{lll}
		\displaystyle\min_{y} & \displaystyle \zeta_{\alpha}(y) & \\
		s.t. &  &  \\
		& \sum\limits_{i=1}^{n}y_{i} \zeta_{\alpha}(R_{i})=1 &  \\
        & \sum\limits_{i=1}^{n} \mu_i y_{i} \ge \eta \left( \sum\limits_{i=1}^{n}y_{i} \right) &  \\
		& y_{i}\geq 0 & i=1,\ldots ,n%
		\end{array}
		\label{eq:DRExpectile}
	\end{equation}
where  $\zeta_{\alpha}(R_{i})$ represents the $\alpha$-Expectile of asset $i$.
%
%

\noindent
As described in \cite{bellini2021risk}, for a fixed portfolio $y$, its $\alpha$-Expectile $\zeta_{\alpha}(y)=\zeta_{\alpha}$ can be obtained by solving the following equation
%
%
%
%
\begin{equation}
\alpha\sum_{t=1}^{T}\left (-\sum_{i=1}^{n} y_{i}r_{i,t} - \zeta_{\alpha} \right)_{+} \,=
(1- \alpha) \sum_{t=1}^{T}\left (-\sum_{i=1}^{n}  y_{i} r_{i,t} - \zeta_{\alpha} \right)_{-}\,.
\end{equation}
Therefore, Problem \eqref{eq:DRExpectile} can be rewritten as follows
	\begin{equation}
		\begin{array}{lll}
		\displaystyle\min_{y,\zeta_{\alpha}} & \displaystyle \zeta_{\alpha} & \\
		s.t. &  &  \\
  & \alpha\sum\limits_{t=1}^{T}\left (-\sum\limits_{i=1}^{n} y_{i}r_{i,t} - \zeta_{\alpha} \right)_{+} \,=
(1- \alpha) \sum\limits_{t=1}^{T}\left (-\sum\limits_{i=1}^{n}  y_{i} r_{i,t} - \zeta_{\alpha} \right)_{-} &  \\
		& \sum\limits_{i=1}^{n}y_{i} \zeta_{\alpha}(R_{i})=1 &  \\
        & \sum\limits_{i=1}^{n} \mu_i y_{i} \ge \eta \left( \sum\limits_{i=1}^{n}y_{i} \right) &  \\
		& y_{i}\geq 0 & i=1,\ldots ,n%
		\end{array}
		\label{eq:DRExpectile_y}
	\end{equation}
We can linearize Problem \eqref{eq:DRExpectile_y} by replacing the positive and negative parts of $\displaystyle-\sum\limits_{i=1}^{n} y_{i} r_{i,t} - \zeta_{\alpha}$ with $2T$ auxiliary variables $d_{t}^{+},d_{t}^{-}\geq 0$ satisfying $\displaystyle d_{t}^{+}-d_{t}^{-}=-\sum\limits_{i=1}^{n} y_{i} r_{i,t} - \zeta_{\alpha}$ and $d_{t}^{+}\cdot d_{t}^{-}=0$.
We note that, by using an argument similar to the one of Theorem 5 in \cite{bellini2021risk}, one can show that the complicating complementarity constraints $d_{t}^{+}\cdot d_{t}^{-}=0$ are always satisfied and can thus be dropped from the formulation.
%
Hence, Problem \eqref{eq:MDR1} for Expectiles can be reformulated as the following LP problem
\begin{equation}
\label{eq:ExpeMDR3}
\begin{array}{lll}
\displaystyle \min_{y,d^{+},d^{-},\zeta_{\alpha}} & \zeta_{\alpha}  &\\
s.t. &  &  \\
& \displaystyle\alpha \sum\limits_{t=1}^{T} d_{t}^{+} = (1-\alpha)\sum\limits_{t=1}^{T} d_{t}^{-} & \\
& \displaystyle-\sum\limits_{i=1}^{n} y_{i}r_{it} -\zeta_{\alpha} = d_{t}^{+} - d_{t}^{-}  & t=1,\ldots,T \\
& \displaystyle\sum\limits_{i=1}^{n}y_{i} (\zeta_{\alpha})_{i}=1 &  \\
& \displaystyle\sum\limits_{i=1}^{n} \mu_i y_{i} \ge \eta \left( \sum\limits_{i=1}^{n}y_{i} \right) &  \\
& y_{i}\geq 0 & i=1,\ldots ,n  \\
& d_{t}^{+},d_{t}^{-} \geq 0 & t=1,\ldots,T \\
& \zeta_{\alpha} \in \mathbb{R} &
\end{array}
\end{equation}
%


\section{Empirical Analysis}\label{sec:EmpAnalysis}

In this section, we present an extensive empirical analysis based on five real-world data sets.
More precisely, Section \ref{subsec:ExpSetup} describes the datasets, the setup of the experiments and of the portfolio strategies, and the performance measures used to evaluate the results.
In Section \ref{subsec:PerfEval}, we discuss the out-of-sample computational results obtained.
%

All experiments have been implemented on a workstation with Intel(R) Xeon(R) E5-2623 v4 CPU @ 2.60GHz processor and 64 GB of RAM, under MS Windows 10 Pro, using MATLAB R2022b and the GUROBI 9.5.1 optimization solver.


\subsection{Experimental Setup}\label{subsec:ExpSetup}

The empirical analysis is based on a rolling time window scheme of evaluation. We consider in-sample windows of 2 years (i.e., 500 observations), and we choose one financial month both as a rebalancing interval and as a holding period.
In Table \ref{tab:DailyDatasets}, we list the real-world data sets obtained from Refinitiv, which consist of daily prices adjusted for dividends and stock splits for some of the major market indexes, and which are publicly available at the website \url{https://www.francescocesarone.com/data-sets}.
\begin{table}[htbp!]
	\small
	\centering
	\scalebox{1.}{\begin{tabular}{l l l l l l l l l l l l l l l l l l}
			\toprule
			\textbf{Index} & & & & &  \textbf{$\#$Assets} & & & & \textbf{Country} & & & & \textbf{Time Interval}\\
			\midrule
			\text{DowJones} & & & & & 28 & & & & \text{USA} & & & & \text{Oct} 2006--\text{Feb} 2023\\
			\text{EuroStoxx50} & & & & & 46 & & & & \text{EU} & & & & \text{Oct} 2006--\text{Feb} 2023\\
			\text{FTSE100} & & & & & 82  & & & & \text{UK} & & & & \text{Oct} 2006--\text{Feb} 2023 \\
			\text{NASDAQ100} & & & & & 70 & & & & \text{USA} & & & & \text{Oct} 2006--\text{Feb} 2023 \\
			\text{SP500} & & & & & 420 & & & & \text{USA} & & & & \text{Oct} 2006--\text{Feb} 2023 \\
			\bottomrule
	\end{tabular}}
	\caption{List of data sets analyzed}
	\label{tab:DailyDatasets}
\end{table}
%

In our experiments, we test two classes of portfolio selection models: (\textit{i}) the return-diversification strategy  \eqref{eq:biobjective_b}, and (\textit{ii}) the return-risk approach reported below \citep[see, e.g.,][]{cesarone2020computational,bellini2021risk}.
\begin{equation}\label{eq:gain-risk}
		\begin{array}{lll}
            \min\limits_{x} & \rho(x) & \\[5pt]
            \mbox{s.t.} & & \\[5pt]
            & \displaystyle\sum_{i=1}^{n}\mu_{i}x_{i}\ge\eta &\\[5pt]
            & x\in\Delta \displaystyle 
		\end{array}
\end{equation}
These two classes of models are applied to the four risk measures described in Section \ref{sec:DR_RiskMeasures}.
%
%
For each model, we choose two target levels $\eta$ of portfolio expected return in the interval $[\eta_{min},\eta_{max}]$, where $\eta_{min}$ denotes the value of $\sum\limits_{i=1}^{n}\mu_{i}x_{i}$ at an optimal solution of the problem obtained by deleting the return constraint in \eqref{eq:biobjective_b} and \eqref{eq:gain-risk}, and $\eta_{max}=\max\{\mu_{1},\dots,\mu_{n}\}$.
More precisely, denoting the generic portfolio target return by $\eta_{\alpha}=\eta_{min}+\alpha(\eta_{max}-\eta_{min})$, we consider $\alpha=0, 1/3$.
%
%
Furthermore, for comparison purposes, we also examine the performance of the Risk Parity \citep[][]{Maillard2010} and of the Equally-Weighted \citep[][]{Demiguel2009} portfolios, and that of the market index.
\begin{table}[htbp]
	\centering
	\caption{List of portfolio strategies}
	\scalebox{0.70}{
	\begin{tabular}{lllllll}
		\toprule
		\#    & Approach & & & & & Abbreviation \\
		\midrule
		& \textit{\textbf{Minimum Risk}} \\
		1     & Min Variance & & & & & \textbf{MV0} \\
		2     & Min MAD & & & & & \textbf{MAD0} \\
		3     & Min CVaR$_{\varepsilon}$ & & & & & \textbf{CVaR0} \\
		4     & Min Expectiles & & & & & \textbf{Expe0} \\
		& \textit{\textbf{Minimum Risk with target return}} \\
		5     & Min Variance with target return & & & & & \textbf{MV1} \\
		6     & Min MAD with target return & & & & & \textbf{MAD1} \\
		7     & Min CVaR$_{\varepsilon}$ with target return & & & & & \textbf{CVaR1} \\
		8     & Min Expectiles with target return & & & & & \textbf{Expe1} \\
		& \textit{\textbf{Maximum Diversification Ratio}} \\
		9     & Max Volatility Ratio & & & & & \textbf{DRvol0} \\
		10    & Max MAD Ratio & & & & & \textbf{DRMAD0} \\
		11    & Max CVaR$_{\varepsilon}$ Ratio & & & & & \textbf{DRCVaR0} \\
		12    & Max Expectiles Ratio & & & & & \textbf{DRExpe0} \\
		& \textit{\textbf{Maximum Diversification Ratio with target return}} \\
		13    & Max Volatility Ratio with target return & & & & & \textbf{DRvol1} \\
		14    & Max MAD Ratio with target return & & & & & \textbf{DRMAD1} \\
		15    & Max CVaR$_{\varepsilon}$ Ratio with target return & & & & & \textbf{DRCVaR1} \\
		16    & Max Expectiles Ratio with target return & & & & & \textbf{DRExpe1} \\
              & & & & & & \\
        17    & Risk Parity & & & & & \textbf{RP} \\
        18    & Equally Weighted & & & & & \textbf{EW} \\
        19    & Market Index & & & & & \textbf{Index}
	\end{tabular}%
    }
	\label{tab:ListOfModels}%
\end{table}%
In Table \ref{tab:ListOfModels}, we list the nineteen portfolio strategies analyzed together with the respective abbreviations.
Note that abbreviations ending with \textbf{0} (e.g., \textbf{MV0} and \textbf{DRvol0}) indicate strategies obtained without the return constraint (or, equivalently, with target level $\eta_0$).  
Abbreviations ending with \textbf{1} (e.g., \textbf{MV1} and \textbf{DRvol1}) denote strategies obtained by solving Models \eqref{eq:biobjective_b} and \eqref{eq:gain-risk} with a common target return $\bar{\eta}_1=\max\{\eta_{1/3}^{MinV},\eta_{1/3}^{MinMAD},\eta_{1/3}^{MinCVaR},\eta_{1/3}^{MinExp},\eta_{1/3}^{DRVol},\eta_{1/3}^{DRMAD},\eta_{1/3}^{DRCVaR},\eta_{1/3}^{DRExp}\}$.
\noindent
Furthermore, in Models \eqref{eq:biobjective_b} and \eqref{eq:gain-risk}, we fix $\varepsilon=5\%$ for CVaR and $\alpha=0.9$ for Expectiles \citep[see, e.g.,][]{cesarone2017minimum,bellini2021risk}.
%

The out-of-sample performance results of each portfolio strategy are evaluated by considering several performance measures widely used in the literature \citep[see, e.g.,][]{cesarone2015linear,cesarone2016optimally,cesarone2020optimization,bruni2017exact,cesarone2017minimum}, and detailed below.
%
%
\begin{itemize}
\setlength\itemsep{1em}
\item[-] The average $\mathbf{\mu^{out}}$ and the standard deviation $\mathbf{\sigma^{out}}$ of the out-of-sample portfolio returns $R^{out}$.
  \item[-] The \textbf{Sharpe} ratio defined as $\mu^{out}/\sigma^{out}$ \citep{sharpe1966mutual,Sharpe1994}.
Clearly, higher Sharpe ratios are associated with better results.
%
%
%
\item[-] The Maximum DrawDown \citep[\textbf{MDD},][]{Chekhlov2005} is a measure of the maximum potential out-of-sample loss from the observed peak and is expressed as
\begin{equation*}
\mbox{MDD} = \min\limits_{T^{in}+1\le t\le T} DD_{t},
\end{equation*}
where $T^{in}$ and $T$ are the length of a single in-sample window and the entire time series, respectively.
%
%
%
The DrawDown is computed as
\begin{equation*}
DD_{t}=\frac{W_{t}-\max\limits_{T^{in}+1\le \tau \le t}W_{\tau}}{\max\limits_{T^{in}+1\le \tau \le t}W_{\tau}}, \qquad t \in \{T^{in}+1,\dots T\},
\end{equation*}
where $W_{t}=W_{t-1}(1+R^{out}_{t})$ denotes the portfolio wealth at time $t$, with $W_{0}=1$. 
Note that the MDD is always non-positive, hence higher values are preferable.
\item[-] The \textbf{Ulcer} index \citep{martin1989investor} measures the depth and duration of drawdowns in prices over the out‐of‐sample period, and is expressed as
\begin{equation*}
\mbox{Ulcer} = \sqrt{\frac{\sum\limits_{t=T^{in}+1}^{T}DD_{t}^2}{T-T^{in}}}\, .
\end{equation*}
%
Lower Ulcer values indicate better portfolio performance.
\item[-] The Rachev ratio \citep[\textbf{Rachev10},][]{RacBigOrtSto2004} is a measure of the relative gap between the mean of the best $\alpha\%$ values of $R^{out}-r_{f}$ and that of the worst $\beta\%$ ones, and it is defined as
\begin{equation*}
\mbox{Rachev10} = \frac{CVaR_{\alpha}(r_{f}-R^{out})}{CVaR_{\beta}(R^{out}-r_{f})},
\end{equation*}
with $r_{f} = 0$ and where $\alpha=\beta=10\%$. A higher Rachev ratio is clearly preferred.
\item[-] The Turnover \citep[\textbf{Turn},][]{Demiguel2009} evaluates the amount of trading required to perform in practice the portfolio strategy, and is defined as
\begin{equation*}
\mbox{Turn} = \frac{1}{S}\sum_{s=1}^{S}\sum_{k=1}^{n}\mid x_{s,k}-x_{s-1,k}\mid ,
\end{equation*}
where $S$ indicates the number of rebalances, $x_{s,k}$ is the portfolio weight of asset $k$ after rebalancing, and $x_{s-1,k}$ is the portfolio weight before rebalancing at time $s$. 
A lower turnover typically indicates better portfolio performance.
We emphasize that this definition of portfolio turnover is a proxy of the effective one, since it evaluates only the amount of trading generated by the models at each rebalance, without considering the trades due to changes in asset prices between one rebalance and the next. 
Hence, by definition, the turnover of the EW portfolio is zero.
\item[-] The Jensen's Alpha \citep[\textbf{AlphaJ},][]{jensen1968performance} is defined as the intercept of the line given by the linear regression of $R^{out}$ on $R_{I}^{out}$, namely
\begin{equation*}
\mbox{AlphaJ} = \mathbb{E}[R^{out}]-\beta\mathbb{E}[R_{I}^{out}]\,,
\end{equation*}
where $\displaystyle \beta=Cov(R^{out},R_{I}^{out}) / \sigma^{2}(R_{I}^{out})$ and $R_{I}^{out}$ is the out-of-sample portfolio return of the Market Index.
\item[-] The Information ratio \citep[\textbf{InfoR},][]{treynor1973use} is defined as the expected value of the difference between the out-of-sample portfolio return and that of the benchmark index divided by the standard deviation of such difference, namely
\begin{equation*}
\mbox{InfoR} = \frac{\mathbb{E}[R^{out}-R_{I}^{out}]}{\sigma[R^{out}-R_{I}^{out}]}.
\end{equation*}
Clearly, the larger its value, the better the portfolio performance.
\item[-] The Value-at-Risk of the out-of-sample portfolio returns $R^{out}$ with a confidence level equal to 5\% (\textbf{VaR5}).
%
\item[-] The Omega ratio \citep[\textbf{Omega},][]{harlow1989asset} is the ratio between the average of positive and negative out-of-sample portfolio returns, and is defined as
\begin{equation*}
\mbox{Omega} = \displaystyle \frac{\mathbb{E}[\max\{0,R^{out}-\eta\}]}{\mathbb{E}[\min\{0,R^{out}-\eta\}]}\,,
\end{equation*}
where $\eta$ is fixed to 0.
%
\item[-] The average number of selected assets (\textbf{ave$\#$}).
%
\item[-] The Return on Investment \citep[\textbf{ROI}, see, e.g.,][]{cesarone2022comparing,cesarone2022does}, namely the time-by-time return generated by each portfolio strategy over a specified time horizon $\Delta\nu$. ROI is defined as:
\begin{equation*}\label{eq:ROI}
ROI_{\tau }=\frac{W_{\tau }-W_{\tau - \Delta\nu}}{W_{\tau - \Delta\nu}} \qquad \tau = T^{in} + \Delta\nu + 1, \ldots, T
\end{equation*}
where $W_{\tau }=W_{\tau - \Delta\nu} \prod_{t=\tau - \Delta\nu +1}^{\tau} (1+R_{t}^{out})$ is the portfolio wealth generated by investing at the beginning of the time horizon
the amount of capital $W_{\tau - \Delta\nu}$.
%
In our experiment, we compute the annual ROI, namely we fix $\Delta\nu$ equal to 250 days.
\end{itemize}
%


\subsection{Out-of-Sample Performance Evaluation}\label{subsec:PerfEval}


We provide here the out-of-sample computational results of the portfolio selection strategies listed in Table \ref{tab:ListOfModels}, for each data set in Table \ref{tab:DailyDatasets}.
For reasons of space, the experimental results of the out-of-sample ROI are moved to Appendix \ref{sec:Appendix}.
The rank of performance results is shown in different colors, ranging from deep-red to deep-green, which represent the worst and the best results, respectively.



\noindent
In Table \ref{tab:CompRes_DJ} we report the computational results obtained with the DowJones data set.
First, we note that the return-diversification models generally tend to outperform the return-risk strategies in terms of $\mu^{out}$, Ulcer index, and Turnover.
Also when analyzing the Jensen's Alpha, and the Sharpe, Rachev, Information, and Omega ratios, which are generally considered highly desirable indicators by practitioners, the return-diversification approaches seem to obtain superior results.
Among these, the CVaR maximum diversification ratio with and without the return constraint (DRCVaR1 and DRCVaR0, respectively) achieves the highest Jensen's Alpha, Sharpe, Rachev, and Omega ratios.
Interestingly, the MAD maximum diversification ratio without the return constraint (i.e, DRMAD0) provides the lowest Ulcer index and relatively low Turnover.
These results indicate less severe and shorter-lasting drawdowns and lower trading costs.
As expected, the minimum-risk strategies exhibit the best results in terms of risk, as shown by the out-of-sample $\sigma^{out}$, MDD, and VaR5.
We also observe that when requiring a target constraint on the portfolio expected return, the performance results of the optimal portfolios typically improve. 
%
\begin{table}[htbp]
	\centering
	\caption{Out-of-sample performance results on the DowJones dataset}
	\resizebox{0.90\textwidth}{!}{\begin{tabular}{|l|c|c|c|c|c|c|c|c|c|c|c|c|}
		\toprule
		\multicolumn{1}{|c|}{\textbf{Approach}} & $\boldsymbol{\mu^{out}}$ & $\boldsymbol{\sigma^{out}}$ & \textbf{Sharpe} & \textbf{MDD} & \textbf{Ulcer} & \textbf{Rachev10} & \textbf{Turn} & \textbf{AlphaJ} & \textbf{InfoR} & \textbf{VaR5} & \textbf{Omega} & \textbf{ave \#} \\
		\midrule
		\textbf{MV0} & \cellcolor[rgb]{ .973,  .412,  .42}0,023\% & \cellcolor[rgb]{ .435,  .757,  .482}0,950\% & \cellcolor[rgb]{ .973,  .412,  .42}2,43\% & \cellcolor[rgb]{ .859,  .882,  .51}-0,329 & \cellcolor[rgb]{ .992,  .725,  .482}0,087 & \cellcolor[rgb]{ .976,  .541,  .443}0,948 & \cellcolor[rgb]{ .635,  .816,  .494}0,119 & \cellcolor[rgb]{ .973,  .412,  .42}-0,002\% & \cellcolor[rgb]{ .973,  .412,  .42}-0,022 & \cellcolor[rgb]{ .388,  .745,  .482}1,33\% & \cellcolor[rgb]{ .973,  .412,  .42}1,081 & 10 \\
		\midrule
		\textbf{MAD0} & \cellcolor[rgb]{ .976,  .522,  .439}0,028\% & \cellcolor[rgb]{ .494,  .773,  .486}0,968\% & \cellcolor[rgb]{ .98,  .573,  .447}2,91\% & \cellcolor[rgb]{ .957,  .91,  .518}-0,334 & \cellcolor[rgb]{ .816,  .867,  .506}0,077 & \cellcolor[rgb]{ .98,  .616,  .459}0,952 & \cellcolor[rgb]{ .894,  .89,  .51}0,224 & \cellcolor[rgb]{ .98,  .565,  .447}0,003\% & \cellcolor[rgb]{ .976,  .502,  .435}-0,015 & \cellcolor[rgb]{ .478,  .769,  .486}1,38\% & \cellcolor[rgb]{ .98,  .569,  .447}1,098 & 14 \\
		\midrule
		\textbf{CVaR0} & \cellcolor[rgb]{ .976,  .49,  .431}0,027\% & \cellcolor[rgb]{ .388,  .745,  .482}0,934\% & \cellcolor[rgb]{ .976,  .553,  .443}2,85\% & \cellcolor[rgb]{ .502,  .78,  .49}-0,310 & \cellcolor[rgb]{ 1,  .922,  .518}0,080 & \cellcolor[rgb]{ .996,  .871,  .506}0,966 & \cellcolor[rgb]{ .918,  .898,  .51}0,235 & \cellcolor[rgb]{ .98,  .604,  .455}0,004\% & \cellcolor[rgb]{ .976,  .506,  .435}-0,014 & \cellcolor[rgb]{ .525,  .784,  .49}1,40\% & \cellcolor[rgb]{ .976,  .514,  .439}1,092 & 8 \\
		\midrule
		\textbf{Expe0} & \cellcolor[rgb]{ .973,  .459,  .427}0,025\% & \cellcolor[rgb]{ .416,  .753,  .482}0,943\% & \cellcolor[rgb]{ .976,  .49,  .435}2,68\% & \cellcolor[rgb]{ .553,  .792,  .494}-0,313 & \cellcolor[rgb]{ .969,  .91,  .514}0,080 & \cellcolor[rgb]{ .984,  .659,  .467}0,955 & \cellcolor[rgb]{ .871,  .882,  .51}0,214 & \cellcolor[rgb]{ .976,  .51,  .435}0,001\% & \cellcolor[rgb]{ .973,  .463,  .427}-0,018 & \cellcolor[rgb]{ .439,  .757,  .482}1,35\% & \cellcolor[rgb]{ .973,  .475,  .431}1,088 & 9 \\
		\midrule
		\textbf{MV1} & \cellcolor[rgb]{ .984,  .627,  .459}0,033\% & \cellcolor[rgb]{ .827,  .871,  .506}1,077\% & \cellcolor[rgb]{ .98,  .62,  .459}3,07\% & \cellcolor[rgb]{ .992,  .843,  .502}-0,351 & \cellcolor[rgb]{ .973,  .412,  .42}0,097 & \cellcolor[rgb]{ .988,  .757,  .482}0,960 & \cellcolor[rgb]{ .984,  .592,  .455}0,406 & \cellcolor[rgb]{ .984,  .659,  .467}0,005\% & \cellcolor[rgb]{ .98,  .604,  .455}-0,006 & \cellcolor[rgb]{ .976,  .914,  .514}1,65\% & \cellcolor[rgb]{ .98,  .557,  .447}1,097 & 9 \\
		\midrule
		\textbf{MAD1} & \cellcolor[rgb]{ .984,  .69,  .471}0,036\% & \cellcolor[rgb]{ .875,  .886,  .51}1,093\% & \cellcolor[rgb]{ .984,  .69,  .471}3,27\% & \cellcolor[rgb]{ .992,  .847,  .502}-0,351 & \cellcolor[rgb]{ .98,  .51,  .439}0,094 & \cellcolor[rgb]{ .992,  .827,  .498}0,964 & \cellcolor[rgb]{ .98,  .545,  .447}0,426 & \cellcolor[rgb]{ .988,  .737,  .482}0,007\% & \cellcolor[rgb]{ .984,  .655,  .463}-0,002 & \cellcolor[rgb]{ .996,  .816,  .498}1,70\% & \cellcolor[rgb]{ .98,  .624,  .459}1,104 & 10 \\
		\midrule
		\textbf{CVaR1} & \cellcolor[rgb]{ .988,  .718,  .478}0,037\% & \cellcolor[rgb]{ .839,  .875,  .506}1,081\% & \cellcolor[rgb]{ .988,  .741,  .482}3,42\% & \cellcolor[rgb]{ .388,  .745,  .482}-0,304 & \cellcolor[rgb]{ .992,  .725,  .482}0,087 & \cellcolor[rgb]{ .733,  .847,  .506}0,982 & \cellcolor[rgb]{ .973,  .412,  .42}0,480 & \cellcolor[rgb]{ .996,  .859,  .502}0,011\% & \cellcolor[rgb]{ .984,  .678,  .471}0,000 & \cellcolor[rgb]{ 1,  .922,  .518}1,66\% & \cellcolor[rgb]{ .984,  .655,  .467}1,108 & 7 \\
		\midrule
		\textbf{Expe1} & \cellcolor[rgb]{ .984,  .639,  .463}0,034\% & \cellcolor[rgb]{ .749,  .847,  .502}1,052\% & \cellcolor[rgb]{ .984,  .663,  .467}3,19\% & \cellcolor[rgb]{ .953,  .91,  .518}-0,334 & \cellcolor[rgb]{ .976,  .443,  .427}0,096 & \cellcolor[rgb]{ .992,  .816,  .494}0,963 & \cellcolor[rgb]{ .98,  .553,  .447}0,422 & \cellcolor[rgb]{ .988,  .722,  .478}0,007\% & \cellcolor[rgb]{ .98,  .616,  .459}-0,005 & \cellcolor[rgb]{ .867,  .882,  .51}1,59\% & \cellcolor[rgb]{ .98,  .588,  .451}1,101 & 13 \\
		\midrule
		\textbf{EW} & \cellcolor[rgb]{ .643,  .82,  .498}0,051\% & \cellcolor[rgb]{ .973,  .412,  .42}1,239\% & \cellcolor[rgb]{ .831,  .875,  .51}4,09\% & \cellcolor[rgb]{ .98,  .62,  .459}-0,392 & \cellcolor[rgb]{ .773,  .855,  .502}0,076 & \cellcolor[rgb]{ 1,  .922,  .518}0,969 & -     & \cellcolor[rgb]{ .929,  .902,  .514}0,014\% & \cellcolor[rgb]{ .388,  .745,  .482}0,066 & \cellcolor[rgb]{ .98,  .522,  .443}1,79\% & \cellcolor[rgb]{ .639,  .82,  .498}1,143 & 28 \\
		\midrule
		\textbf{RP} & \cellcolor[rgb]{ 1,  .922,  .518}0,046\% & \cellcolor[rgb]{ .996,  .804,  .498}1,158\% & \cellcolor[rgb]{ .961,  .91,  .518}3,99\% & \cellcolor[rgb]{ .988,  .725,  .478}-0,373 & \cellcolor[rgb]{ .518,  .78,  .486}0,072 & \cellcolor[rgb]{ .992,  .82,  .498}0,964 & \cellcolor[rgb]{ .388,  .745,  .482}0,017 & \cellcolor[rgb]{ .996,  .89,  .51}0,012\% & \cellcolor[rgb]{ .706,  .839,  .502}0,043 & \cellcolor[rgb]{ 1,  .882,  .514}1,68\% & \cellcolor[rgb]{ .855,  .882,  .51}1,140 & 28 \\
		\midrule
		\textbf{Index} & \cellcolor[rgb]{ .988,  .714,  .475}0,037\% & \cellcolor[rgb]{ .98,  .49,  .435}1,223\% & \cellcolor[rgb]{ .98,  .608,  .455}3,02\% & \cellcolor[rgb]{ .973,  .412,  .42}-0,431 & \cellcolor[rgb]{ .98,  .494,  .435}0,095 & \cellcolor[rgb]{ .973,  .412,  .42}0,941 & -     & -     & -     & \cellcolor[rgb]{ .98,  .518,  .443}1,79\% & \cellcolor[rgb]{ .98,  .612,  .455}1,103 & - \\
		\midrule
		\textbf{DRvol0} & \cellcolor[rgb]{ .953,  .91,  .518}0,047\% & \cellcolor[rgb]{ .988,  .69,  .475}1,182\% & \cellcolor[rgb]{ 1,  .922,  .518}3,96\% & \cellcolor[rgb]{ .988,  .757,  .486}-0,367 & \cellcolor[rgb]{ 1,  .922,  .518}0,080 & \cellcolor[rgb]{ .784,  .863,  .506}0,979 & \cellcolor[rgb]{ .678,  .827,  .498}0,136 & \cellcolor[rgb]{ .945,  .906,  .518}0,014\% & \cellcolor[rgb]{ .976,  .918,  .518}0,022 & \cellcolor[rgb]{ 1,  .875,  .51}1,68\% & \cellcolor[rgb]{ 1,  .922,  .518}1,138 & 15 \\
		\midrule
		\textbf{DRMAD0} & \cellcolor[rgb]{ .867,  .882,  .51}0,048\% & \cellcolor[rgb]{ 1,  .859,  .506}1,147\% & \cellcolor[rgb]{ .71,  .839,  .502}4,18\% & \cellcolor[rgb]{ 1,  .922,  .518}-0,337 & \cellcolor[rgb]{ .388,  .745,  .482}0,070 & \cellcolor[rgb]{ .886,  .89,  .514}0,974 & \cellcolor[rgb]{ .776,  .855,  .502}0,177 & \cellcolor[rgb]{ .831,  .875,  .51}0,015\% & \cellcolor[rgb]{ .918,  .898,  .514}0,027 & \cellcolor[rgb]{ .89,  .89,  .51}1,60\% & \cellcolor[rgb]{ .427,  .757,  .486}1,147 & 17 \\
		\midrule
		\textbf{DRCVaR0} & \cellcolor[rgb]{ .698,  .835,  .502}0,050\% & \cellcolor[rgb]{ 1,  .922,  .518}1,133\% & \cellcolor[rgb]{ .388,  .745,  .482}4,41\% & \cellcolor[rgb]{ .988,  .741,  .482}-0,370 & \cellcolor[rgb]{ .706,  .835,  .498}0,075 & \cellcolor[rgb]{ .463,  .769,  .49}0,995 & \cellcolor[rgb]{ .992,  .769,  .49}0,332 & \cellcolor[rgb]{ .569,  .8,  .494}0,019\% & \cellcolor[rgb]{ .949,  .91,  .518}0,024 & \cellcolor[rgb]{ .898,  .89,  .51}1,61\% & \cellcolor[rgb]{ .388,  .745,  .482}1,147 & 11 \\
		\midrule
		\textbf{DRExpe0} & \cellcolor[rgb]{ .98,  .918,  .518}0,046\% & \cellcolor[rgb]{ .941,  .902,  .514}1,115\% & \cellcolor[rgb]{ .718,  .843,  .502}4,17\% & \cellcolor[rgb]{ .796,  .863,  .506}-0,326 & \cellcolor[rgb]{ .725,  .839,  .498}0,076 & \cellcolor[rgb]{ .725,  .843,  .502}0,982 & \cellcolor[rgb]{ .969,  .91,  .514}0,256 & \cellcolor[rgb]{ .796,  .863,  .506}0,016\% & \cellcolor[rgb]{ .996,  .902,  .514}0,019 & \cellcolor[rgb]{ .847,  .875,  .506}1,58\% & \cellcolor[rgb]{ .682,  .831,  .502}1,143 & 8 \\
		\midrule
		\textbf{DRvol1} & \cellcolor[rgb]{ .49,  .776,  .49}0,053\% & \cellcolor[rgb]{ .976,  .447,  .427}1,232\% & \cellcolor[rgb]{ .592,  .804,  .494}4,26\% & \cellcolor[rgb]{ .992,  .788,  .49}-0,362 & \cellcolor[rgb]{ .996,  .835,  .502}0,083 & \cellcolor[rgb]{ .675,  .827,  .502}0,985 & \cellcolor[rgb]{ 1,  .922,  .518}0,268 & \cellcolor[rgb]{ .569,  .8,  .494}0,019\% & \cellcolor[rgb]{ .906,  .894,  .514}0,028 & \cellcolor[rgb]{ .98,  .514,  .439}1,79\% & \cellcolor[rgb]{ .808,  .867,  .51}1,141 & 11 \\
		\midrule
		\textbf{DRMAD1} & \cellcolor[rgb]{ .388,  .745,  .482}0,054\% & \cellcolor[rgb]{ .98,  .502,  .439}1,221\% & \cellcolor[rgb]{ .4,  .749,  .486}4,40\% & \cellcolor[rgb]{ .741,  .847,  .506}-0,323 & \cellcolor[rgb]{ .647,  .82,  .494}0,074 & \cellcolor[rgb]{ .714,  .839,  .502}0,983 & \cellcolor[rgb]{ .996,  .827,  .502}0,307 & \cellcolor[rgb]{ .471,  .769,  .49}0,021\% & \cellcolor[rgb]{ .871,  .886,  .514}0,030 & \cellcolor[rgb]{ .973,  .412,  .42}1,82\% & \cellcolor[rgb]{ .514,  .784,  .49}1,145 & 13 \\
		\midrule
		\textbf{DRCVaR1} & \cellcolor[rgb]{ .408,  .753,  .486}0,054\% & \cellcolor[rgb]{ .98,  .529,  .443}1,215\% & \cellcolor[rgb]{ .392,  .749,  .486}4,41\% & \cellcolor[rgb]{ .984,  .678,  .471}-0,382 & \cellcolor[rgb]{ .992,  .761,  .486}0,086 & \cellcolor[rgb]{ .388,  .745,  .482}0,998 & \cellcolor[rgb]{ .984,  .612,  .459}0,397 & \cellcolor[rgb]{ .388,  .745,  .482}0,022\% & \cellcolor[rgb]{ .929,  .902,  .514}0,026 & \cellcolor[rgb]{ .976,  .427,  .424}1,82\% & \cellcolor[rgb]{ .702,  .835,  .502}1,142 & 9 \\
		\midrule
		\textbf{DRExpe1} & \cellcolor[rgb]{ .749,  .851,  .506}0,049\% & \cellcolor[rgb]{ .996,  .808,  .498}1,157\% & \cellcolor[rgb]{ .588,  .804,  .494}4,26\% & \cellcolor[rgb]{ .898,  .894,  .514}-0,331 & \cellcolor[rgb]{ .878,  .886,  .51}0,078 & \cellcolor[rgb]{ .706,  .839,  .502}0,983 & \cellcolor[rgb]{ .992,  .761,  .49}0,335 & \cellcolor[rgb]{ .624,  .816,  .498}0,018\% & \cellcolor[rgb]{ .984,  .918,  .518}0,022 & \cellcolor[rgb]{ .988,  .675,  .471}1,74\% & \cellcolor[rgb]{ .871,  .886,  .514}1,140 & 11 \\
		\bottomrule
	\end{tabular}}%
	\label{tab:CompRes_DJ}%
\end{table}%
%


%
%



In Table \ref{tab:CompRes_EUXX50} we report the computational results obtained with the EuroStoxx 50 data set.
We observe again that the return-diversification models typically outperform the other considered strategies in almost all performance measures.
In particular, DRVol0, DRMAD0, DRCVaR0, and DRExpe0 achieve significant performances in terms of $\mu^{out}$, $\sigma^{out}$,  Sharpe, MDD, Ulcer, and Omega, thus representing valid solutions to both risk-averse and risk-lover investors.
%
%
Furthermore, these strategies provide attractive performance relative to the benchmark index, as indicated by the Jensen's Alpha and the Information ratio.
\begin{table}[htbp]
	\centering
	\caption{Out-of-sample performance results on the EuroStoxx50 dataset}
	\resizebox{0.90\textwidth}{!}{\begin{tabular}{|l|c|c|c|c|c|c|c|c|c|c|c|c|}
		\toprule
		\multicolumn{1}{|c|}{\textbf{Approach}} & $\boldsymbol{\mu^{out}}$ & $\boldsymbol{\sigma^{out}}$ & \textbf{Sharpe} & \textbf{MDD} & \textbf{Ulcer} & \textbf{Rachev10} & \textbf{Turn} & \textbf{AlphaJ} & \textbf{InfoR} & \textbf{VaR5} & \textbf{Omega} & \textbf{ave \#} \\
		\midrule
		\textbf{MV0} & \cellcolor[rgb]{ .996,  .906,  .514}0,046\% & \cellcolor[rgb]{ .392,  .745,  .482}0,982\% & \cellcolor[rgb]{ .737,  .847,  .506}4,69\% & \cellcolor[rgb]{ .498,  .78,  .49}-0,311 & \cellcolor[rgb]{ .482,  .773,  .486}0,073 & \cellcolor[rgb]{ .969,  .914,  .518}0,966 & \cellcolor[rgb]{ .627,  .812,  .494}0,118 & \cellcolor[rgb]{ 1,  .922,  .518}0,036\% & \cellcolor[rgb]{ .996,  .882,  .51}0,033 & \cellcolor[rgb]{ .392,  .745,  .482}1,47\% & \cellcolor[rgb]{ .69,  .835,  .502}1,147 & 11 \\
		\midrule
		\textbf{MAD0} & \cellcolor[rgb]{ .992,  .831,  .498}0,042\% & \cellcolor[rgb]{ .541,  .788,  .49}1,024\% & \cellcolor[rgb]{ .996,  .914,  .514}4,10\% & \cellcolor[rgb]{ .992,  .922,  .518}-0,369 & \cellcolor[rgb]{ .784,  .859,  .502}0,088 & \cellcolor[rgb]{ 1,  .922,  .518}0,963 & \cellcolor[rgb]{ 1,  .859,  .506}0,294 & \cellcolor[rgb]{ .988,  .749,  .482}0,031\% & \cellcolor[rgb]{ .992,  .776,  .49}0,030 & \cellcolor[rgb]{ .459,  .765,  .486}1,51\% & \cellcolor[rgb]{ .957,  .91,  .518}1,128 & 16 \\
		\midrule
		\textbf{CVaR0} & \cellcolor[rgb]{ .992,  .922,  .518}0,047\% & \cellcolor[rgb]{ .482,  .769,  .486}1,007\% & \cellcolor[rgb]{ .737,  .847,  .506}4,69\% & \cellcolor[rgb]{ .988,  .765,  .486}-0,394 & \cellcolor[rgb]{ .761,  .851,  .502}0,087 & \cellcolor[rgb]{ .988,  .737,  .478}0,950 & \cellcolor[rgb]{ .961,  .91,  .514}0,257 & \cellcolor[rgb]{ .949,  .91,  .518}0,037\% & \cellcolor[rgb]{ .996,  .886,  .51}0,033 & \cellcolor[rgb]{ .522,  .78,  .486}1,54\% & \cellcolor[rgb]{ .714,  .839,  .502}1,145 & 9 \\
		\midrule
		\textbf{Expe0} & \cellcolor[rgb]{ .992,  .835,  .498}0,042\% & \cellcolor[rgb]{ .388,  .745,  .482}0,980\% & \cellcolor[rgb]{ .922,  .902,  .514}4,30\% & \cellcolor[rgb]{ .388,  .745,  .482}-0,298 & \cellcolor[rgb]{ .541,  .788,  .49}0,076 & \cellcolor[rgb]{ .957,  .91,  .518}0,967 & \cellcolor[rgb]{ .839,  .875,  .506}0,206 & \cellcolor[rgb]{ .992,  .784,  .49}0,032\% & \cellcolor[rgb]{ .988,  .706,  .475}0,027 & \cellcolor[rgb]{ .388,  .745,  .482}1,46\% & \cellcolor[rgb]{ .894,  .89,  .514}1,133 & 10 \\
		\midrule
		\textbf{MV1} & \cellcolor[rgb]{ .992,  .78,  .49}0,039\% & \cellcolor[rgb]{ .937,  .902,  .514}1,134\% & \cellcolor[rgb]{ .992,  .796,  .49}3,43\% & \cellcolor[rgb]{ .988,  .745,  .482}-0,397 & \cellcolor[rgb]{ .992,  .718,  .478}0,127 & \cellcolor[rgb]{ .973,  .443,  .424}0,930 & \cellcolor[rgb]{ .988,  .639,  .463}0,367 & \cellcolor[rgb]{ .98,  .624,  .459}0,028\% & \cellcolor[rgb]{ .98,  .58,  .451}0,023 & \cellcolor[rgb]{ 1,  .894,  .514}1,84\% & \cellcolor[rgb]{ .992,  .78,  .49}1,101 & 8 \\
		\midrule
		\textbf{MAD1} & \cellcolor[rgb]{ .984,  .682,  .471}0,033\% & \cellcolor[rgb]{ 1,  .922,  .518}1,152\% & \cellcolor[rgb]{ .988,  .702,  .475}2,91\% & \cellcolor[rgb]{ .976,  .494,  .435}-0,436 & \cellcolor[rgb]{ .98,  .561,  .451}0,148 & \cellcolor[rgb]{ .973,  .412,  .42}0,927 & \cellcolor[rgb]{ .98,  .537,  .447}0,400 & \cellcolor[rgb]{ .973,  .412,  .42}0,022\% & \cellcolor[rgb]{ .973,  .412,  .42}0,018 & \cellcolor[rgb]{ 1,  .906,  .518}1,84\% & \cellcolor[rgb]{ .984,  .69,  .471}1,086 & 9 \\
		\midrule
		\textbf{CVaR1} & \cellcolor[rgb]{ .98,  .918,  .518}0,048\% & \cellcolor[rgb]{ 1,  .922,  .518}1,152\% & \cellcolor[rgb]{ 1,  .922,  .518}4,13\% & \cellcolor[rgb]{ .992,  .824,  .498}-0,385 & \cellcolor[rgb]{ .992,  .757,  .486}0,122 & \cellcolor[rgb]{ .98,  .627,  .459}0,943 & \cellcolor[rgb]{ .973,  .412,  .42}0,440 & \cellcolor[rgb]{ .976,  .918,  .518}0,037\% & \cellcolor[rgb]{ .992,  .831,  .498}0,031 & \cellcolor[rgb]{ 1,  .922,  .518}1,82\% & \cellcolor[rgb]{ .996,  .906,  .514}1,123 & 7 \\
		\midrule
		\textbf{Expe1} & \cellcolor[rgb]{ .984,  .698,  .475}0,034\% & \cellcolor[rgb]{ .906,  .894,  .51}1,126\% & \cellcolor[rgb]{ .988,  .729,  .478}3,06\% & \cellcolor[rgb]{ .98,  .561,  .447}-0,425 & \cellcolor[rgb]{ .98,  .557,  .451}0,148 & \cellcolor[rgb]{ .98,  .569,  .447}0,938 & \cellcolor[rgb]{ .98,  .529,  .443}0,402 & \cellcolor[rgb]{ .973,  .467,  .427}0,024\% & \cellcolor[rgb]{ .973,  .416,  .42}0,018 & \cellcolor[rgb]{ .922,  .898,  .51}1,78\% & \cellcolor[rgb]{ .988,  .718,  .478}1,091 & 13 \\
		\midrule
		\textbf{EW} & \cellcolor[rgb]{ 1,  .922,  .518}0,047\% & \cellcolor[rgb]{ .976,  .451,  .427}1,405\% & \cellcolor[rgb]{ .992,  .78,  .49}3,34\% & \cellcolor[rgb]{ .976,  .525,  .439}-0,431 & \cellcolor[rgb]{ 1,  .878,  .51}0,105 & \cellcolor[rgb]{ .737,  .847,  .506}0,981 & -     & \cellcolor[rgb]{ .984,  .686,  .471}0,030\% & \cellcolor[rgb]{ .388,  .745,  .482}0,104 & \cellcolor[rgb]{ .98,  .498,  .439}2,13\% & \cellcolor[rgb]{ .992,  .808,  .494}1,106 & 46 \\
		\midrule
		\textbf{RP} & \cellcolor[rgb]{ .996,  .898,  .514}0,046\% & \cellcolor[rgb]{ .988,  .667,  .471}1,289\% & \cellcolor[rgb]{ .992,  .82,  .498}3,55\% & \cellcolor[rgb]{ .988,  .753,  .482}-0,396 & \cellcolor[rgb]{ .906,  .894,  .51}0,094 & \cellcolor[rgb]{ .882,  .89,  .514}0,971 & \cellcolor[rgb]{ .388,  .745,  .482}0,017 & \cellcolor[rgb]{ .984,  .698,  .475}0,030\% & \cellcolor[rgb]{ .541,  .792,  .494}0,086 & \cellcolor[rgb]{ .992,  .769,  .49}1,93\% & \cellcolor[rgb]{ .996,  .851,  .502}1,113 & 46 \\
		\midrule
		\textbf{Index} & \cellcolor[rgb]{ .973,  .412,  .42}0,018\% & \cellcolor[rgb]{ .973,  .412,  .42}1,425\% & \cellcolor[rgb]{ .973,  .412,  .42}1,25\% & \cellcolor[rgb]{ .973,  .412,  .42}-0,449 & \cellcolor[rgb]{ .973,  .412,  .42}0,167 & \cellcolor[rgb]{ .992,  .816,  .494}0,956 & -     & -     & -     & \cellcolor[rgb]{ .973,  .412,  .42}2,19\% & \cellcolor[rgb]{ .973,  .412,  .42}1,038 & - \\
		\midrule
		\textbf{DRvol0} & \cellcolor[rgb]{ .494,  .776,  .49}0,060\% & \cellcolor[rgb]{ .796,  .863,  .506}1,095\% & \cellcolor[rgb]{ .388,  .745,  .482}5,44\% & \cellcolor[rgb]{ .667,  .827,  .502}-0,331 & \cellcolor[rgb]{ .42,  .753,  .482}0,070 & \cellcolor[rgb]{ .627,  .816,  .498}0,989 & \cellcolor[rgb]{ .631,  .812,  .494}0,119 & \cellcolor[rgb]{ .49,  .776,  .49}0,048\% & \cellcolor[rgb]{ .835,  .875,  .51}0,053 & \cellcolor[rgb]{ .82,  .867,  .506}1,72\% & \cellcolor[rgb]{ .388,  .745,  .482}1,168 & 15 \\
		\midrule
		\textbf{DRMAD0} & \cellcolor[rgb]{ .639,  .82,  .498}0,056\% & \cellcolor[rgb]{ .812,  .867,  .506}1,099\% & \cellcolor[rgb]{ .549,  .792,  .494}5,10\% & \cellcolor[rgb]{ .871,  .886,  .514}-0,355 & \cellcolor[rgb]{ .561,  .792,  .49}0,077 & \cellcolor[rgb]{ .816,  .871,  .51}0,976 & \cellcolor[rgb]{ .808,  .863,  .506}0,193 & \cellcolor[rgb]{ .663,  .827,  .502}0,044\% & \cellcolor[rgb]{ .835,  .875,  .51}0,053 & \cellcolor[rgb]{ .765,  .851,  .502}1,68\% & \cellcolor[rgb]{ .514,  .784,  .49}1,160 & 18 \\
		\midrule
		\textbf{DRCVaR0} & \cellcolor[rgb]{ .388,  .745,  .482}0,062\% & \cellcolor[rgb]{ 1,  .898,  .514}1,164\% & \cellcolor[rgb]{ .435,  .761,  .486}5,34\% & \cellcolor[rgb]{ .694,  .835,  .502}-0,334 & \cellcolor[rgb]{ .427,  .757,  .482}0,070 & \cellcolor[rgb]{ .388,  .745,  .482}1,004 & \cellcolor[rgb]{ .988,  .918,  .514}0,268 & \cellcolor[rgb]{ .388,  .745,  .482}0,050\% & \cellcolor[rgb]{ .843,  .878,  .51}0,052 & \cellcolor[rgb]{ .937,  .902,  .514}1,79\% & \cellcolor[rgb]{ .443,  .761,  .486}1,164 & 10 \\
		\midrule
		\textbf{DRExpe0} & \cellcolor[rgb]{ .576,  .8,  .494}0,058\% & \cellcolor[rgb]{ .733,  .843,  .502}1,077\% & \cellcolor[rgb]{ .435,  .761,  .486}5,35\% & \cellcolor[rgb]{ .475,  .773,  .49}-0,308 & \cellcolor[rgb]{ .388,  .745,  .482}0,068 & \cellcolor[rgb]{ .471,  .769,  .49}0,999 & \cellcolor[rgb]{ .835,  .875,  .506}0,205 & \cellcolor[rgb]{ .561,  .796,  .494}0,046\% & \cellcolor[rgb]{ .871,  .886,  .514}0,049 & \cellcolor[rgb]{ .698,  .831,  .498}1,65\% & \cellcolor[rgb]{ .443,  .761,  .486}1,165 & 8 \\
		\midrule
		\textbf{DRvol1} & \cellcolor[rgb]{ .816,  .871,  .51}0,052\% & \cellcolor[rgb]{ .996,  .78,  .494}1,228\% & \cellcolor[rgb]{ .965,  .914,  .518}4,20\% & \cellcolor[rgb]{ .918,  .898,  .514}-0,360 & \cellcolor[rgb]{ 1,  .898,  .514}0,102 & \cellcolor[rgb]{ .988,  .753,  .482}0,952 & \cellcolor[rgb]{ 1,  .922,  .518}0,273 & \cellcolor[rgb]{ .867,  .886,  .514}0,039\% & \cellcolor[rgb]{ .953,  .91,  .518}0,040 & \cellcolor[rgb]{ .996,  .784,  .494}1,92\% & \cellcolor[rgb]{ 1,  .922,  .518}1,125 & 11 \\
		\midrule
		\textbf{DRMAD1} & \cellcolor[rgb]{ .996,  .902,  .514}0,046\% & \cellcolor[rgb]{ .992,  .757,  .486}1,240\% & \cellcolor[rgb]{ .992,  .843,  .502}3,70\% & \cellcolor[rgb]{ .992,  .792,  .49}-0,389 & \cellcolor[rgb]{ .996,  .804,  .498}0,115 & \cellcolor[rgb]{ .984,  .667,  .467}0,945 & \cellcolor[rgb]{ .992,  .714,  .478}0,342 & \cellcolor[rgb]{ .992,  .812,  .494}0,033\% & \cellcolor[rgb]{ .996,  .918,  .514}0,034 & \cellcolor[rgb]{ .992,  .737,  .482}1,96\% & \cellcolor[rgb]{ .992,  .827,  .498}1,110 & 13 \\
		\midrule
		\textbf{DRCVaR1} & \cellcolor[rgb]{ .494,  .776,  .49}0,060\% & \cellcolor[rgb]{ .992,  .718,  .478}1,262\% & \cellcolor[rgb]{ .725,  .843,  .502}4,72\% & \cellcolor[rgb]{ .651,  .824,  .498}-0,329 & \cellcolor[rgb]{ 1,  .922,  .518}0,099 & \cellcolor[rgb]{ .871,  .886,  .514}0,972 & \cellcolor[rgb]{ .988,  .686,  .475}0,351 & \cellcolor[rgb]{ .525,  .788,  .494}0,047\% & \cellcolor[rgb]{ .898,  .894,  .514}0,046 & \cellcolor[rgb]{ .992,  .749,  .486}1,95\% & \cellcolor[rgb]{ .78,  .859,  .506}1,141 & 8 \\
		\midrule
		\textbf{DRExpe1} & \cellcolor[rgb]{ .965,  .914,  .518}0,048\% & \cellcolor[rgb]{ .996,  .835,  .502}1,199\% & \cellcolor[rgb]{ .996,  .898,  .51}4,00\% & \cellcolor[rgb]{ 1,  .922,  .518}-0,370 & \cellcolor[rgb]{ 1,  .875,  .51}0,106 & \cellcolor[rgb]{ .988,  .745,  .482}0,951 & \cellcolor[rgb]{ .992,  .714,  .478}0,342 & \cellcolor[rgb]{ .996,  .918,  .514}0,036\% & \cellcolor[rgb]{ 1,  .922,  .518}0,034 & \cellcolor[rgb]{ 1,  .871,  .51}1,86\% & \cellcolor[rgb]{ .996,  .882,  .51}1,119 & 10 \\
		\bottomrule
	\end{tabular}}%
	\label{tab:CompRes_EUXX50}%
\end{table}%
%



Table \ref{tab:CompRes_FTSE100} provides the out-of-sample performance results on FTSE100.
As expected, the minimum risk strategies (i.e., MV0, MAD0, CVaR0, and Expe0) seem to be attractive to risk-averse investors, as they show lowest values of $\sigma^{out}$ and VaR5.
%
When focusing on gain-risk performances, the maximum diversification ratio portfolios with and without the return constraint attain the best results in terms of $\mu^{out}$, Jensen's Alpha, Sharpe, Information, and Omega ratios.
For the Turnover, RP, MV0, and DRvol0 show low values, which means lower transaction costs.
%
%
%
%
\noindent
Furthermore, we also observe that imposing the return constraint generally improves the performance of the optimal portfolios.
Indeed, DRvol1 and DRMAD1 show the highest $\mu^{out}$, Sharpe ratio, and Jensen's Alpha.
%
\begin{table}[htbp]
	\centering
	\caption{Out-of-sample performance results on the FTSE100 dataset}
	\resizebox{0.90\textwidth}{!}{\begin{tabular}{|l|c|c|c|c|c|c|c|c|c|c|c|c|}
		\toprule
		\multicolumn{1}{|c|}{\textbf{Approach}} & $\boldsymbol{\mu^{out}}$ & $\boldsymbol{\sigma^{out}}$ & \textbf{Sharpe} & \textbf{MDD} & \textbf{Ulcer} & \textbf{Rachev10} & \textbf{Turn} & \textbf{AlphaJ} & \textbf{InfoR} & \textbf{VaR5} & \textbf{Omega} & \textbf{ave \#} \\
		\midrule
		\textbf{MV0} & \cellcolor[rgb]{ .996,  .867,  .506}0,044\% & \cellcolor[rgb]{ .388,  .745,  .482}0,871\% & \cellcolor[rgb]{ 1,  .922,  .518}5,09\% & \cellcolor[rgb]{ .388,  .745,  .482}-0,305 & \cellcolor[rgb]{ .388,  .745,  .482}0,067 & \cellcolor[rgb]{ .937,  .906,  .518}0,998 & \cellcolor[rgb]{ .71,  .835,  .498}0,197 & \cellcolor[rgb]{ .992,  .8,  .494}0,034\% & \cellcolor[rgb]{ .992,  .78,  .49}0,039 & \cellcolor[rgb]{ .388,  .745,  .482}1,27\% & \cellcolor[rgb]{ .824,  .871,  .51}1,162 & 22 \\
		\midrule
		\textbf{MAD0} & \cellcolor[rgb]{ .988,  .733,  .478}0,037\% & \cellcolor[rgb]{ .463,  .765,  .486}0,886\% & \cellcolor[rgb]{ .992,  .78,  .49}4,12\% & \cellcolor[rgb]{ .788,  .863,  .506}-0,327 & \cellcolor[rgb]{ .604,  .808,  .494}0,076 & \cellcolor[rgb]{ .984,  .631,  .459}0,966 & \cellcolor[rgb]{ .824,  .871,  .506}0,259 & \cellcolor[rgb]{ .973,  .412,  .42}0,025\% & \cellcolor[rgb]{ .973,  .412,  .42}0,029 & \cellcolor[rgb]{ .412,  .749,  .482}1,28\% & \cellcolor[rgb]{ .992,  .804,  .494}1,131 & 25 \\
		\midrule
		\textbf{CVaR0} & \cellcolor[rgb]{ .992,  .824,  .498}0,042\% & \cellcolor[rgb]{ .49,  .773,  .486}0,892\% & \cellcolor[rgb]{ .996,  .863,  .506}4,70\% & \cellcolor[rgb]{ .463,  .769,  .49}-0,309 & \cellcolor[rgb]{ .988,  .643,  .467}0,101 & \cellcolor[rgb]{ .718,  .843,  .502}1,006 & \cellcolor[rgb]{ 1,  .89,  .514}0,364 & \cellcolor[rgb]{ .984,  .694,  .471}0,031\% & \cellcolor[rgb]{ .98,  .565,  .447}0,033 & \cellcolor[rgb]{ .412,  .749,  .482}1,28\% & \cellcolor[rgb]{ .996,  .878,  .506}1,146 & 16 \\
		\midrule
		\textbf{Expe0} & \cellcolor[rgb]{ .996,  .898,  .51}0,046\% & \cellcolor[rgb]{ .51,  .78,  .486}0,896\% & \cellcolor[rgb]{ .929,  .902,  .514}5,17\% & \cellcolor[rgb]{ 1,  .922,  .518}-0,338 & \cellcolor[rgb]{ .937,  .902,  .514}0,089 & \cellcolor[rgb]{ .847,  .878,  .51}1,001 & \cellcolor[rgb]{ .969,  .91,  .514}0,336 & \cellcolor[rgb]{ .996,  .91,  .514}0,036\% & \cellcolor[rgb]{ .988,  .749,  .482}0,038 & \cellcolor[rgb]{ .42,  .753,  .482}1,28\% & \cellcolor[rgb]{ .749,  .851,  .506}1,165 & 19 \\
		\midrule
		\textbf{MV1} & \cellcolor[rgb]{ .773,  .859,  .506}0,052\% & \cellcolor[rgb]{ .98,  .914,  .514}0,992\% & \cellcolor[rgb]{ .851,  .878,  .51}5,26\% & \cellcolor[rgb]{ .557,  .796,  .494}-0,314 & \cellcolor[rgb]{ 1,  .867,  .51}0,093 & \cellcolor[rgb]{ .992,  .804,  .494}0,984 & \cellcolor[rgb]{ .996,  .78,  .494}0,407 & \cellcolor[rgb]{ .745,  .851,  .506}0,041\% & \cellcolor[rgb]{ .996,  .914,  .514}0,043 & \cellcolor[rgb]{ 1,  .918,  .518}1,47\% & \cellcolor[rgb]{ .835,  .875,  .51}1,161 & 14 \\
		\midrule
		\textbf{MAD1} & \cellcolor[rgb]{ 1,  .922,  .518}0,048\% & \cellcolor[rgb]{ .984,  .914,  .514}0,992\% & \cellcolor[rgb]{ .996,  .878,  .506}4,79\% & \cellcolor[rgb]{ .635,  .816,  .498}-0,318 & \cellcolor[rgb]{ .996,  .788,  .494}0,096 & \cellcolor[rgb]{ .988,  .725,  .478}0,976 & \cellcolor[rgb]{ .988,  .686,  .475}0,444 & \cellcolor[rgb]{ .992,  .922,  .518}0,036\% & \cellcolor[rgb]{ .988,  .733,  .478}0,038 & \cellcolor[rgb]{ .969,  .91,  .514}1,46\% & \cellcolor[rgb]{ .996,  .886,  .51}1,148 & 15 \\
		\midrule
		\textbf{CVaR1} & \cellcolor[rgb]{ .996,  .847,  .502}0,043\% & \cellcolor[rgb]{ 1,  .922,  .518}0,995\% & \cellcolor[rgb]{ .992,  .816,  .494}4,36\% & \cellcolor[rgb]{ .976,  .529,  .439}-0,364 & \cellcolor[rgb]{ .973,  .412,  .42}0,110 & \cellcolor[rgb]{ 1,  .922,  .518}0,996 & \cellcolor[rgb]{ .973,  .412,  .42}0,551 & \cellcolor[rgb]{ .988,  .761,  .486}0,033\% & \cellcolor[rgb]{ .973,  .459,  .427}0,030 & \cellcolor[rgb]{ 1,  .914,  .518}1,47\% & \cellcolor[rgb]{ .992,  .804,  .494}1,131 & 12 \\
		\midrule
		\textbf{Expe1} & \cellcolor[rgb]{ .996,  .906,  .514}0,047\% & \cellcolor[rgb]{ .996,  .843,  .506}1,025\% & \cellcolor[rgb]{ .992,  .843,  .502}4,55\% & \cellcolor[rgb]{ .749,  .851,  .506}-0,325 & \cellcolor[rgb]{ .976,  .455,  .431}0,108 & \cellcolor[rgb]{ .996,  .906,  .514}0,995 & \cellcolor[rgb]{ .988,  .639,  .467}0,463 & \cellcolor[rgb]{ .996,  .898,  .514}0,036\% & \cellcolor[rgb]{ .98,  .592,  .451}0,034 & \cellcolor[rgb]{ 1,  .922,  .518}1,47\% & \cellcolor[rgb]{ .996,  .851,  .502}1,141 & 22 \\
		\midrule
		\textbf{EW} & \cellcolor[rgb]{ .992,  .922,  .518}0,048\% & \cellcolor[rgb]{ .973,  .412,  .42}1,181\% & \cellcolor[rgb]{ .988,  .769,  .486}4,04\% & \cellcolor[rgb]{ .973,  .443,  .424}-0,369 & \cellcolor[rgb]{ .816,  .867,  .506}0,084 & \cellcolor[rgb]{ .996,  .851,  .502}0,989 & -     & \cellcolor[rgb]{ .984,  .667,  .467}0,031\% & \cellcolor[rgb]{ .388,  .745,  .482}0,078 & \cellcolor[rgb]{ .973,  .412,  .42}1,76\% & \cellcolor[rgb]{ .992,  .796,  .49}1,129 & 82 \\
		\midrule
		\textbf{RP} & \cellcolor[rgb]{ .996,  .871,  .506}0,045\% & \cellcolor[rgb]{ .992,  .757,  .486}1,056\% & \cellcolor[rgb]{ .992,  .796,  .49}4,23\% & \cellcolor[rgb]{ .988,  .725,  .478}-0,351 & \cellcolor[rgb]{ .604,  .804,  .494}0,076 & \cellcolor[rgb]{ .988,  .761,  .486}0,979 & \cellcolor[rgb]{ .388,  .745,  .482}0,023 & \cellcolor[rgb]{ .98,  .604,  .455}0,029\% & \cellcolor[rgb]{ .482,  .773,  .49}0,072 & \cellcolor[rgb]{ .992,  .753,  .486}1,57\% & \cellcolor[rgb]{ .992,  .824,  .498}1,135 & 82 \\
		\midrule
		\textbf{Index} & \cellcolor[rgb]{ .973,  .412,  .42}0,018\% & \cellcolor[rgb]{ .976,  .471,  .431}1,160\% & \cellcolor[rgb]{ .973,  .412,  .42}1,52\% & \cellcolor[rgb]{ .976,  .494,  .435}-0,366 & \cellcolor[rgb]{ .98,  .518,  .443}0,106 & \cellcolor[rgb]{ .973,  .412,  .42}0,943 & -     & -     & -     & \cellcolor[rgb]{ .976,  .475,  .435}1,73\% & \cellcolor[rgb]{ .973,  .412,  .42}1,048 & - \\
		\midrule
		\textbf{DRvol0} & \cellcolor[rgb]{ .686,  .831,  .502}0,054\% & \cellcolor[rgb]{ .824,  .871,  .506}0,960\% & \cellcolor[rgb]{ .537,  .788,  .494}5,61\% & \cellcolor[rgb]{ .996,  .855,  .502}-0,342 & \cellcolor[rgb]{ .714,  .839,  .498}0,080 & \cellcolor[rgb]{ .976,  .918,  .518}0,997 & \cellcolor[rgb]{ .655,  .82,  .494}0,167 & \cellcolor[rgb]{ .69,  .831,  .502}0,042\% & \cellcolor[rgb]{ .863,  .882,  .51}0,051 & \cellcolor[rgb]{ .78,  .855,  .502}1,40\% & \cellcolor[rgb]{ .388,  .745,  .482}1,178 & 24 \\
		\midrule
		\textbf{DRMAD0} & \cellcolor[rgb]{ .996,  .863,  .506}0,044\% & \cellcolor[rgb]{ .792,  .859,  .502}0,954\% & \cellcolor[rgb]{ .996,  .855,  .502}4,63\% & \cellcolor[rgb]{ .973,  .412,  .42}-0,371 & \cellcolor[rgb]{ .808,  .867,  .506}0,084 & \cellcolor[rgb]{ .98,  .627,  .459}0,966 & \cellcolor[rgb]{ .867,  .882,  .51}0,281 & \cellcolor[rgb]{ .988,  .725,  .478}0,032\% & \cellcolor[rgb]{ .996,  .871,  .506}0,041 & \cellcolor[rgb]{ .718,  .839,  .498}1,38\% & \cellcolor[rgb]{ .996,  .871,  .506}1,145 & 27 \\
		\midrule
		\textbf{DRCVaR0} & \cellcolor[rgb]{ .635,  .82,  .498}0,055\% & \cellcolor[rgb]{ 1,  .922,  .518}0,995\% & \cellcolor[rgb]{ .624,  .816,  .498}5,52\% & \cellcolor[rgb]{ .996,  .91,  .514}-0,339 & \cellcolor[rgb]{ .847,  .875,  .506}0,085 & \cellcolor[rgb]{ .608,  .808,  .498}1,010 & \cellcolor[rgb]{ 1,  .922,  .518}0,352 & \cellcolor[rgb]{ .631,  .816,  .498}0,043\% & \cellcolor[rgb]{ .906,  .894,  .514}0,048 & \cellcolor[rgb]{ .996,  .918,  .514}1,47\% & \cellcolor[rgb]{ .51,  .78,  .49}1,173 & 17 \\
		\midrule
		\textbf{DRExpe0} & \cellcolor[rgb]{ .749,  .851,  .506}0,053\% & \cellcolor[rgb]{ .773,  .855,  .502}0,949\% & \cellcolor[rgb]{ .596,  .808,  .498}5,55\% & \cellcolor[rgb]{ .773,  .859,  .506}-0,326 & \cellcolor[rgb]{ .647,  .82,  .494}0,077 & \cellcolor[rgb]{ .427,  .757,  .486}1,016 & \cellcolor[rgb]{ .91,  .894,  .51}0,306 & \cellcolor[rgb]{ .729,  .843,  .502}0,042\% & \cellcolor[rgb]{ .929,  .902,  .514}0,047 & \cellcolor[rgb]{ .639,  .816,  .494}1,35\% & \cellcolor[rgb]{ .451,  .765,  .486}1,175 & 13 \\
		\midrule
		\textbf{DRvol1} & \cellcolor[rgb]{ .388,  .745,  .482}0,060\% & \cellcolor[rgb]{ .996,  .812,  .498}1,037\% & \cellcolor[rgb]{ .388,  .745,  .482}5,78\% & \cellcolor[rgb]{ .675,  .831,  .502}-0,321 & \cellcolor[rgb]{ 1,  .871,  .51}0,093 & \cellcolor[rgb]{ .776,  .859,  .506}1,004 & \cellcolor[rgb]{ .918,  .898,  .51}0,309 & \cellcolor[rgb]{ .388,  .745,  .482}0,048\% & \cellcolor[rgb]{ .827,  .875,  .51}0,053 & \cellcolor[rgb]{ 1,  .867,  .51}1,50\% & \cellcolor[rgb]{ .396,  .749,  .486}1,177 & 16 \\
		\midrule
		\textbf{DRMAD1} & \cellcolor[rgb]{ .467,  .769,  .49}0,058\% & \cellcolor[rgb]{ .996,  .812,  .498}1,036\% & \cellcolor[rgb]{ .525,  .784,  .49}5,63\% & \cellcolor[rgb]{ .992,  .824,  .498}-0,345 & \cellcolor[rgb]{ 1,  .922,  .518}0,091 & \cellcolor[rgb]{ .992,  .835,  .498}0,987 & \cellcolor[rgb]{ 1,  .875,  .51}0,371 & \cellcolor[rgb]{ .498,  .776,  .49}0,046\% & \cellcolor[rgb]{ .792,  .863,  .506}0,055 & \cellcolor[rgb]{ .996,  .8,  .494}1,54\% & \cellcolor[rgb]{ .475,  .773,  .49}1,175 & 19 \\
		\midrule
		\textbf{DRCVaR1} & \cellcolor[rgb]{ .655,  .824,  .498}0,055\% & \cellcolor[rgb]{ .992,  .718,  .478}1,071\% & \cellcolor[rgb]{ .996,  .922,  .518}5,09\% & \cellcolor[rgb]{ .984,  .694,  .471}-0,353 & \cellcolor[rgb]{ .984,  .596,  .455}0,103 & \cellcolor[rgb]{ .388,  .745,  .482}1,017 & \cellcolor[rgb]{ .988,  .643,  .467}0,461 & \cellcolor[rgb]{ .655,  .824,  .498}0,043\% & \cellcolor[rgb]{ 1,  .922,  .518}0,043 & \cellcolor[rgb]{ .992,  .769,  .49}1,56\% & \cellcolor[rgb]{ 1,  .922,  .518}1,155 & 13 \\
		\midrule
		\textbf{DRExpe1} & \cellcolor[rgb]{ .639,  .82,  .498}0,055\% & \cellcolor[rgb]{ .992,  .733,  .482}1,065\% & \cellcolor[rgb]{ .949,  .906,  .518}5,15\% & \cellcolor[rgb]{ .973,  .914,  .518}-0,337 & \cellcolor[rgb]{ .988,  .647,  .467}0,101 & \cellcolor[rgb]{ .725,  .843,  .502}1,006 & \cellcolor[rgb]{ .996,  .792,  .494}0,404 & \cellcolor[rgb]{ .643,  .82,  .498}0,043\% & \cellcolor[rgb]{ .976,  .918,  .518}0,044 & \cellcolor[rgb]{ .992,  .71,  .478}1,59\% & \cellcolor[rgb]{ .878,  .886,  .514}1,160 & 15 \\
		\bottomrule
	\end{tabular}}%
	\label{tab:CompRes_FTSE100}%
\end{table}%
%



In Table \ref{tab:CompRes_NASDAQ100} we report the computational results on NASDAQ100.
Again, we note that the return-diversification models generally outperform the other strategies.
Indeed, DRVol1, DRMAD1, DRCVaR1, and DRExpe1 typically obtain superior results in terms of $\mu^{out}$, Jensen's Alpha, Sharpe, Rachev, Information, and Omega.
%
%
However, for mildly risk-averse investors, DRVol0, DRMAD0, DRCVaR0, and DRExpe0 might be advisable, while for very risk-averse investors, strategies based on risk minimization are recommended.
%
\begin{table}[htbp]
	\centering
	\caption{Out-of-sample performance results on the NASDAQ100 dataset}
	\resizebox{0.90\textwidth}{!}{\begin{tabular}{|l|c|c|c|c|c|c|c|c|c|c|c|c|}
		\toprule
		\multicolumn{1}{|c|}{\textbf{Approach}} & $\boldsymbol{\mu^{out}}$ & $\boldsymbol{\sigma^{out}}$ & \textbf{Sharpe} & \textbf{MDD} & \textbf{Ulcer} & \textbf{Rachev10} & \textbf{Turn} & \textbf{AlphaJ} & \textbf{InfoR} & \textbf{VaR5} & \textbf{Omega} & \textbf{ave \#} \\
		\midrule
		\textbf{MV0} & \cellcolor[rgb]{ .973,  .42,  .42}0,043\% & \cellcolor[rgb]{ .388,  .745,  .482}0,989\% & \cellcolor[rgb]{ .973,  .439,  .424}4,33\% & \cellcolor[rgb]{ .478,  .773,  .49}-0,361 & \cellcolor[rgb]{ .624,  .812,  .494}0,085 & \cellcolor[rgb]{ .98,  .616,  .459}0,956 & \cellcolor[rgb]{ .627,  .812,  .494}0,141 & \cellcolor[rgb]{ .973,  .439,  .424}0,009\% & \cellcolor[rgb]{ .973,  .427,  .42}-0,022 & \cellcolor[rgb]{ .4,  .745,  .482}1,34\% & \cellcolor[rgb]{ .973,  .451,  .424}1,146 & 15 \\
		\midrule
		\textbf{MAD0} & \cellcolor[rgb]{ .973,  .412,  .42}0,042\% & \cellcolor[rgb]{ .388,  .745,  .482}0,990\% & \cellcolor[rgb]{ .973,  .412,  .42}4,24\% & \cellcolor[rgb]{ .451,  .765,  .486}-0,358 & \cellcolor[rgb]{ .408,  .749,  .482}0,079 & \cellcolor[rgb]{ .984,  .647,  .463}0,958 & \cellcolor[rgb]{ .706,  .835,  .498}0,181 & \cellcolor[rgb]{ .973,  .412,  .42}0,008\% & \cellcolor[rgb]{ .973,  .412,  .42}-0,023 & \cellcolor[rgb]{ .388,  .745,  .482}1,33\% & \cellcolor[rgb]{ .973,  .424,  .42}1,143 & 19 \\
		\midrule
		\textbf{CVaR0} & \cellcolor[rgb]{ .976,  .529,  .439}0,051\% & \cellcolor[rgb]{ .424,  .753,  .482}1,008\% & \cellcolor[rgb]{ .984,  .663,  .467}5,02\% & \cellcolor[rgb]{ .388,  .745,  .482}-0,350 & \cellcolor[rgb]{ .647,  .82,  .494}0,085 & \cellcolor[rgb]{ .973,  .914,  .518}0,975 & \cellcolor[rgb]{ .898,  .89,  .51}0,278 & \cellcolor[rgb]{ .98,  .608,  .455}0,018\% & \cellcolor[rgb]{ .976,  .533,  .443}-0,013 & \cellcolor[rgb]{ .435,  .757,  .482}1,38\% & \cellcolor[rgb]{ .984,  .663,  .467}1,167 & 11 \\
		\midrule
		\textbf{Expe0} & \cellcolor[rgb]{ .973,  .482,  .431}0,047\% & \cellcolor[rgb]{ .42,  .753,  .482}1,005\% & \cellcolor[rgb]{ .98,  .557,  .447}4,70\% & \cellcolor[rgb]{ .604,  .808,  .498}-0,377 & \cellcolor[rgb]{ 1,  .922,  .518}0,093 & \cellcolor[rgb]{ .976,  .553,  .443}0,953 & \cellcolor[rgb]{ .914,  .894,  .51}0,285 & \cellcolor[rgb]{ .976,  .522,  .439}0,014\% & \cellcolor[rgb]{ .976,  .486,  .431}-0,017 & \cellcolor[rgb]{ .447,  .761,  .482}1,39\% & \cellcolor[rgb]{ .976,  .549,  .443}1,155 & 14 \\
		\midrule
		\textbf{MV1} & \cellcolor[rgb]{ .941,  .906,  .518}0,082\% & \cellcolor[rgb]{ .988,  .678,  .475}1,456\% & \cellcolor[rgb]{ .996,  .871,  .506}5,66\% & \cellcolor[rgb]{ .961,  .914,  .518}-0,423 & \cellcolor[rgb]{ .976,  .447,  .427}0,143 & \cellcolor[rgb]{ .988,  .737,  .482}0,963 & \cellcolor[rgb]{ .984,  .592,  .455}0,490 & \cellcolor[rgb]{ .973,  .914,  .518}0,034\% & \cellcolor[rgb]{ .988,  .922,  .518}0,020 & \cellcolor[rgb]{ .988,  .639,  .467}2,29\% & \cellcolor[rgb]{ .988,  .745,  .482}1,175 & 10 \\
		\midrule
		\textbf{MAD1} & \cellcolor[rgb]{ .937,  .906,  .518}0,083\% & \cellcolor[rgb]{ .992,  .757,  .486}1,408\% & \cellcolor[rgb]{ .976,  .918,  .518}5,86\% & \cellcolor[rgb]{ .996,  .859,  .506}-0,442 & \cellcolor[rgb]{ .973,  .412,  .42}0,146 & \cellcolor[rgb]{ .984,  .918,  .518}0,974 & \cellcolor[rgb]{ .984,  .565,  .451}0,502 & \cellcolor[rgb]{ .961,  .914,  .518}0,034\% & \cellcolor[rgb]{ .965,  .914,  .518}0,021 & \cellcolor[rgb]{ .992,  .725,  .482}2,19\% & \cellcolor[rgb]{ .992,  .808,  .494}1,182 & 12 \\
		\midrule
		\textbf{CVaR1} & \cellcolor[rgb]{ 1,  .922,  .518}0,079\% & \cellcolor[rgb]{ .996,  .788,  .494}1,389\% & \cellcolor[rgb]{ .996,  .875,  .506}5,67\% & \cellcolor[rgb]{ .902,  .894,  .514}-0,415 & \cellcolor[rgb]{ .98,  .557,  .451}0,131 & \cellcolor[rgb]{ .922,  .902,  .514}0,978 & \cellcolor[rgb]{ .973,  .412,  .42}0,576 & \cellcolor[rgb]{ .969,  .914,  .518}0,034\% & \cellcolor[rgb]{ .996,  .878,  .506}0,016 & \cellcolor[rgb]{ .996,  .824,  .502}2,07\% & \cellcolor[rgb]{ .992,  .773,  .486}1,178 & 9 \\
		\midrule
		\textbf{Expe1} & \cellcolor[rgb]{ .996,  .878,  .51}0,076\% & \cellcolor[rgb]{ 1,  .922,  .518}1,308\% & \cellcolor[rgb]{ 1,  .922,  .518}5,81\% & \cellcolor[rgb]{ 1,  .922,  .518}-0,428 & \cellcolor[rgb]{ .98,  .553,  .447}0,132 & \cellcolor[rgb]{ .996,  .906,  .514}0,972 & \cellcolor[rgb]{ .984,  .561,  .451}0,504 & \cellcolor[rgb]{ .996,  .894,  .51}0,032\% & \cellcolor[rgb]{ .996,  .863,  .506}0,015 & \cellcolor[rgb]{ 1,  .922,  .518}1,95\% & \cellcolor[rgb]{ .992,  .82,  .498}1,183 & 20 \\
		\midrule
		\textbf{EW} & \cellcolor[rgb]{ 1,  .922,  .518}0,079\% & \cellcolor[rgb]{ .996,  .824,  .502}1,368\% & \cellcolor[rgb]{ .996,  .91,  .514}5,77\% & \cellcolor[rgb]{ .976,  .918,  .518}-0,425 & \cellcolor[rgb]{ .675,  .827,  .498}0,086 & \cellcolor[rgb]{ .992,  .843,  .502}0,969 & -     & \cellcolor[rgb]{ .984,  .698,  .475}0,022\% & \cellcolor[rgb]{ .612,  .812,  .498}0,041 & \cellcolor[rgb]{ .992,  .765,  .49}2,14\% & \cellcolor[rgb]{ 1,  .922,  .518}1,193 & 70 \\
		\midrule
		\textbf{RP} & \cellcolor[rgb]{ .992,  .831,  .498}0,072\% & \cellcolor[rgb]{ .918,  .898,  .51}1,266\% & \cellcolor[rgb]{ .996,  .89,  .51}5,72\% & \cellcolor[rgb]{ .878,  .886,  .514}-0,412 & \cellcolor[rgb]{ .388,  .745,  .482}0,079 & \cellcolor[rgb]{ .992,  .835,  .498}0,969 & \cellcolor[rgb]{ .388,  .745,  .482}0,020 & \cellcolor[rgb]{ .984,  .659,  .467}0,020\% & \cellcolor[rgb]{ .965,  .914,  .518}0,021 & \cellcolor[rgb]{ .996,  .918,  .514}1,95\% & \cellcolor[rgb]{ .976,  .918,  .518}1,194 & 70 \\
		\midrule
		\textbf{Index} & \cellcolor[rgb]{ .984,  .694,  .471}0,063\% & \cellcolor[rgb]{ .988,  .698,  .478}1,444\% & \cellcolor[rgb]{ .973,  .439,  .424}4,34\% & \cellcolor[rgb]{ .906,  .894,  .514}-0,416 & \cellcolor[rgb]{ .996,  .784,  .494}0,108 & \cellcolor[rgb]{ .973,  .412,  .42}0,945 & -     & -     & -     & \cellcolor[rgb]{ .984,  .608,  .459}2,33\% & \cellcolor[rgb]{ .973,  .412,  .42}1,141 & - \\
		\midrule
		\textbf{DRvol0} & \cellcolor[rgb]{ .941,  .906,  .518}0,082\% & \cellcolor[rgb]{ .953,  .906,  .514}1,284\% & \cellcolor[rgb]{ .71,  .839,  .502}6,41\% & \cellcolor[rgb]{ .98,  .604,  .455}-0,501 & \cellcolor[rgb]{ .565,  .796,  .49}0,083 & \cellcolor[rgb]{ .388,  .745,  .482}1,013 & \cellcolor[rgb]{ .659,  .824,  .498}0,158 & \cellcolor[rgb]{ .91,  .898,  .514}0,037\% & \cellcolor[rgb]{ .925,  .902,  .514}0,024 & \cellcolor[rgb]{ .82,  .867,  .506}1,77\% & \cellcolor[rgb]{ .459,  .765,  .486}1,220 & 21 \\
		\midrule
		\textbf{DRMAD0} & \cellcolor[rgb]{ .992,  .812,  .494}0,071\% & \cellcolor[rgb]{ .824,  .871,  .506}1,217\% & \cellcolor[rgb]{ .984,  .918,  .518}5,84\% & \cellcolor[rgb]{ .984,  .651,  .463}-0,490 & \cellcolor[rgb]{ .435,  .757,  .482}0,080 & \cellcolor[rgb]{ .882,  .89,  .514}0,981 & \cellcolor[rgb]{ .863,  .882,  .51}0,260 & \cellcolor[rgb]{ .992,  .788,  .49}0,026\% & \cellcolor[rgb]{ .992,  .82,  .498}0,011 & \cellcolor[rgb]{ .769,  .855,  .502}1,72\% & \cellcolor[rgb]{ .851,  .878,  .51}1,200 & 25 \\
		\midrule
		\textbf{DRCVaR0} & \cellcolor[rgb]{ .992,  .922,  .518}0,079\% & \cellcolor[rgb]{ .902,  .894,  .51}1,258\% & \cellcolor[rgb]{ .761,  .855,  .506}6,31\% & \cellcolor[rgb]{ .973,  .412,  .42}-0,546 & \cellcolor[rgb]{ .722,  .839,  .498}0,087 & \cellcolor[rgb]{ .643,  .82,  .498}0,996 & \cellcolor[rgb]{ .996,  .816,  .498}0,380 & \cellcolor[rgb]{ .914,  .898,  .514}0,036\% & \cellcolor[rgb]{ .996,  .91,  .514}0,019 & \cellcolor[rgb]{ .8,  .863,  .506}1,75\% & \cellcolor[rgb]{ .584,  .804,  .494}1,213 & 15 \\
		\midrule
		\textbf{DRExpe0} & \cellcolor[rgb]{ .992,  .843,  .502}0,073\% & \cellcolor[rgb]{ .812,  .867,  .506}1,210\% & \cellcolor[rgb]{ .886,  .89,  .514}6,05\% & \cellcolor[rgb]{ .98,  .624,  .459}-0,497 & \cellcolor[rgb]{ .537,  .788,  .49}0,082 & \cellcolor[rgb]{ .871,  .886,  .514}0,982 & \cellcolor[rgb]{ .98,  .914,  .514}0,318 & \cellcolor[rgb]{ .996,  .871,  .506}0,030\% & \cellcolor[rgb]{ .992,  .839,  .502}0,013 & \cellcolor[rgb]{ .741,  .847,  .502}1,69\% & \cellcolor[rgb]{ .82,  .871,  .51}1,202 & 11 \\
		\midrule
		\textbf{DRvol1} & \cellcolor[rgb]{ .388,  .745,  .482}0,114\% & \cellcolor[rgb]{ .973,  .412,  .42}1,617\% & \cellcolor[rgb]{ .388,  .745,  .482}7,07\% & \cellcolor[rgb]{ .988,  .757,  .486}-0,466 & \cellcolor[rgb]{ .988,  .655,  .467}0,121 & \cellcolor[rgb]{ .808,  .867,  .51}0,986 & \cellcolor[rgb]{ 1,  .922,  .518}0,328 & \cellcolor[rgb]{ .388,  .745,  .482}0,058\% & \cellcolor[rgb]{ .388,  .745,  .482}0,053 & \cellcolor[rgb]{ .973,  .412,  .42}2,56\% & \cellcolor[rgb]{ .388,  .745,  .482}1,223 & 13 \\
		\midrule
		\textbf{DRMAD1} & \cellcolor[rgb]{ .675,  .827,  .502}0,098\% & \cellcolor[rgb]{ .98,  .49,  .435}1,570\% & \cellcolor[rgb]{ .796,  .863,  .506}6,23\% & \cellcolor[rgb]{ .98,  .627,  .459}-0,496 & \cellcolor[rgb]{ .984,  .584,  .455}0,128 & \cellcolor[rgb]{ .996,  .894,  .51}0,972 & \cellcolor[rgb]{ .996,  .827,  .502}0,376 & \cellcolor[rgb]{ .784,  .863,  .506}0,042\% & \cellcolor[rgb]{ .643,  .82,  .498}0,039 & \cellcolor[rgb]{ .98,  .506,  .439}2,45\% & \cellcolor[rgb]{ .957,  .91,  .518}1,195 & 15 \\
		\midrule
		\textbf{DRCVaR1} & \cellcolor[rgb]{ .694,  .835,  .502}0,097\% & \cellcolor[rgb]{ .98,  .557,  .451}1,530\% & \cellcolor[rgb]{ .753,  .851,  .506}6,32\% & \cellcolor[rgb]{ .992,  .804,  .494}-0,455 & \cellcolor[rgb]{ .988,  .682,  .475}0,118 & \cellcolor[rgb]{ 1,  .922,  .518}0,973 & \cellcolor[rgb]{ .984,  .612,  .459}0,479 & \cellcolor[rgb]{ .706,  .839,  .502}0,045\% & \cellcolor[rgb]{ .729,  .847,  .506}0,034 & \cellcolor[rgb]{ .98,  .533,  .443}2,42\% & \cellcolor[rgb]{ .89,  .89,  .514}1,198 & 11 \\
		\midrule
		\textbf{DRExpe1} & \cellcolor[rgb]{ .741,  .847,  .506}0,094\% & \cellcolor[rgb]{ .988,  .694,  .475}1,448\% & \cellcolor[rgb]{ .675,  .827,  .502}6,49\% & \cellcolor[rgb]{ .98,  .608,  .455}-0,500 & \cellcolor[rgb]{ .992,  .729,  .482}0,113 & \cellcolor[rgb]{ .937,  .906,  .518}0,977 & \cellcolor[rgb]{ .992,  .718,  .478}0,428 & \cellcolor[rgb]{ .761,  .855,  .506}0,043\% & \cellcolor[rgb]{ .706,  .839,  .502}0,035 & \cellcolor[rgb]{ .992,  .725,  .482}2,19\% & \cellcolor[rgb]{ .718,  .839,  .502}1,207 & 15 \\
		\bottomrule
	\end{tabular}}%
	\label{tab:CompRes_NASDAQ100}%
\end{table}%
%



Finally, Table \ref{tab:CompRes_SP500} shows the out-of-sample performance results on S$\&$P500.
In this case, the return-diversification portfolios provide interesting results both in terms of gain and risk.
Indeed, DRvol0, DRMAD0, DRCVaR0, and DRExpe0 achieve good values of Sharpe, Rachev, and Omega while preserving control over risk, as highlighted by $\sigma^{out}$, Ulcer, and VaR5.
Furthermore, we observe that including the return constraint typically allows for achieving good performance in terms of $\mu^{out}$, Jensen's Alpha, and Information ratio, although with a worsening of risk.
%
\begin{table}[htbp]
	\centering
	\caption{Out-of-sample performance results on the SP500 dataset}
	\resizebox{0.90\textwidth}{!}{\begin{tabular}{|l|c|c|c|c|c|c|c|c|c|c|c|c|}
		\toprule
		\multicolumn{1}{|c|}{\textbf{Approach}} & $\boldsymbol{\mu^{out}}$ & $\boldsymbol{\sigma^{out}}$ & \textbf{Sharpe} & \textbf{MDD} & \textbf{Ulcer} & \textbf{Rachev10} & \textbf{Turn} & \textbf{AlphaJ} & \textbf{InfoR} & \textbf{VaR5} & \textbf{Omega} & \textbf{ave \#} \\
		\midrule
		\textbf{MV0} & \cellcolor[rgb]{ .973,  .443,  .424}0,030\% & \cellcolor[rgb]{ .431,  .757,  .482}0,903\% & \cellcolor[rgb]{ .976,  .533,  .443}3,30\% & \cellcolor[rgb]{ .506,  .78,  .49}-0,362 & \cellcolor[rgb]{ .682,  .827,  .498}0,082 & \cellcolor[rgb]{ .996,  .859,  .506}0,941 & \cellcolor[rgb]{ .694,  .831,  .498}0,252 & \cellcolor[rgb]{ .976,  .522,  .439}0,007\% & \cellcolor[rgb]{ .973,  .475,  .431}-0,014 & \cellcolor[rgb]{ .388,  .745,  .482}1,20\% & \cellcolor[rgb]{ .976,  .533,  .439}1,113 & 28 \\
		\midrule
		\textbf{MAD0} & \cellcolor[rgb]{ .973,  .412,  .42}0,028\% & \cellcolor[rgb]{ .616,  .808,  .494}1,014\% & \cellcolor[rgb]{ .973,  .412,  .42}2,73\% & \cellcolor[rgb]{ .537,  .788,  .494}-0,367 & \cellcolor[rgb]{ .871,  .882,  .51}0,087 & \cellcolor[rgb]{ .98,  .565,  .447}0,924 & \cellcolor[rgb]{ 1,  .867,  .51}0,501 & \cellcolor[rgb]{ .973,  .412,  .42}0,000\% & \cellcolor[rgb]{ .973,  .412,  .42}-0,020 & \cellcolor[rgb]{ .439,  .757,  .482}1,26\% & \cellcolor[rgb]{ .973,  .412,  .42}1,098 & 61 \\
		\midrule
		\textbf{CVaR0} & \cellcolor[rgb]{ .973,  .447,  .424}0,030\% & \cellcolor[rgb]{ .388,  .745,  .482}0,875\% & \cellcolor[rgb]{ .98,  .561,  .447}3,43\% & \cellcolor[rgb]{ .388,  .745,  .482}-0,341 & \cellcolor[rgb]{ .455,  .765,  .486}0,074 & \cellcolor[rgb]{ .992,  .835,  .498}0,940 & \cellcolor[rgb]{ .98,  .914,  .514}0,463 & \cellcolor[rgb]{ .976,  .557,  .447}0,009\% & \cellcolor[rgb]{ .976,  .486,  .431}-0,013 & \cellcolor[rgb]{ .471,  .769,  .486}1,29\% & \cellcolor[rgb]{ .976,  .506,  .435}1,110 & 19 \\
		\midrule
		\textbf{Expe0} & \cellcolor[rgb]{ .976,  .502,  .435}0,033\% & \cellcolor[rgb]{ .443,  .761,  .482}0,910\% & \cellcolor[rgb]{ .98,  .616,  .459}3,68\% & \cellcolor[rgb]{ .529,  .788,  .494}-0,366 & \cellcolor[rgb]{ .792,  .859,  .502}0,085 & \cellcolor[rgb]{ .984,  .671,  .467}0,930 & \cellcolor[rgb]{ 1,  .922,  .518}0,477 & \cellcolor[rgb]{ .98,  .588,  .451}0,011\% & \cellcolor[rgb]{ .976,  .529,  .439}-0,009 & \cellcolor[rgb]{ .463,  .765,  .486}1,28\% & \cellcolor[rgb]{ .98,  .608,  .455}1,122 & 24 \\
		\midrule
		\textbf{MV1} & \cellcolor[rgb]{ .855,  .882,  .51}0,065\% & \cellcolor[rgb]{ .992,  .706,  .478}1,345\% & \cellcolor[rgb]{ .996,  .875,  .506}4,87\% & \cellcolor[rgb]{ .996,  .851,  .502}-0,466 & \cellcolor[rgb]{ .98,  .533,  .443}0,145 & \cellcolor[rgb]{ 1,  .922,  .518}0,945 & \cellcolor[rgb]{ .988,  .647,  .467}0,593 & \cellcolor[rgb]{ .871,  .886,  .514}0,034\% & \cellcolor[rgb]{ .996,  .91,  .514}0,027 & \cellcolor[rgb]{ .992,  .753,  .486}1,99\% & \cellcolor[rgb]{ .996,  .886,  .51}1,156 & 16 \\
		\midrule
		\textbf{MAD1} & \cellcolor[rgb]{ .945,  .906,  .518}0,062\% & \cellcolor[rgb]{ .996,  .82,  .498}1,293\% & \cellcolor[rgb]{ .996,  .855,  .502}4,79\% & \cellcolor[rgb]{ .992,  .847,  .502}-0,467 & \cellcolor[rgb]{ .984,  .576,  .451}0,139 & \cellcolor[rgb]{ .992,  .792,  .49}0,938 & \cellcolor[rgb]{ .984,  .565,  .451}0,628 & \cellcolor[rgb]{ 1,  .922,  .518}0,031\% & \cellcolor[rgb]{ .996,  .886,  .51}0,025 & \cellcolor[rgb]{ .992,  .761,  .49}1,98\% & \cellcolor[rgb]{ .996,  .855,  .502}1,153 & 18 \\
		\midrule
		\textbf{CVaR1} & \cellcolor[rgb]{ .863,  .882,  .51}0,065\% & \cellcolor[rgb]{ .996,  .847,  .506}1,280\% & \cellcolor[rgb]{ .992,  .922,  .518}5,09\% & \cellcolor[rgb]{ .729,  .843,  .502}-0,401 & \cellcolor[rgb]{ .992,  .737,  .482}0,117 & \cellcolor[rgb]{ .969,  .914,  .518}0,947 & \cellcolor[rgb]{ .973,  .412,  .42}0,691 & \cellcolor[rgb]{ .804,  .867,  .51}0,036\% & \cellcolor[rgb]{ .996,  .906,  .514}0,027 & \cellcolor[rgb]{ .996,  .796,  .494}1,95\% & \cellcolor[rgb]{ .941,  .906,  .518}1,165 & 13 \\
		\midrule
		\textbf{Expe1} & \cellcolor[rgb]{ .827,  .875,  .51}0,067\% & \cellcolor[rgb]{ .965,  .91,  .514}1,224\% & \cellcolor[rgb]{ .753,  .851,  .506}5,44\% & \cellcolor[rgb]{ .757,  .851,  .506}-0,407 & \cellcolor[rgb]{ .992,  .741,  .486}0,116 & \cellcolor[rgb]{ .992,  .922,  .518}0,945 & \cellcolor[rgb]{ .98,  .541,  .447}0,637 & \cellcolor[rgb]{ .749,  .851,  .506}0,037\% & \cellcolor[rgb]{ .953,  .91,  .518}0,031 & \cellcolor[rgb]{ 1,  .922,  .518}1,83\% & \cellcolor[rgb]{ .796,  .863,  .506}1,174 & 30 \\
		\midrule
		\textbf{EW} & \cellcolor[rgb]{ 1,  .922,  .518}0,060\% & \cellcolor[rgb]{ .984,  .631,  .463}1,379\% & \cellcolor[rgb]{ .988,  .757,  .482}4,32\% & \cellcolor[rgb]{ .992,  .839,  .502}-0,468 & \cellcolor[rgb]{ 1,  .922,  .518}0,091 & \cellcolor[rgb]{ .682,  .831,  .502}0,962 & -     & \cellcolor[rgb]{ .988,  .702,  .475}0,018\% & \cellcolor[rgb]{ .388,  .745,  .482}0,062 & \cellcolor[rgb]{ .988,  .69,  .475}2,04\% & \cellcolor[rgb]{ .992,  .831,  .498}1,150 & 420 \\
		\midrule
		\textbf{RP} & \cellcolor[rgb]{ .992,  .824,  .498}0,054\% & \cellcolor[rgb]{ 1,  .922,  .518}1,245\% & \cellcolor[rgb]{ .988,  .753,  .482}4,30\% & \cellcolor[rgb]{ .929,  .902,  .514}-0,437 & \cellcolor[rgb]{ .796,  .863,  .506}0,085 & \cellcolor[rgb]{ .788,  .863,  .506}0,956 & \cellcolor[rgb]{ .388,  .745,  .482}0,022 & \cellcolor[rgb]{ .984,  .671,  .467}0,016\% & \cellcolor[rgb]{ .655,  .824,  .498}0,047 & \cellcolor[rgb]{ .976,  .914,  .514}1,80\% & \cellcolor[rgb]{ .992,  .831,  .498}1,150 & 418 \\
		\midrule
		\textbf{Index} & \cellcolor[rgb]{ .98,  .612,  .455}0,040\% & \cellcolor[rgb]{ .996,  .824,  .502}1,291\% & \cellcolor[rgb]{ .976,  .498,  .435}3,13\% & \cellcolor[rgb]{ .996,  .847,  .502}-0,466 & \cellcolor[rgb]{ .996,  .816,  .498}0,106 & \cellcolor[rgb]{ .98,  .62,  .459}0,927 & -     & -     & -     & \cellcolor[rgb]{ .996,  .808,  .498}1,93\% & \cellcolor[rgb]{ .973,  .478,  .431}1,106 & - \\
		\midrule
		\textbf{DRvol0} & \cellcolor[rgb]{ .816,  .871,  .51}0,067\% & \cellcolor[rgb]{ .8,  .863,  .506}1,124\% & \cellcolor[rgb]{ .388,  .745,  .482}5,97\% & \cellcolor[rgb]{ .89,  .89,  .514}-0,430 & \cellcolor[rgb]{ .451,  .761,  .482}0,074 & \cellcolor[rgb]{ .651,  .824,  .498}0,963 & \cellcolor[rgb]{ .682,  .827,  .498}0,242 & \cellcolor[rgb]{ .765,  .855,  .506}0,037\% & \cellcolor[rgb]{ .757,  .855,  .506}0,041 & \cellcolor[rgb]{ .8,  .863,  .506}1,62\% & \cellcolor[rgb]{ .388,  .745,  .482}1,201 & 34 \\
		\midrule
		\textbf{DRMAD0} & \cellcolor[rgb]{ .996,  .875,  .506}0,057\% & \cellcolor[rgb]{ .714,  .839,  .498}1,073\% & \cellcolor[rgb]{ .859,  .882,  .51}5,29\% & \cellcolor[rgb]{ .839,  .878,  .51}-0,422 & \cellcolor[rgb]{ .388,  .745,  .482}0,072 & \cellcolor[rgb]{ .769,  .855,  .506}0,957 & \cellcolor[rgb]{ .894,  .89,  .51}0,398 & \cellcolor[rgb]{ .996,  .863,  .506}0,027\% & \cellcolor[rgb]{ .996,  .914,  .514}0,027 & \cellcolor[rgb]{ .678,  .827,  .498}1,50\% & \cellcolor[rgb]{ .741,  .847,  .506}1,178 & 40 \\
		\midrule
		\textbf{DRCVaR0} & \cellcolor[rgb]{ .996,  .878,  .51}0,057\% & \cellcolor[rgb]{ .753,  .851,  .502}1,097\% & \cellcolor[rgb]{ .918,  .898,  .514}5,20\% & \cellcolor[rgb]{ .992,  .808,  .494}-0,475 & \cellcolor[rgb]{ .867,  .882,  .51}0,087 & \cellcolor[rgb]{ .388,  .745,  .482}0,977 & \cellcolor[rgb]{ .996,  .918,  .514}0,476 & \cellcolor[rgb]{ .996,  .886,  .51}0,029\% & \cellcolor[rgb]{ .996,  .871,  .506}0,023 & \cellcolor[rgb]{ .745,  .847,  .502}1,57\% & \cellcolor[rgb]{ .867,  .886,  .514}1,170 & 23 \\
		\midrule
		\textbf{DRExpe0} & \cellcolor[rgb]{ .996,  .914,  .514}0,059\% & \cellcolor[rgb]{ .722,  .839,  .498}1,078\% & \cellcolor[rgb]{ .71,  .839,  .502}5,50\% & \cellcolor[rgb]{ 1,  .922,  .518}-0,450 & \cellcolor[rgb]{ .682,  .827,  .498}0,081 & \cellcolor[rgb]{ .631,  .816,  .498}0,964 & \cellcolor[rgb]{ .953,  .906,  .514}0,443 & \cellcolor[rgb]{ .996,  .918,  .514}0,030\% & \cellcolor[rgb]{ .992,  .922,  .518}0,028 & \cellcolor[rgb]{ .71,  .835,  .498}1,53\% & \cellcolor[rgb]{ .667,  .827,  .502}1,183 & 16 \\
		\midrule
		\textbf{DRvol1} & \cellcolor[rgb]{ .388,  .745,  .482}0,084\% & \cellcolor[rgb]{ .973,  .412,  .42}1,479\% & \cellcolor[rgb]{ .58,  .8,  .494}5,69\% & \cellcolor[rgb]{ .976,  .525,  .439}-0,538 & \cellcolor[rgb]{ .984,  .62,  .463}0,133 & \cellcolor[rgb]{ .984,  .643,  .463}0,929 & \cellcolor[rgb]{ .914,  .894,  .51}0,414 & \cellcolor[rgb]{ .388,  .745,  .482}0,046\% & \cellcolor[rgb]{ .584,  .804,  .494}0,051 & \cellcolor[rgb]{ .973,  .412,  .42}2,31\% & \cellcolor[rgb]{ .694,  .835,  .502}1,181 & 21 \\
		\midrule
		\textbf{DRMAD1} & \cellcolor[rgb]{ .639,  .82,  .498}0,074\% & \cellcolor[rgb]{ .98,  .506,  .439}1,436\% & \cellcolor[rgb]{ .941,  .906,  .518}5,16\% & \cellcolor[rgb]{ .973,  .412,  .42}-0,565 & \cellcolor[rgb]{ .973,  .412,  .42}0,162 & \cellcolor[rgb]{ .973,  .412,  .42}0,915 & \cellcolor[rgb]{ 1,  .898,  .514}0,488 & \cellcolor[rgb]{ .749,  .851,  .506}0,037\% & \cellcolor[rgb]{ .761,  .855,  .506}0,041 & \cellcolor[rgb]{ .98,  .533,  .443}2,19\% & \cellcolor[rgb]{ .953,  .91,  .518}1,164 & 24 \\
		\midrule
		\textbf{DRCVaR1} & \cellcolor[rgb]{ .686,  .831,  .502}0,072\% & \cellcolor[rgb]{ .98,  .533,  .443}1,424\% & \cellcolor[rgb]{ 1,  .922,  .518}5,08\% & \cellcolor[rgb]{ .984,  .647,  .463}-0,512 & \cellcolor[rgb]{ .984,  .631,  .463}0,132 & \cellcolor[rgb]{ .792,  .863,  .506}0,956 & \cellcolor[rgb]{ .984,  .588,  .455}0,618 & \cellcolor[rgb]{ .745,  .851,  .506}0,037\% & \cellcolor[rgb]{ .863,  .882,  .51}0,036 & \cellcolor[rgb]{ .98,  .541,  .447}2,18\% & \cellcolor[rgb]{ 1,  .922,  .518}1,161 & 16 \\
		\midrule
		\textbf{DRExpe1} & \cellcolor[rgb]{ .722,  .843,  .502}0,071\% & \cellcolor[rgb]{ .988,  .694,  .475}1,351\% & \cellcolor[rgb]{ .886,  .89,  .514}5,24\% & \cellcolor[rgb]{ .976,  .549,  .443}-0,534 & \cellcolor[rgb]{ .984,  .612,  .459}0,134 & \cellcolor[rgb]{ .98,  .557,  .447}0,924 & \cellcolor[rgb]{ .996,  .792,  .494}0,532 & \cellcolor[rgb]{ .78,  .859,  .506}0,036\% & \cellcolor[rgb]{ .82,  .871,  .51}0,038 & \cellcolor[rgb]{ .992,  .733,  .482}2,00\% & \cellcolor[rgb]{ .898,  .894,  .514}1,168 & 20 \\
		\bottomrule
	\end{tabular}}%
	\label{tab:CompRes_SP500}%
\end{table}%
%


\section{Conclusions}\label{sec:Conclusions}

We have proposed a new return-diversification approach to portfolio selection and we have focused on the case of the diversification ratio for general and specific risk measures.
An extensive empirical analysis based on several real-world data sets has shown promising out-of-sample performance results of our approach.

\noindent
The theoretical and empirical analysis of the models based on our bi-objective approach using different diversification measures is left for future research.
It is also worth investigating further the connections between the Risk Parity and the Maximum Diversification ratio approaches both in the case of volatility and in the case of general risk measures, possibly trying to establish their equivalence under appropriate assumptions.


{\footnotesize
\bibliographystyle{apa}
\bibliography{BibbaseBCG_MaxDiv_20230703}
}

\appendix

\section{Additional Out-of-Sample Performance Results}\label{sec:Appendix}


For the sake of completeness, we report here additional tables containing some statistics of the out-of-sample annual ROI obtained by all the portfolio selection strategies listed in Table \ref{tab:ListOfModels} on DowJones (Table \ref{tab:ROI250_DJ}), EuroStoxx 50 (Table \ref{tab:ROI250_EUXX50}), NASDAQ100 (Table \ref{tab:ROI250_NASDAQ100}), FTSE100 (Table \ref{tab:ROI250_FTSE100}), and S$\&$P500 (Table \ref{tab:ROI250_SP500}).
%
\begin{table}[htbp!]
	\centering
	\caption{Annual ROI on the DowJones dataset}
	\resizebox{0.7\textwidth}{!}{\begin{tabular}{|l|c|c|c|c|c|c|c|}
		\toprule
		\multicolumn{1}{|c|}{\textbf{Approach}} & \textbf{Mean} & \textbf{Vol} & \textbf{5\%-perc} & \textbf{25\%-perc} & \textbf{50\%-perc} & \textbf{75\%-perc} & \textbf{95\%-perc} \\
		\midrule
		\textbf{MV0} & \cellcolor[rgb]{ .973,  .412,  .42}7,09\% & \cellcolor[rgb]{ .518,  .78,  .486}8,58\% & \cellcolor[rgb]{ .988,  .749,  .482}-7,04\% & \cellcolor[rgb]{ .973,  .412,  .42}1,77\% & \cellcolor[rgb]{ .973,  .412,  .42}7,05\% & \cellcolor[rgb]{ .973,  .412,  .42}12,55\% & \cellcolor[rgb]{ .973,  .431,  .424}21,09\% \\
		\midrule
		\textbf{MAD0} & \cellcolor[rgb]{ .976,  .529,  .439}8,47\% & \cellcolor[rgb]{ .549,  .788,  .49}8,75\% & \cellcolor[rgb]{ .992,  .835,  .498}-6,69\% & \cellcolor[rgb]{ .98,  .6,  .455}2,98\% & \cellcolor[rgb]{ .98,  .584,  .451}9,15\% & \cellcolor[rgb]{ .98,  .565,  .447}14,30\% & \cellcolor[rgb]{ .976,  .506,  .435}22,50\% \\
		\midrule
		\textbf{CVaR0} & \cellcolor[rgb]{ .973,  .478,  .431}7,88\% & \cellcolor[rgb]{ .388,  .745,  .482}7,92\% & \cellcolor[rgb]{ .949,  .91,  .518}-6,16\% & \cellcolor[rgb]{ .984,  .639,  .463}3,23\% & \cellcolor[rgb]{ .973,  .475,  .431}7,84\% & \cellcolor[rgb]{ .973,  .471,  .427}13,23\% & \cellcolor[rgb]{ .973,  .427,  .42}21,01\% \\
		\midrule
		\textbf{Expe0} & \cellcolor[rgb]{ .973,  .459,  .427}7,64\% & \cellcolor[rgb]{ .435,  .757,  .482}8,17\% & \cellcolor[rgb]{ .992,  .835,  .498}-6,68\% & \cellcolor[rgb]{ .98,  .588,  .451}2,90\% & \cellcolor[rgb]{ .976,  .518,  .439}8,33\% & \cellcolor[rgb]{ .973,  .42,  .42}12,65\% & \cellcolor[rgb]{ .973,  .412,  .42}20,68\% \\
		\midrule
		\textbf{MV1} & \cellcolor[rgb]{ .984,  .671,  .467}10,11\% & \cellcolor[rgb]{ .945,  .906,  .514}10,75\% & \cellcolor[rgb]{ .992,  .788,  .49}-6,89\% & \cellcolor[rgb]{ .98,  .596,  .455}2,96\% & \cellcolor[rgb]{ .984,  .635,  .463}9,77\% & \cellcolor[rgb]{ .992,  .8,  .494}16,99\% & \cellcolor[rgb]{ .992,  .812,  .494}28,22\% \\
		\midrule
		\textbf{MAD1} & \cellcolor[rgb]{ .988,  .722,  .478}10,68\% & \cellcolor[rgb]{ .867,  .882,  .51}10,35\% & \cellcolor[rgb]{ 1,  .922,  .518}-6,33\% & \cellcolor[rgb]{ .992,  .78,  .49}4,14\% & \cellcolor[rgb]{ .988,  .729,  .478}10,88\% & \cellcolor[rgb]{ .992,  .792,  .49}16,91\% & \cellcolor[rgb]{ .992,  .816,  .494}28,23\% \\
		\midrule
		\textbf{CVaR1} & \cellcolor[rgb]{ .988,  .741,  .482}10,90\% & \cellcolor[rgb]{ .796,  .863,  .506}10,00\% & \cellcolor[rgb]{ .459,  .769,  .49}-4,60\% & \cellcolor[rgb]{ 1,  .922,  .518}5,02\% & \cellcolor[rgb]{ .984,  .686,  .471}10,38\% & \cellcolor[rgb]{ .988,  .749,  .482}16,40\% & \cellcolor[rgb]{ .992,  .796,  .49}27,86\% \\
		\midrule
		\textbf{Expe1} & \cellcolor[rgb]{ .988,  .702,  .475}10,45\% & \cellcolor[rgb]{ .875,  .882,  .51}10,39\% & \cellcolor[rgb]{ .996,  .898,  .51}-6,42\% & \cellcolor[rgb]{ .992,  .776,  .486}4,11\% & \cellcolor[rgb]{ .984,  .627,  .459}9,69\% & \cellcolor[rgb]{ .992,  .788,  .49}16,85\% & \cellcolor[rgb]{ .996,  .894,  .51}29,71\% \\
		\midrule
		\textbf{EW} & \cellcolor[rgb]{ .769,  .855,  .506}14,12\% & \cellcolor[rgb]{ .984,  .624,  .463}12,67\% & \cellcolor[rgb]{ .855,  .882,  .51}-5,86\% & \cellcolor[rgb]{ .667,  .827,  .502}6,39\% & \cellcolor[rgb]{ .737,  .847,  .506}14,19\% & \cellcolor[rgb]{ .867,  .886,  .514}19,96\% & \cellcolor[rgb]{ .533,  .788,  .494}38,41\% \\
		\midrule
		\textbf{RP} & \cellcolor[rgb]{ 1,  .922,  .518}12,97\% & \cellcolor[rgb]{ 1,  .906,  .518}11,11\% & \cellcolor[rgb]{ .608,  .812,  .498}-5,07\% & \cellcolor[rgb]{ .631,  .816,  .498}6,54\% & \cellcolor[rgb]{ 1,  .922,  .518}13,17\% & \cellcolor[rgb]{ 1,  .922,  .518}18,35\% & \cellcolor[rgb]{ .871,  .886,  .514}32,49\% \\
		\midrule
		\textbf{Index} & \cellcolor[rgb]{ .988,  .714,  .475}10,62\% & \cellcolor[rgb]{ .996,  .804,  .498}11,69\% & \cellcolor[rgb]{ .973,  .412,  .42}-8,48\% & \cellcolor[rgb]{ .984,  .671,  .467}3,44\% & \cellcolor[rgb]{ .984,  .635,  .463}9,77\% & \cellcolor[rgb]{ .996,  .875,  .506}17,84\% & \cellcolor[rgb]{ .992,  .922,  .518}30,38\% \\
		\midrule
		\textbf{DRvol0} & \cellcolor[rgb]{ .98,  .918,  .518}13,07\% & \cellcolor[rgb]{ .992,  .725,  .482}12,11\% & \cellcolor[rgb]{ .98,  .596,  .455}-7,69\% & \cellcolor[rgb]{ .996,  .886,  .51}4,82\% & \cellcolor[rgb]{ .62,  .812,  .498}14,64\% & \cellcolor[rgb]{ .882,  .89,  .514}19,78\% & \cellcolor[rgb]{ .965,  .914,  .518}30,87\% \\
		\midrule
		\textbf{DRMAD0} & \cellcolor[rgb]{ .957,  .91,  .518}13,19\% & \cellcolor[rgb]{ 1,  .922,  .518}11,02\% & \cellcolor[rgb]{ .804,  .867,  .51}-5,70\% & \cellcolor[rgb]{ .686,  .831,  .502}6,31\% & \cellcolor[rgb]{ .757,  .855,  .506}14,11\% & \cellcolor[rgb]{ .937,  .906,  .518}19,12\% & \cellcolor[rgb]{ 1,  .922,  .518}30,20\% \\
		\midrule
		\textbf{DRCVaR0} & \cellcolor[rgb]{ .737,  .847,  .506}14,29\% & \cellcolor[rgb]{ .984,  .576,  .451}12,95\% & \cellcolor[rgb]{ .996,  .867,  .506}-6,55\% & \cellcolor[rgb]{ .863,  .882,  .51}5,58\% & \cellcolor[rgb]{ .573,  .8,  .494}14,82\% & \cellcolor[rgb]{ .776,  .859,  .506}21,08\% & \cellcolor[rgb]{ .667,  .827,  .502}36,07\% \\
		\midrule
		\textbf{DRExpe0} & \cellcolor[rgb]{ .976,  .918,  .518}13,10\% & \cellcolor[rgb]{ .945,  .906,  .514}10,75\% & \cellcolor[rgb]{ .992,  .82,  .498}-6,75\% & \cellcolor[rgb]{ .443,  .761,  .486}7,31\% & \cellcolor[rgb]{ .871,  .886,  .514}13,68\% & \cellcolor[rgb]{ .867,  .882,  .51}19,99\% & \cellcolor[rgb]{ .996,  .91,  .514}30,00\% \\
		\midrule
		\textbf{DRvol1} & \cellcolor[rgb]{ .557,  .796,  .494}15,17\% & \cellcolor[rgb]{ .988,  .663,  .471}12,47\% & \cellcolor[rgb]{ .765,  .855,  .506}-5,57\% & \cellcolor[rgb]{ .388,  .745,  .482}7,52\% & \cellcolor[rgb]{ .388,  .745,  .482}15,52\% & \cellcolor[rgb]{ .58,  .804,  .494}23,40\% & \cellcolor[rgb]{ .663,  .824,  .498}36,20\% \\
		\midrule
		\textbf{DRMAD1} & \cellcolor[rgb]{ .533,  .788,  .494}15,28\% & \cellcolor[rgb]{ .988,  .667,  .471}12,43\% & \cellcolor[rgb]{ .82,  .871,  .51}-5,75\% & \cellcolor[rgb]{ .455,  .765,  .486}7,26\% & \cellcolor[rgb]{ .459,  .769,  .49}15,26\% & \cellcolor[rgb]{ .565,  .796,  .494}23,60\% & \cellcolor[rgb]{ .671,  .827,  .502}36,06\% \\
		\midrule
		\textbf{DRCVaR1} & \cellcolor[rgb]{ .388,  .745,  .482}16,00\% & \cellcolor[rgb]{ .973,  .412,  .42}13,84\% & \cellcolor[rgb]{ .388,  .745,  .482}-4,38\% & \cellcolor[rgb]{ .757,  .855,  .506}6,02\% & \cellcolor[rgb]{ .463,  .769,  .49}15,24\% & \cellcolor[rgb]{ .388,  .745,  .482}25,71\% & \cellcolor[rgb]{ .388,  .745,  .482}40,96\% \\
		\midrule
		\textbf{DRExpe1} & \cellcolor[rgb]{ .71,  .839,  .502}14,41\% & \cellcolor[rgb]{ .996,  .812,  .498}11,64\% & \cellcolor[rgb]{ .929,  .902,  .514}-6,10\% & \cellcolor[rgb]{ .475,  .773,  .49}7,17\% & \cellcolor[rgb]{ .588,  .804,  .494}14,76\% & \cellcolor[rgb]{ .706,  .839,  .502}21,92\% & \cellcolor[rgb]{ .796,  .863,  .506}33,85\% \\
		\bottomrule
	\end{tabular}}%
	\label{tab:ROI250_DJ}%
\end{table}%
%
\begin{table}[htbp!]
	\centering
	\caption{Annual ROI on the EuroStoxx50 dataset}
	\resizebox{0.7\textwidth}{!}{\begin{tabular}{|l|c|c|c|c|c|c|c|c|c|}
		\toprule
		\multicolumn{1}{|c|}{\textbf{Approach}} & \textbf{Mean} & \textbf{Vol} & \textbf{5\%-perc} & \textbf{25\%-perc} & \textbf{50\%-perc} & \textbf{75\%-perc} & \textbf{95\%-perc} \\
		\midrule
		\textbf{MV0} & \cellcolor[rgb]{ .996,  .906,  .514}13,40\% & \cellcolor[rgb]{ .729,  .843,  .502}12,57\% & \cellcolor[rgb]{ .525,  .784,  .49}-4,39\% & \cellcolor[rgb]{ .996,  .898,  .51}3,37\% & \cellcolor[rgb]{ .992,  .835,  .498}11,79\% & \cellcolor[rgb]{ .996,  .855,  .502}22,00\% & \cellcolor[rgb]{ 1,  .922,  .518}36,05\% \\
		\midrule
		\textbf{MAD0} & \cellcolor[rgb]{ .992,  .839,  .502}12,20\% & \cellcolor[rgb]{ .82,  .867,  .506}12,94\% & \cellcolor[rgb]{ .906,  .894,  .514}-8,63\% & \cellcolor[rgb]{ .996,  .878,  .506}2,96\% & \cellcolor[rgb]{ .992,  .835,  .498}11,83\% & \cellcolor[rgb]{ .992,  .824,  .498}21,49\% & \cellcolor[rgb]{ .988,  .741,  .482}33,23\% \\
		\midrule
		\textbf{CVaR0} & \cellcolor[rgb]{ .792,  .863,  .506}14,75\% & \cellcolor[rgb]{ .855,  .878,  .506}13,07\% & \cellcolor[rgb]{ .431,  .761,  .486}-3,34\% & \cellcolor[rgb]{ .918,  .898,  .514}4,35\% & \cellcolor[rgb]{ .996,  .894,  .51}13,00\% & \cellcolor[rgb]{ .569,  .8,  .494}25,18\% & \cellcolor[rgb]{ .906,  .894,  .514}37,81\% \\
		\midrule
		\textbf{Expe0} & \cellcolor[rgb]{ .996,  .855,  .502}12,45\% & \cellcolor[rgb]{ .388,  .745,  .482}11,20\% & \cellcolor[rgb]{ .51,  .78,  .49}-4,21\% & \cellcolor[rgb]{ 1,  .922,  .518}3,85\% & \cellcolor[rgb]{ .992,  .816,  .494}11,42\% & \cellcolor[rgb]{ .988,  .722,  .478}20,02\% & \cellcolor[rgb]{ .984,  .702,  .475}32,60\% \\
		\midrule
		\textbf{MV1} & \cellcolor[rgb]{ .992,  .831,  .498}12,01\% & \cellcolor[rgb]{ .694,  .831,  .498}12,43\% & \cellcolor[rgb]{ .996,  .882,  .51}-10,44\% & \cellcolor[rgb]{ .996,  .914,  .514}3,70\% & \cellcolor[rgb]{ .996,  .89,  .51}12,92\% & \cellcolor[rgb]{ .992,  .8,  .494}21,18\% & \cellcolor[rgb]{ .976,  .525,  .439}29,86\% \\
		\midrule
		\textbf{MAD1} & \cellcolor[rgb]{ .988,  .773,  .486}10,93\% & \cellcolor[rgb]{ .714,  .839,  .498}12,52\% & \cellcolor[rgb]{ .988,  .761,  .486}-12,86\% & \cellcolor[rgb]{ .996,  .878,  .51}3,02\% & \cellcolor[rgb]{ .992,  .839,  .502}11,94\% & \cellcolor[rgb]{ .988,  .722,  .478}20,05\% & \cellcolor[rgb]{ .973,  .478,  .431}29,09\% \\
		\midrule
		\textbf{CVaR1} & \cellcolor[rgb]{ .906,  .894,  .514}14,17\% & \cellcolor[rgb]{ .722,  .839,  .498}12,54\% & \cellcolor[rgb]{ .882,  .89,  .514}-8,34\% & \cellcolor[rgb]{ .522,  .784,  .49}6,63\% & \cellcolor[rgb]{ .816,  .871,  .51}14,67\% & \cellcolor[rgb]{ 1,  .922,  .518}22,92\% & \cellcolor[rgb]{ .988,  .761,  .486}33,51\% \\
		\midrule
		\textbf{Expe1} & \cellcolor[rgb]{ .992,  .78,  .49}11,09\% & \cellcolor[rgb]{ .443,  .761,  .482}11,43\% & \cellcolor[rgb]{ 1,  .922,  .518}-9,69\% & \cellcolor[rgb]{ .996,  .914,  .514}3,74\% & \cellcolor[rgb]{ .992,  .843,  .502}11,97\% & \cellcolor[rgb]{ .984,  .686,  .471}19,52\% & \cellcolor[rgb]{ .973,  .412,  .42}28,01\% \\
		\midrule
		\textbf{EW} & \cellcolor[rgb]{ .992,  .843,  .502}12,28\% & \cellcolor[rgb]{ .984,  .588,  .455}16,95\% & \cellcolor[rgb]{ .992,  .796,  .49}-12,17\% & \cellcolor[rgb]{ .984,  .682,  .471}-1,09\% & \cellcolor[rgb]{ 1,  .922,  .518}13,50\% & \cellcolor[rgb]{ .996,  .914,  .514}22,83\% & \cellcolor[rgb]{ .729,  .843,  .502}41,07\% \\
		\midrule
		\textbf{RP} & \cellcolor[rgb]{ .992,  .843,  .502}12,21\% & \cellcolor[rgb]{ .992,  .776,  .49}15,09\% & \cellcolor[rgb]{ .996,  .914,  .514}-9,84\% & \cellcolor[rgb]{ .988,  .729,  .478}-0,10\% & \cellcolor[rgb]{ 1,  .922,  .518}13,49\% & \cellcolor[rgb]{ .996,  .859,  .506}22,05\% & \cellcolor[rgb]{ .906,  .894,  .514}37,83\% \\
		\midrule
		\textbf{Index} & \cellcolor[rgb]{ .973,  .412,  .42}4,21\% & \cellcolor[rgb]{ .996,  .78,  .494}15,05\% & \cellcolor[rgb]{ .973,  .412,  .42}-19,97\% & \cellcolor[rgb]{ .973,  .412,  .42}-6,82\% & \cellcolor[rgb]{ .973,  .412,  .42}3,44\% & \cellcolor[rgb]{ .973,  .412,  .42}15,47\% & \cellcolor[rgb]{ .973,  .431,  .424}28,35\% \\
		\midrule
		\textbf{DRvol0} & \cellcolor[rgb]{ .388,  .745,  .482}16,84\% & \cellcolor[rgb]{ .992,  .749,  .486}15,37\% & \cellcolor[rgb]{ .388,  .745,  .482}-2,88\% & \cellcolor[rgb]{ .576,  .8,  .494}6,31\% & \cellcolor[rgb]{ .937,  .906,  .518}13,91\% & \cellcolor[rgb]{ .824,  .871,  .51}23,86\% & \cellcolor[rgb]{ .51,  .78,  .49}45,14\% \\
		\midrule
		\textbf{DRMAD0} & \cellcolor[rgb]{ .541,  .792,  .494}16,05\% & \cellcolor[rgb]{ .996,  .82,  .498}14,67\% & \cellcolor[rgb]{ .51,  .78,  .49}-4,19\% & \cellcolor[rgb]{ .878,  .886,  .514}4,57\% & \cellcolor[rgb]{ .812,  .867,  .51}14,71\% & \cellcolor[rgb]{ .624,  .816,  .498}24,90\% & \cellcolor[rgb]{ .631,  .816,  .498}42,88\% \\
		\midrule
		\textbf{DRCVaR0} & \cellcolor[rgb]{ .392,  .749,  .486}16,83\% & \cellcolor[rgb]{ .973,  .412,  .42}18,66\% & \cellcolor[rgb]{ .769,  .855,  .506}-7,10\% & \cellcolor[rgb]{ .824,  .871,  .51}4,87\% & \cellcolor[rgb]{ .996,  .898,  .51}13,07\% & \cellcolor[rgb]{ .6,  .808,  .498}25,01\% & \cellcolor[rgb]{ .388,  .745,  .482}47,33\% \\
		\midrule
		\textbf{DRExpe0} & \cellcolor[rgb]{ .522,  .784,  .49}16,16\% & \cellcolor[rgb]{ 1,  .922,  .518}13,65\% & \cellcolor[rgb]{ .522,  .784,  .49}-4,32\% & \cellcolor[rgb]{ .557,  .796,  .494}6,43\% & \cellcolor[rgb]{ .871,  .886,  .514}14,33\% & \cellcolor[rgb]{ .643,  .82,  .498}24,79\% & \cellcolor[rgb]{ .671,  .827,  .502}42,20\% \\
		\midrule
		\textbf{DRvol1} & \cellcolor[rgb]{ .749,  .851,  .506}14,97\% & \cellcolor[rgb]{ .996,  .835,  .502}14,51\% & \cellcolor[rgb]{ .996,  .91,  .514}-9,89\% & \cellcolor[rgb]{ .859,  .882,  .51}4,68\% & \cellcolor[rgb]{ .537,  .788,  .494}16,46\% & \cellcolor[rgb]{ .6,  .808,  .498}25,02\% & \cellcolor[rgb]{ .922,  .902,  .514}37,52\% \\
		\midrule
		\textbf{DRMAD1} & \cellcolor[rgb]{ 1,  .922,  .518}13,67\% & \cellcolor[rgb]{ .996,  .78,  .49}15,06\% & \cellcolor[rgb]{ .988,  .725,  .478}-13,59\% & \cellcolor[rgb]{ .996,  .878,  .506}3,00\% & \cellcolor[rgb]{ .631,  .816,  .498}15,84\% & \cellcolor[rgb]{ .596,  .808,  .498}25,05\% & \cellcolor[rgb]{ .996,  .878,  .51}35,42\% \\
		\midrule
		\textbf{DRCVaR1} & \cellcolor[rgb]{ .4,  .749,  .486}16,78\% & \cellcolor[rgb]{ .996,  .804,  .498}14,83\% & \cellcolor[rgb]{ .996,  .894,  .51}-10,20\% & \cellcolor[rgb]{ .388,  .745,  .482}7,39\% & \cellcolor[rgb]{ .388,  .745,  .482}17,39\% & \cellcolor[rgb]{ .388,  .745,  .482}26,11\% & \cellcolor[rgb]{ .682,  .831,  .502}41,93\% \\
		\midrule
		\textbf{DRExpe1} & \cellcolor[rgb]{ .973,  .914,  .518}13,81\% & \cellcolor[rgb]{ .957,  .91,  .514}13,49\% & \cellcolor[rgb]{ .996,  .859,  .506}-10,92\% & \cellcolor[rgb]{ .863,  .882,  .51}4,67\% & \cellcolor[rgb]{ .749,  .851,  .506}15,11\% & \cellcolor[rgb]{ .792,  .863,  .506}24,01\% & \cellcolor[rgb]{ .988,  .765,  .486}33,60\% \\
		\bottomrule
	\end{tabular}}%
	\label{tab:ROI250_EUXX50}%
\end{table}%
%
\begin{table}[htbp!]
	\centering
	\caption{Annual ROI on the FTSE100 dataset}
	\resizebox{0.7\textwidth}{!}{\begin{tabular}{|l|c|c|c|c|c|c|c|c|c|}
		\toprule
		\multicolumn{1}{|c|}{\textbf{Approach}} & \textbf{Mean} & \textbf{Vol} & \textbf{5\%-perc} & \textbf{25\%-perc} & \textbf{50\%-perc} & \textbf{75\%-perc} & \textbf{95\%-perc} \\
		\midrule
		\textbf{MV0} & \cellcolor[rgb]{ .992,  .804,  .494}12,03\% & \cellcolor[rgb]{ .718,  .839,  .498}13,95\% & \cellcolor[rgb]{ .851,  .878,  .51}-10,95\% & \cellcolor[rgb]{ .635,  .82,  .498}4,86\% & \cellcolor[rgb]{ .992,  .827,  .498}11,16\% & \cellcolor[rgb]{ .988,  .722,  .478}19,57\% & \cellcolor[rgb]{ .992,  .824,  .498}37,53\% \\
		\midrule
		\textbf{MAD0} & \cellcolor[rgb]{ .984,  .69,  .471}9,88\% & \cellcolor[rgb]{ .553,  .792,  .49}12,84\% & \cellcolor[rgb]{ .996,  .875,  .506}-12,07\% & \cellcolor[rgb]{ .98,  .918,  .518}3,39\% & \cellcolor[rgb]{ .988,  .706,  .475}9,18\% & \cellcolor[rgb]{ .984,  .627,  .459}17,21\% & \cellcolor[rgb]{ .988,  .718,  .478}33,86\% \\
		\midrule
		\textbf{CVaR0} & \cellcolor[rgb]{ .992,  .804,  .494}11,98\% & \cellcolor[rgb]{ .898,  .89,  .51}15,18\% & \cellcolor[rgb]{ .973,  .431,  .424}-15,11\% & \cellcolor[rgb]{ .996,  .878,  .506}2,81\% & \cellcolor[rgb]{ 1,  .922,  .518}12,66\% & \cellcolor[rgb]{ .992,  .808,  .494}21,77\% & \cellcolor[rgb]{ .992,  .835,  .498}37,87\% \\
		\midrule
		\textbf{Expe0} & \cellcolor[rgb]{ .996,  .863,  .506}13,08\% & \cellcolor[rgb]{ .808,  .867,  .506}14,57\% & \cellcolor[rgb]{ .98,  .573,  .447}-14,16\% & \cellcolor[rgb]{ .525,  .784,  .49}5,35\% & \cellcolor[rgb]{ .996,  .918,  .514}12,63\% & \cellcolor[rgb]{ .992,  .812,  .494}21,85\% & \cellcolor[rgb]{ .992,  .843,  .502}38,17\% \\
		\midrule
		\textbf{MV1} & \cellcolor[rgb]{ .8,  .867,  .51}15,42\% & \cellcolor[rgb]{ .89,  .89,  .51}15,13\% & \cellcolor[rgb]{ .925,  .902,  .514}-11,34\% & \cellcolor[rgb]{ .388,  .745,  .482}5,92\% & \cellcolor[rgb]{ .471,  .769,  .49}17,72\% & \cellcolor[rgb]{ .894,  .894,  .514}25,94\% & \cellcolor[rgb]{ .992,  .827,  .498}37,68\% \\
		\midrule
		\textbf{MAD1} & \cellcolor[rgb]{ .988,  .918,  .518}14,25\% & \cellcolor[rgb]{ .831,  .871,  .506}14,73\% & \cellcolor[rgb]{ .969,  .914,  .518}-11,58\% & \cellcolor[rgb]{ .941,  .906,  .518}3,56\% & \cellcolor[rgb]{ .6,  .808,  .498}16,48\% & \cellcolor[rgb]{ .89,  .89,  .514}25,98\% & \cellcolor[rgb]{ .988,  .753,  .482}35,11\% \\
		\midrule
		\textbf{CVaR1} & \cellcolor[rgb]{ .996,  .863,  .506}13,08\% & \cellcolor[rgb]{ 1,  .922,  .518}15,85\% & \cellcolor[rgb]{ .973,  .42,  .42}-15,20\% & \cellcolor[rgb]{ .992,  .796,  .49}1,82\% & \cellcolor[rgb]{ .643,  .82,  .498}16,09\% & \cellcolor[rgb]{ .996,  .918,  .514}24,56\% & \cellcolor[rgb]{ .992,  .788,  .49}36,23\% \\
		\midrule
		\textbf{Expe1} & \cellcolor[rgb]{ 1,  .922,  .518}14,17\% & \cellcolor[rgb]{ .996,  .812,  .498}16,55\% & \cellcolor[rgb]{ .984,  .675,  .467}-13,46\% & \cellcolor[rgb]{ .992,  .835,  .498}2,27\% & \cellcolor[rgb]{ .737,  .847,  .506}15,16\% & \cellcolor[rgb]{ .961,  .914,  .518}25,08\% & \cellcolor[rgb]{ .969,  .914,  .518}41,25\% \\
		\midrule
		\textbf{EW} & \cellcolor[rgb]{ .996,  .851,  .502}12,86\% & \cellcolor[rgb]{ 1,  .894,  .514}16,04\% & \cellcolor[rgb]{ .455,  .765,  .486}-8,86\% & \cellcolor[rgb]{ .988,  .714,  .475}0,83\% & \cellcolor[rgb]{ .996,  .867,  .506}11,79\% & \cellcolor[rgb]{ .992,  .831,  .498}22,32\% & \cellcolor[rgb]{ .906,  .894,  .514}42,05\% \\
		\midrule
		\textbf{RP} & \cellcolor[rgb]{ .992,  .812,  .494}12,17\% & \cellcolor[rgb]{ .78,  .859,  .502}14,38\% & \cellcolor[rgb]{ .388,  .745,  .482}-8,50\% & \cellcolor[rgb]{ .992,  .78,  .49}1,61\% & \cellcolor[rgb]{ .992,  .827,  .498}11,15\% & \cellcolor[rgb]{ .992,  .788,  .49}21,24\% & \cellcolor[rgb]{ .992,  .843,  .502}38,22\% \\
		\midrule
		\textbf{Index} & \cellcolor[rgb]{ .973,  .412,  .42}4,59\% & \cellcolor[rgb]{ .388,  .745,  .482}11,72\% & \cellcolor[rgb]{ .973,  .459,  .427}-14,94\% & \cellcolor[rgb]{ .973,  .412,  .42}-2,82\% & \cellcolor[rgb]{ .973,  .412,  .42}4,33\% & \cellcolor[rgb]{ .973,  .412,  .42}11,62\% & \cellcolor[rgb]{ .973,  .412,  .42}23,15\% \\
		\midrule
		\textbf{DRvol0} & \cellcolor[rgb]{ .835,  .875,  .51}15,20\% & \cellcolor[rgb]{ .973,  .412,  .42}19,08\% & \cellcolor[rgb]{ .922,  .902,  .514}-11,33\% & \cellcolor[rgb]{ .906,  .898,  .514}3,71\% & \cellcolor[rgb]{ .996,  .906,  .514}12,43\% & \cellcolor[rgb]{ 1,  .922,  .518}24,59\% & \cellcolor[rgb]{ .714,  .839,  .502}44,53\% \\
		\midrule
		\textbf{DRMAD0} & \cellcolor[rgb]{ .992,  .824,  .498}12,39\% & \cellcolor[rgb]{ .949,  .906,  .514}15,51\% & \cellcolor[rgb]{ 1,  .922,  .518}-11,76\% & \cellcolor[rgb]{ 1,  .922,  .518}3,30\% & \cellcolor[rgb]{ .992,  .784,  .49}10,48\% & \cellcolor[rgb]{ .992,  .788,  .49}21,26\% & \cellcolor[rgb]{ 1,  .922,  .518}40,82\% \\
		\midrule
		\textbf{DRCVaR0} & \cellcolor[rgb]{ .804,  .867,  .51}15,39\% & \cellcolor[rgb]{ .976,  .424,  .424}19,02\% & \cellcolor[rgb]{ .714,  .839,  .502}-10,23\% & \cellcolor[rgb]{ .988,  .765,  .486}1,45\% & \cellcolor[rgb]{ .831,  .875,  .51}14,30\% & \cellcolor[rgb]{ .937,  .906,  .518}25,38\% & \cellcolor[rgb]{ .596,  .808,  .498}46,02\% \\
		\midrule
		\textbf{DRExpe0} & \cellcolor[rgb]{ .859,  .882,  .51}15,05\% & \cellcolor[rgb]{ .996,  .808,  .498}16,58\% & \cellcolor[rgb]{ .4,  .749,  .486}-8,56\% & \cellcolor[rgb]{ .984,  .918,  .518}3,37\% & \cellcolor[rgb]{ .996,  .902,  .514}12,34\% & \cellcolor[rgb]{ .839,  .875,  .51}26,64\% & \cellcolor[rgb]{ .725,  .843,  .502}44,35\% \\
		\midrule
		\textbf{DRvol1} & \cellcolor[rgb]{ .388,  .745,  .482}17,96\% & \cellcolor[rgb]{ .976,  .463,  .431}18,77\% & \cellcolor[rgb]{ .98,  .561,  .447}-14,24\% & \cellcolor[rgb]{ .714,  .839,  .502}4,55\% & \cellcolor[rgb]{ .573,  .8,  .494}16,74\% & \cellcolor[rgb]{ .388,  .745,  .482}32,27\% & \cellcolor[rgb]{ .388,  .745,  .482}48,67\% \\
		\midrule
		\textbf{DRMAD1} & \cellcolor[rgb]{ .467,  .769,  .49}17,48\% & \cellcolor[rgb]{ 1,  .851,  .506}16,29\% & \cellcolor[rgb]{ .792,  .863,  .506}-10,63\% & \cellcolor[rgb]{ .647,  .824,  .498}4,82\% & \cellcolor[rgb]{ .388,  .745,  .482}18,49\% & \cellcolor[rgb]{ .533,  .788,  .494}30,45\% & \cellcolor[rgb]{ .984,  .918,  .518}41,04\% \\
		\midrule
		\textbf{DRCVaR1} & \cellcolor[rgb]{ .757,  .851,  .506}15,68\% & \cellcolor[rgb]{ .984,  .62,  .463}17,77\% & \cellcolor[rgb]{ .973,  .412,  .42}-15,27\% & \cellcolor[rgb]{ .992,  .835,  .498}2,30\% & \cellcolor[rgb]{ .584,  .804,  .494}16,62\% & \cellcolor[rgb]{ .651,  .824,  .498}28,99\% & \cellcolor[rgb]{ .886,  .89,  .514}42,30\% \\
		\midrule
		\textbf{DRExpe1} & \cellcolor[rgb]{ .627,  .816,  .498}16,49\% & \cellcolor[rgb]{ .98,  .549,  .447}18,22\% & \cellcolor[rgb]{ .973,  .431,  .42}-15,14\% & \cellcolor[rgb]{ .992,  .827,  .498}2,17\% & \cellcolor[rgb]{ .663,  .824,  .498}15,90\% & \cellcolor[rgb]{ .584,  .804,  .494}29,83\% & \cellcolor[rgb]{ .694,  .835,  .502}44,76\% \\
		\bottomrule
	\end{tabular}}%
	\label{tab:ROI250_FTSE100}%
\end{table}%
%
\begin{table}[htbp!]
	\centering
	\caption{Annual ROI on the NASDAQ100 dataset}
	\resizebox{0.7\textwidth}{!}{\begin{tabular}{|l|c|c|c|c|c|c|c|c|c|}
		\toprule
		\multicolumn{1}{|c|}{\textbf{Approach}} & \textbf{Mean} & \textbf{Vol} & \textbf{5\%-perc} & \textbf{25\%-perc} & \textbf{50\%-perc} & \textbf{75\%-perc} & \textbf{95\%-perc} \\
		\midrule
		\textbf{MV0} & \cellcolor[rgb]{ .973,  .431,  .42}12,90\% & \cellcolor[rgb]{ .388,  .745,  .482}8,07\% & \cellcolor[rgb]{ .388,  .745,  .482}-0,85\% & \cellcolor[rgb]{ .976,  .529,  .439}8,05\% & \cellcolor[rgb]{ .973,  .463,  .427}13,60\% & \cellcolor[rgb]{ .973,  .435,  .424}18,64\% & \cellcolor[rgb]{ .973,  .412,  .42}25,11\% \\
		\midrule
		\textbf{MAD0} & \cellcolor[rgb]{ .973,  .412,  .42}12,41\% & \cellcolor[rgb]{ .388,  .745,  .482}8,01\% & \cellcolor[rgb]{ .459,  .769,  .49}-1,45\% & \cellcolor[rgb]{ .973,  .412,  .42}7,29\% & \cellcolor[rgb]{ .973,  .412,  .42}12,62\% & \cellcolor[rgb]{ .973,  .412,  .42}17,94\% & \cellcolor[rgb]{ .973,  .412,  .42}25,20\% \\
		\midrule
		\textbf{CVaR0} & \cellcolor[rgb]{ .976,  .529,  .439}15,31\% & \cellcolor[rgb]{ .439,  .757,  .482}9,07\% & \cellcolor[rgb]{ .416,  .753,  .486}-1,08\% & \cellcolor[rgb]{ .992,  .816,  .494}9,90\% & \cellcolor[rgb]{ .984,  .631,  .459}16,81\% & \cellcolor[rgb]{ .976,  .541,  .443}21,61\% & \cellcolor[rgb]{ .973,  .467,  .427}28,68\% \\
		\midrule
		\textbf{Expe0} & \cellcolor[rgb]{ .976,  .486,  .431}14,26\% & \cellcolor[rgb]{ .424,  .753,  .482}8,80\% & \cellcolor[rgb]{ .396,  .749,  .486}-0,90\% & \cellcolor[rgb]{ .984,  .635,  .463}8,73\% & \cellcolor[rgb]{ .976,  .541,  .443}15,13\% & \cellcolor[rgb]{ .976,  .506,  .435}20,61\% & \cellcolor[rgb]{ .973,  .439,  .424}26,95\% \\
		\midrule
		\textbf{MV1} & \cellcolor[rgb]{ .851,  .878,  .51}27,58\% & \cellcolor[rgb]{ .98,  .514,  .439}27,26\% & \cellcolor[rgb]{ .976,  .498,  .435}-12,65\% & \cellcolor[rgb]{ .984,  .631,  .459}8,72\% & \cellcolor[rgb]{ .933,  .902,  .514}24,02\% & \cellcolor[rgb]{ .659,  .824,  .498}44,76\% & \cellcolor[rgb]{ .541,  .792,  .494}79,51\% \\
		\midrule
		\textbf{MAD1} & \cellcolor[rgb]{ .835,  .875,  .51}27,88\% & \cellcolor[rgb]{ .984,  .565,  .451}26,41\% & \cellcolor[rgb]{ .98,  .569,  .447}-11,56\% & \cellcolor[rgb]{ .988,  .761,  .486}9,55\% & \cellcolor[rgb]{ .98,  .918,  .518}22,86\% & \cellcolor[rgb]{ .675,  .827,  .502}44,22\% & \cellcolor[rgb]{ .522,  .784,  .49}80,37\% \\
		\midrule
		\textbf{CVaR1} & \cellcolor[rgb]{ .957,  .91,  .518}25,59\% & \cellcolor[rgb]{ .992,  .71,  .478}24,01\% & \cellcolor[rgb]{ .98,  .624,  .459}-10,69\% & \cellcolor[rgb]{ .976,  .541,  .443}8,14\% & \cellcolor[rgb]{ .992,  .922,  .518}22,52\% & \cellcolor[rgb]{ .792,  .863,  .506}39,93\% & \cellcolor[rgb]{ .745,  .851,  .506}70,00\% \\
		\midrule
		\textbf{Expe1} & \cellcolor[rgb]{ .992,  .922,  .518}24,92\% & \cellcolor[rgb]{ 1,  .902,  .514}20,89\% & \cellcolor[rgb]{ .988,  .753,  .482}-8,70\% & \cellcolor[rgb]{ .996,  .875,  .506}10,29\% & \cellcolor[rgb]{ .949,  .91,  .518}23,61\% & \cellcolor[rgb]{ .831,  .875,  .51}38,51\% & \cellcolor[rgb]{ .91,  .898,  .514}62,29\% \\
		\midrule
		\textbf{EW} & \cellcolor[rgb]{ .992,  .835,  .498}22,68\% & \cellcolor[rgb]{ .788,  .859,  .502}16,20\% & \cellcolor[rgb]{ 1,  .922,  .518}-6,09\% & \cellcolor[rgb]{ .78,  .859,  .506}12,24\% & \cellcolor[rgb]{ .961,  .914,  .518}23,27\% & \cellcolor[rgb]{ 1,  .922,  .518}32,25\% & \cellcolor[rgb]{ .992,  .804,  .494}50,62\% \\
		\midrule
		\textbf{RP} & \cellcolor[rgb]{ .988,  .761,  .486}20,89\% & \cellcolor[rgb]{ .675,  .827,  .498}13,87\% & \cellcolor[rgb]{ .824,  .871,  .51}-4,56\% & \cellcolor[rgb]{ .737,  .847,  .506}12,56\% & \cellcolor[rgb]{ .996,  .886,  .51}21,68\% & \cellcolor[rgb]{ .992,  .788,  .49}28,54\% & \cellcolor[rgb]{ .988,  .718,  .478}44,86\% \\
		\midrule
		\textbf{Index} & \cellcolor[rgb]{ .984,  .659,  .467}18,44\% & \cellcolor[rgb]{ .839,  .875,  .506}17,28\% & \cellcolor[rgb]{ .976,  .533,  .443}-12,10\% & \cellcolor[rgb]{ .976,  .541,  .443}8,14\% & \cellcolor[rgb]{ .988,  .765,  .486}19,34\% & \cellcolor[rgb]{ .992,  .78,  .49}28,29\% & \cellcolor[rgb]{ .988,  .753,  .482}47,22\% \\
		\midrule
		\textbf{DRvol0} & \cellcolor[rgb]{ 1,  .922,  .518}24,74\% & \cellcolor[rgb]{ 1,  .922,  .518}20,51\% & \cellcolor[rgb]{ .871,  .886,  .514}-4,97\% & \cellcolor[rgb]{ .812,  .867,  .51}12,02\% & \cellcolor[rgb]{ 1,  .922,  .518}22,29\% & \cellcolor[rgb]{ .953,  .91,  .518}33,98\% & \cellcolor[rgb]{ 1,  .922,  .518}58,01\% \\
		\midrule
		\textbf{DRMAD0} & \cellcolor[rgb]{ .992,  .776,  .49}21,31\% & \cellcolor[rgb]{ .733,  .843,  .502}15,14\% & \cellcolor[rgb]{ .459,  .765,  .486}-1,43\% & \cellcolor[rgb]{ .953,  .91,  .518}10,93\% & \cellcolor[rgb]{ .992,  .776,  .486}19,57\% & \cellcolor[rgb]{ .996,  .863,  .506}30,70\% & \cellcolor[rgb]{ .988,  .733,  .478}46,05\% \\
		\midrule
		\textbf{DRCVaR0} & \cellcolor[rgb]{ .996,  .922,  .518}24,82\% & \cellcolor[rgb]{ 1,  .918,  .518}20,63\% & \cellcolor[rgb]{ .761,  .855,  .506}-4,01\% & \cellcolor[rgb]{ .788,  .863,  .506}12,20\% & \cellcolor[rgb]{ .996,  .89,  .51}21,70\% & \cellcolor[rgb]{ .996,  .906,  .514}31,88\% & \cellcolor[rgb]{ .957,  .91,  .518}60,05\% \\
		\midrule
		\textbf{DRExpe0} & \cellcolor[rgb]{ .992,  .8,  .494}21,81\% & \cellcolor[rgb]{ .839,  .875,  .506}17,31\% & \cellcolor[rgb]{ .659,  .824,  .498}-3,15\% & \cellcolor[rgb]{ 1,  .922,  .518}10,57\% & \cellcolor[rgb]{ .992,  .812,  .494}20,24\% & \cellcolor[rgb]{ .996,  .875,  .506}30,96\% & \cellcolor[rgb]{ .992,  .827,  .498}52,16\% \\
		\midrule
		\textbf{DRvol1} & \cellcolor[rgb]{ .388,  .745,  .482}36,20\% & \cellcolor[rgb]{ .973,  .412,  .42}28,89\% & \cellcolor[rgb]{ .973,  .412,  .42}-14,01\% & \cellcolor[rgb]{ .388,  .745,  .482}15,18\% & \cellcolor[rgb]{ .388,  .745,  .482}37,32\% & \cellcolor[rgb]{ .388,  .745,  .482}54,65\% & \cellcolor[rgb]{ .388,  .745,  .482}86,52\% \\
		\midrule
		\textbf{DRMAD1} & \cellcolor[rgb]{ .655,  .824,  .498}31,28\% & \cellcolor[rgb]{ .984,  .576,  .455}26,20\% & \cellcolor[rgb]{ .976,  .533,  .443}-12,07\% & \cellcolor[rgb]{ .722,  .843,  .502}12,69\% & \cellcolor[rgb]{ .631,  .816,  .498}31,42\% & \cellcolor[rgb]{ .561,  .796,  .494}48,36\% & \cellcolor[rgb]{ .58,  .8,  .494}77,71\% \\
		\midrule
		\textbf{DRCVaR1} & \cellcolor[rgb]{ .694,  .835,  .502}30,50\% & \cellcolor[rgb]{ .98,  .549,  .447}26,64\% & \cellcolor[rgb]{ .976,  .549,  .443}-11,86\% & \cellcolor[rgb]{ .914,  .898,  .514}11,22\% & \cellcolor[rgb]{ .733,  .847,  .506}28,88\% & \cellcolor[rgb]{ .6,  .808,  .498}46,97\% & \cellcolor[rgb]{ .592,  .804,  .494}77,05\% \\
		\midrule
		\textbf{DRExpe1} & \cellcolor[rgb]{ .745,  .851,  .506}29,57\% & \cellcolor[rgb]{ .988,  .686,  .475}24,41\% & \cellcolor[rgb]{ .996,  .851,  .502}-7,18\% & \cellcolor[rgb]{ .855,  .882,  .51}11,67\% & \cellcolor[rgb]{ .725,  .843,  .502}29,07\% & \cellcolor[rgb]{ .694,  .835,  .502}43,59\% & \cellcolor[rgb]{ .655,  .824,  .498}74,10\% \\
		\bottomrule
	\end{tabular}}%
	\label{tab:ROI250_NASDAQ100}%
\end{table}%
%
\begin{table}[htbp!]
	\centering
	\caption{Annual ROI on the SP500 dataset}
	\resizebox{0.7\textwidth}{!}{\begin{tabular}{|l|c|c|c|c|c|c|c|c|c|}
		\toprule
		\multicolumn{1}{|c|}{\textbf{Approach}} & \textbf{Mean} & \textbf{Vol} & \textbf{5\%-perc} & \textbf{25\%-perc} & \textbf{50\%-perc} & \textbf{75\%-perc} & \textbf{95\%-perc} \\
		\midrule
		\textbf{MV0} & \cellcolor[rgb]{ .973,  .439,  .424}9,27\% & \cellcolor[rgb]{ .388,  .745,  .482}7,38\% & \cellcolor[rgb]{ .608,  .808,  .498}-3,94\% & \cellcolor[rgb]{ .98,  .624,  .459}5,64\% & \cellcolor[rgb]{ .976,  .514,  .439}10,15\% & \cellcolor[rgb]{ .973,  .412,  .42}13,75\% & \cellcolor[rgb]{ .973,  .412,  .42}18,78\% \\
		\midrule
		\textbf{MAD0} & \cellcolor[rgb]{ .973,  .412,  .42}8,72\% & \cellcolor[rgb]{ .42,  .753,  .482}7,79\% & \cellcolor[rgb]{ .537,  .788,  .494}-3,49\% & \cellcolor[rgb]{ .973,  .412,  .42}3,28\% & \cellcolor[rgb]{ .973,  .435,  .424}8,75\% & \cellcolor[rgb]{ .973,  .416,  .42}13,87\% & \cellcolor[rgb]{ .973,  .447,  .424}20,61\% \\
		\midrule
		\textbf{CVaR0} & \cellcolor[rgb]{ .973,  .439,  .424}9,24\% & \cellcolor[rgb]{ .424,  .753,  .482}7,83\% & \cellcolor[rgb]{ .388,  .745,  .482}-2,50\% & \cellcolor[rgb]{ .976,  .486,  .431}4,12\% & \cellcolor[rgb]{ .973,  .412,  .42}8,32\% & \cellcolor[rgb]{ .973,  .431,  .42}14,22\% & \cellcolor[rgb]{ .976,  .486,  .431}22,52\% \\
		\midrule
		\textbf{Expe0} & \cellcolor[rgb]{ .976,  .518,  .439}10,64\% & \cellcolor[rgb]{ .475,  .769,  .486}8,44\% & \cellcolor[rgb]{ 1,  .922,  .518}-6,56\% & \cellcolor[rgb]{ .988,  .722,  .478}6,76\% & \cellcolor[rgb]{ .98,  .565,  .447}11,03\% & \cellcolor[rgb]{ .973,  .467,  .427}15,06\% & \cellcolor[rgb]{ .976,  .514,  .439}23,84\% \\
		\midrule
		\textbf{MV1} & \cellcolor[rgb]{ .694,  .835,  .502}23,14\% & \cellcolor[rgb]{ .973,  .412,  .42}24,85\% & \cellcolor[rgb]{ .973,  .412,  .42}-15,06\% & \cellcolor[rgb]{ 1,  .922,  .518}8,94\% & \cellcolor[rgb]{ .769,  .855,  .506}21,25\% & \cellcolor[rgb]{ .729,  .847,  .506}32,59\% & \cellcolor[rgb]{ .455,  .765,  .486}68,30\% \\
		\midrule
		\textbf{MAD1} & \cellcolor[rgb]{ .773,  .859,  .506}21,79\% & \cellcolor[rgb]{ .98,  .529,  .443}22,51\% & \cellcolor[rgb]{ .976,  .49,  .431}-13,71\% & \cellcolor[rgb]{ .996,  .91,  .514}8,82\% & \cellcolor[rgb]{ .839,  .875,  .51}20,07\% & \cellcolor[rgb]{ .816,  .871,  .51}30,44\% & \cellcolor[rgb]{ .502,  .78,  .49}66,15\% \\
		\midrule
		\textbf{CVaR1} & \cellcolor[rgb]{ .773,  .859,  .506}21,81\% & \cellcolor[rgb]{ .98,  .49,  .435}23,36\% & \cellcolor[rgb]{ .98,  .612,  .455}-11,72\% & \cellcolor[rgb]{ .894,  .894,  .514}9,97\% & \cellcolor[rgb]{ .973,  .914,  .518}17,69\% & \cellcolor[rgb]{ .937,  .906,  .518}27,48\% & \cellcolor[rgb]{ .431,  .761,  .486}69,45\% \\
		\midrule
		\textbf{Expe1} & \cellcolor[rgb]{ .749,  .851,  .506}22,20\% & \cellcolor[rgb]{ .984,  .62,  .463}20,74\% & \cellcolor[rgb]{ .984,  .647,  .463}-11,09\% & \cellcolor[rgb]{ .878,  .886,  .514}10,11\% & \cellcolor[rgb]{ .808,  .867,  .51}20,60\% & \cellcolor[rgb]{ .82,  .871,  .51}30,38\% & \cellcolor[rgb]{ .561,  .796,  .494}63,56\% \\
		\midrule
		\textbf{EW} & \cellcolor[rgb]{ .996,  .878,  .506}17,10\% & \cellcolor[rgb]{ 1,  .851,  .506}16,08\% & \cellcolor[rgb]{ .91,  .898,  .514}-5,95\% & \cellcolor[rgb]{ .988,  .702,  .475}6,54\% & \cellcolor[rgb]{ .996,  .867,  .506}16,25\% & \cellcolor[rgb]{ .996,  .863,  .506}24,48\% & \cellcolor[rgb]{ .89,  .89,  .514}48,56\% \\
		\midrule
		\textbf{RP} & \cellcolor[rgb]{ .992,  .792,  .49}15,55\% & \cellcolor[rgb]{ .894,  .89,  .51}13,40\% & \cellcolor[rgb]{ .812,  .867,  .51}-5,29\% & \cellcolor[rgb]{ .992,  .776,  .49}7,36\% & \cellcolor[rgb]{ .992,  .792,  .49}14,94\% & \cellcolor[rgb]{ .988,  .753,  .482}21,86\% & \cellcolor[rgb]{ .996,  .867,  .506}40,94\% \\
		\midrule
		\textbf{Index} & \cellcolor[rgb]{ .98,  .588,  .451}11,94\% & \cellcolor[rgb]{ .827,  .871,  .506}12,61\% & \cellcolor[rgb]{ .988,  .718,  .478}-9,95\% & \cellcolor[rgb]{ .973,  .467,  .427}3,92\% & \cellcolor[rgb]{ .984,  .639,  .463}12,28\% & \cellcolor[rgb]{ .98,  .612,  .455}18,58\% & \cellcolor[rgb]{ .988,  .722,  .478}33,82\% \\
		\midrule
		\textbf{DRvol0} & \cellcolor[rgb]{ .875,  .886,  .514}20,05\% & \cellcolor[rgb]{ 1,  .922,  .518}14,66\% & \cellcolor[rgb]{ .8,  .867,  .51}-5,22\% & \cellcolor[rgb]{ .769,  .855,  .506}11,17\% & \cellcolor[rgb]{ .867,  .882,  .51}19,58\% & \cellcolor[rgb]{ .867,  .882,  .51}29,26\% & \cellcolor[rgb]{ 1,  .922,  .518}43,42\% \\
		\midrule
		\textbf{DRMAD0} & \cellcolor[rgb]{ .996,  .859,  .506}16,80\% & \cellcolor[rgb]{ .816,  .867,  .506}12,51\% & \cellcolor[rgb]{ .929,  .902,  .514}-6,09\% & \cellcolor[rgb]{ 1,  .922,  .518}8,98\% & \cellcolor[rgb]{ 1,  .922,  .518}17,18\% & \cellcolor[rgb]{ .996,  .878,  .506}24,83\% & \cellcolor[rgb]{ .992,  .784,  .49}36,95\% \\
		\midrule
		\textbf{DRCVaR0} & \cellcolor[rgb]{ .996,  .898,  .514}17,49\% & \cellcolor[rgb]{ .969,  .91,  .514}14,32\% & \cellcolor[rgb]{ .647,  .82,  .498}-4,21\% & \cellcolor[rgb]{ .992,  .8,  .494}7,62\% & \cellcolor[rgb]{ .992,  .827,  .498}15,59\% & \cellcolor[rgb]{ 1,  .922,  .518}25,85\% & \cellcolor[rgb]{ .996,  .918,  .514}43,40\% \\
		\midrule
		\textbf{DRExpe0} & \cellcolor[rgb]{ 1,  .922,  .518}17,85\% & \cellcolor[rgb]{ .847,  .875,  .506}12,86\% & \cellcolor[rgb]{ .616,  .812,  .498}-4,00\% & \cellcolor[rgb]{ .859,  .882,  .51}10,30\% & \cellcolor[rgb]{ .996,  .878,  .51}16,47\% & \cellcolor[rgb]{ .996,  .867,  .506}24,57\% & \cellcolor[rgb]{ .996,  .894,  .51}42,14\% \\
		\midrule
		\textbf{DRvol1} & \cellcolor[rgb]{ .388,  .745,  .482}28,40\% & \cellcolor[rgb]{ .98,  .557,  .447}22,02\% & \cellcolor[rgb]{ .988,  .733,  .478}-9,67\% & \cellcolor[rgb]{ .388,  .745,  .482}14,79\% & \cellcolor[rgb]{ .388,  .745,  .482}27,93\% & \cellcolor[rgb]{ .388,  .745,  .482}41,04\% & \cellcolor[rgb]{ .388,  .745,  .482}71,32\% \\
		\midrule
		\textbf{DRMAD1} & \cellcolor[rgb]{ .522,  .784,  .49}26,15\% & \cellcolor[rgb]{ .984,  .608,  .459}20,96\% & \cellcolor[rgb]{ .98,  .612,  .455}-11,69\% & \cellcolor[rgb]{ .553,  .796,  .494}13,22\% & \cellcolor[rgb]{ .416,  .753,  .486}27,48\% & \cellcolor[rgb]{ .467,  .769,  .49}39,13\% & \cellcolor[rgb]{ .529,  .788,  .494}65,03\% \\
		\midrule
		\textbf{DRCVaR1} & \cellcolor[rgb]{ .659,  .824,  .498}23,76\% & \cellcolor[rgb]{ .988,  .643,  .467}20,25\% & \cellcolor[rgb]{ .988,  .741,  .482}-9,52\% & \cellcolor[rgb]{ .82,  .871,  .51}10,70\% & \cellcolor[rgb]{ .651,  .824,  .498}23,32\% & \cellcolor[rgb]{ .663,  .827,  .502}34,27\% & \cellcolor[rgb]{ .592,  .804,  .494}62,16\% \\
		\midrule
		\textbf{DRExpe1} & \cellcolor[rgb]{ .635,  .82,  .498}24,16\% & \cellcolor[rgb]{ .988,  .647,  .467}20,21\% & \cellcolor[rgb]{ .992,  .808,  .494}-8,44\% & \cellcolor[rgb]{ .898,  .894,  .514}9,93\% & \cellcolor[rgb]{ .608,  .808,  .498}24,12\% & \cellcolor[rgb]{ .588,  .804,  .494}36,10\% & \cellcolor[rgb]{ .643,  .82,  .498}59,85\% \\
		\bottomrule
	\end{tabular}}%
	\label{tab:ROI250_SP500}%
\end{table}%
%


\end{document}